\mathchardef\mhyphen="2D 
\title{On the complexity of finding set repairs for data-graphs} 
\author[1,2]{Sergio Abriola}
\author[1]{Santiago Cifuentes}
\author[1,2]{María Vanina Martínez}
\author[1,2]{Nina Pardal}
\author[1]{Edwin Pin}
\affil[1]{University of Buenos Aires, Argentina}
\affil[2]{ICC-CONICET, Argentina}
\begin{document}

\maketitle

\begin{abstract}
In the deeply interconnected world we live in, pieces of information link domains all around us. As graph databases embrace effectively relationships among data and allow processing and querying these connections efficiently, they are rapidly becoming a popular platform for storage that supports a wide range of domains and applications. As in the relational case, it is expected that data preserves a set of integrity constraints that define the semantic structure of the world it represents. When a database does not satisfy its integrity constraints, a possible approach is to search for a `similar' database that does satisfy the constraints, also known as a repair. In this work, we study the problem of computing subset and superset repairs for graph databases with data values using a notion of consistency based on a set of Reg-GXPath expressions as integrity constraints. We show that for positive fragments of Reg-GXPath these problems admit a polynomial-time algorithm, while the full expressive power of the language renders them intractable.
\end{abstract}

\section{Introduction}

The availability of high volumes of interconnected data allows the possibility of developing applications that go well beyond semantic indexing and search and involve advance reasoning tasks on top of existing data. 
Alternative data models such as graph databases are becoming popular as they allow to effectively represent and access this type of data.
Graph databases are specially useful for applications where the topology of the data is as important as the data itself, such as social networks analysis \cite{fan2012graph}, data provenance \cite{anand2010techniques}, and the Semantic Web \cite{arenas2011querying}. 
The structure of the database is commonly queried through navigational languages such as \textit{regular path queries} or RPQs \cite{barcelo2013querying} that can capture pairs of nodes connected by some specific kind of path. These query languages can be extended to add more expressiveness, while usually adding extra complexity in the evaluation as well. For example, C2RPQs  are a natural extension of RPQs defined by adding to the language the capability of traversing edges backwards and closing the expressions under conjunction (similar to relational CQs). 

RPQs and its most common extensions (C2RPQs and NREs \cite{barcelo2012relative}) can only act upon the edges of the graph, leaving behind any possible interaction with data values in the nodes. This led to the design of query languages for the case of data-graphs (i.e. graph databases where data lies both in the paths and in the nodes themselves), such as REMs and \Gregxpath \cite{libkin2016querying}.  

Advanced computational reasoning tasks usually require the management of inconsistent information. Reasoning with or in the presence of inconsistent knowledge bases has been the focus of a vast amount of research in Artificial Intelligence and theory of relational databases for over 30 years. However, in the last years, the area has flourished focusing
on logical knowledge bases and ontologies~\cite{Lembo2010,Bienvenu2019,LukasiewiczMMMP22}

As in the relational case, consistency is related to the notion of \textit{integrity constraints} that express some of the semantic structure the data intends to represent.
In the context of graph databases, these constraints can be expressed in graph databases through \textit{path constraints}~\cite{abiteboul1999regular,buneman2000path}. When a database does not satisfy its integrity constraints, a possible approach is to compute a `similar' database that does satisfy the constraints. In the literature, such a database is called a \emph{repair}~\cite{arenas1999consistent}, and in order to define it properly one has to precisely determine the meaning of `similar'. 

Consider for example the data-graph in Figure~\ref{figure:familia}: here the nodes represent people, and the edges model different kinds of relationships between them, such as brothership or parentship. Observe that in this data-graph, every pair of nodes $(x,y)$ connected with a \esLabel{sibling\_of} edge directed from $x$ to $y$ is also connected with a \esLabel{sibling\_of} edge directed from $y$ to $x$. This property seems reasonable, since the brothership relation is symmetric. In addition, the nodes $(\esDato{Mauro}, \esDato{Julieta})$ are connected through a \esLabel{nibling\_of} edge directed to \esDato{Julieta}, which it is also a property we would expect to find considering that \esDato{Mauro}'s dad, \esDato{Diego}, is the brother of \esDato{Julieta}.

\label{figure:familia}
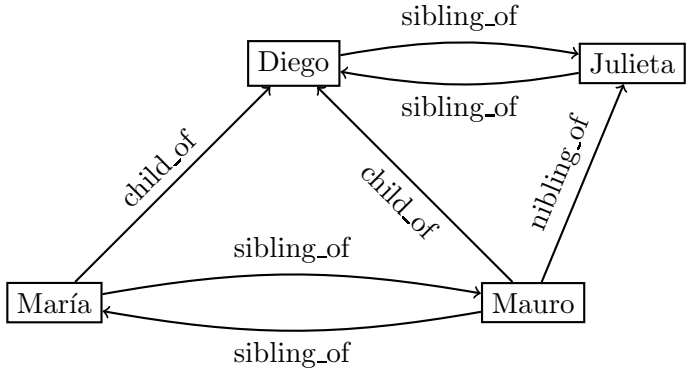
\begin{figure} [H]
\centering
    \begin{tikzpicture}[node distance={45mm}, thick, main/.style = {draw, rectangle}] 
\node[main] (1) {Diego}; 
\node[main] (2) [below right of=1] {Mauro}; 
\node[main] (3) [below left of=1] {María}; 
\node[main] (4) [right of=1] {Julieta}; 
\draw[->] (2) -- (1) node[midway, below=0.5pt, sloped]{child\_of}; 
\draw[->] (3) -- (1) node[midway, above=0.5pt, sloped]{child\_of}; 
\draw[->] (2) -- (4) node[midway, above=0.5pt, sloped, pos=0.5]{nibling\_of}; 
\draw[->, bend left=10] (1) edge node [above=0.3pt]{sibling\_of} (4);
\draw[->] (4) to [out=190, in=350] node [below=0.3pt]{sibling\_of} (1); 

\draw[->, bend left=10] (3) edge node [above=0.3pt]{sibling\_of} (2);
\draw[->] (2) to [out=190, in=350] node [below=0.3pt]{sibling\_of} (3); 
\end{tikzpicture} 
\caption{An example of a data-graph where the nodes represent people and the edges family relationships.}
\end{figure}

The structure that this particular data-graph has to preserve in order to properly capture our notions of \esLabel{sibling\_of} and \esLabel{nibling\_of} can be stated through path expressions that capture those pairs of nodes that satisfy the semantic constraints. In particular, the pair (\esDato{María}, \esDato{Julieta}) does not satisfy the semantics of the \esLabel{nibling\_of} relation, since \esDato{Diego} is also \esDato{María}'s parent but there is no edge \esLabel{nibling\_of} from \esDato{María} to \esDato{Julieta}. This could be `fixed' by adding the \esLabel{nibling\_of} edge from \esDato{Maria} to \esDato{Julieta}, or by rather deleting the edge \esLabel{child\_of} directed from \esDato{María} to \esDato{Diego}. Naturally, we want to preserve as much data as possible from our original data-graph.

There are different notions of repairs in the literature, among others, set-based repairs~\cite{tenCate:2012}, attribute-based repairs~\cite{Wijsen:2003}, and cardinality based repairs~\cite{Lopatenko07complexityof}. 
In this work, we study two restrictions of the problem of finding a set-based repair $G'$ of a graph database $G$ under a set of \Gregxpath expressions $R$ considered as \textit{path constraints}: when $G'$ is a subgraph of $G$ and when $G'$ is a super-graph of $G$. These kinds of repairs are usually called \textit{subset} and \textit{superset} repairs, respectively~\cite{tenCate:2012,barcelo2017data}. Since repairs may not be unique, it is possible to impose an order over the set of repairs and look for an `optimum' repair over such order~\cite{flesca2007preferred,staworko2012prioritized}.

The specific contributions of this work are the following:
\begin{itemize}
    \item We define a model for graph databases with data values and introduce a notion of consistency over this model, based on a set of \Gregxpath expressions that capture a significant group of common integrity constraints that appear in the literature.
    \item Given a graph database $\aGraph$ and a set of constraints $\aRestrictionSet$, we study the problem of computing a subset repair (respectively, a superset repair) $\aGraph'$ of $\aGraph$. 
    \item We show that depending on the expressiveness of the restrictions (whether they use the full power of \Gregxpath or only the fragment without negation known as \Gposregxpath), these problems admit a polynomial-time algorithm or rather turn out to be intractable, even undecidable.
    
\end{itemize}

The rest of this work is organized as follows. In Section~\ref{Section:Definitions} we introduce the necessary preliminaries and notation for the syntax and semantics for our data-graph model as well as the definitions of consistency and different types of repairs.
We also show, by means of examples, that the proposed language
can capture a group of integrity constraints that are common in the database literature. In Section~\ref{Section:Repairs} we study the complexity of the repair computing problem
for both subset and superset. 
Finally, in Section~\ref{Section:RelatedWork} and Section~\ref{Section:Conclusions} we discuss related work and conclusions, respectively.

\section{Definitions}

Fix a finite set of edge labels $\Sigma_e$ and a countable (either finite or infinite enumerable) set of data values $\Sigma_n$, sometimes referred to as data labels, both of which we assume non-empty, and such that $\Sigma_e \cap \Sigma_n = \emptyset$. A \emph{data-graph}~$G$ is a tuple $(V,L_e,\dataFunction)$ where $V$ is a set of nodes, $L_e$ is a mapping from $V \times V$ to $\parts{\Sigma_e}$ defining the edges of the graph, and $\dataFunction$ is a function mapping the nodes from $V$ to data values from $\Sigma_n$. 

\textit{Path expressions} of \Gregxpath are given by:

\begin{center}
    $\aPath, \aPathb$ = $\epsilon$ $|$ $\labelComodin$ $|$ $\aLabel$ $|$ $\aLabel^{-}$ $|$ $\expNodoEnCamino{\aFormula}$ $|$ $\aPath$ . $\aPathb$ $|$ $\aPath \pathUnion \aPathb$ $|$ $\aPath \pathIntersection \aPathb$ $|$ $\aPath^{*}$ $|$ $\pathComplement{\aPath}$ $|$
    $\aPath^{n,m}$ 
\end{center} 

where $\aLabel$ iterates over every label of from $\Sigma_e$ and $\aFormula$ is a \textit{node expression} defined by the following grammar:

\begin{center}
    $\aFormula, \aFormulab$ = $\neg \aFormula \mid \aFormula \wedge \aFormulab$ $|$ $\comparacionCaminos{\aPath}$ $|$ $\esDatoIgual{\aData}$ $|$ $\esDatoDistinto{\aData}$ $|$ $\comparacionCaminos{\aPath = \aPathb}$ $|$ $\comparacionCaminos{\aPath \neq \aPathb}$ $|$
    $\aFormula \vee \aFormulab$
\end{center}

where $\aPath$ and $\aPathb$ are path expressions (i.e. path and node expressions are defined by mutual recursion) and $c$ iterates over $\Sigma_n$. If we only allow the Kleene star to be applied to labels and their inverses (the production $\aLabel^-$), then we obtain a subset of \Gregxpath called \Gcorexpath. The semantics of these languages are defined  in \cite{libkin2016querying} in a similar fashion as the usual regular languages for navigating graphs \cite{barcelo2013querying}, while adding some extra capabilities such as the complement of a path expression $\pathComplement{\aPath}$ and data tests. The $\comparacionCaminos{\aPath}$ operator is the usual one for \textit{nested regular expressions} (NREs) used in \cite{barcelo2012relative}. Given a data-graph $\aGraph=(V,L,D)$, the semantics of \Gregxpath expressions are:

\begin{itemize}[leftmargin=.6in,label={}]

\itemsep0em 

  \item $\semantics{\epsilon}_\aGraph = \{(v,v)$ $|$ $v \in V\}$
  
  \item $\semantics{\labelComodin}_\aGraph = \{(v,w)$ $|$ $v,w \in V, L(v,w) \neq \emptyset\}$ 
  
  \item $\semantics{\aLabel}_\aGraph = \{ (v,w)$ $|$ $\aLabel \in L(v,w)\}$
  
  \item $\semantics{\aLabel^-}_\aGraph = \{ (w,v)$ $|$ $\aLabel \in L(v,w)\}$
  
  \item $\semantics{\aPath^*}_\aGraph = $ the reflexive transitive closure of $\semantics{\aPath}_\aGraph$
  
  \item $\semantics{\aPath . \aPathb}_\aGraph = \semantics{\aPath}_\aGraph  \semantics{\aPathb}_\aGraph$
  
  \item $\semantics{\aPath \pathUnion \aPathb}_\aGraph = \semantics{\aPath}_\aGraph \cup \semantics{\aPathb}_\aGraph$
  
   \item $\semantics{\aPath \pathIntersection \aPathb}_\aGraph = \semantics{\aPath}_\aGraph \cap \semantics{\aPathb}_\aGraph$
    
  \item $\semantics{\pathComplement{\aPath}}_\aGraph = V \times V \setminus \semantics{\aPath}_\aGraph$
  
  \item $\semantics{\expNodoEnCamino{\aFormula}}_\aGraph = \{(v,v)$ $|$ $v \in \semantics{\aFormula}_\aGraph$\}
  
  \item $\semantics{\aPath^{n,m}}_\aGraph = \bigcup\limits_{k=n}^m(\semantics{\aPath}_\aGraph)^k$
  
  \item $\semantics{\comparacionCaminos{\aPath}}_\aGraph = \pi_1(\semantics{\aPath}_\aGraph) = \{v \mid \exists w\in V,\, (v,w) \in \semantics{\aPath}_\aGraph \} $ 
  
  \item $\semantics{\lnot \aFormula}_\aGraph = V \setminus \semantics{\aFormula}_\aGraph$
  
  \item $\semantics{\aFormula \wedge \aFormulab}_\aGraph = \semantics{\aFormula}_\aGraph \cap \semantics{\aFormulab}_\aGraph$
  
  \item $\semantics{\aFormula \vee \aFormulab}_\aGraph = \semantics{\aFormula}_\aGraph \cup \semantics{\aFormulab}_\aGraph$
  
  \item $\semantics{\esDatoIgual{c}}_\aGraph = \{v \in V$ $|$ $D(v) = c$\}
  
  \item $\semantics{\esDatoDistinto{c}}_\aGraph = \{v \in V$ $|$ $D(v) \neq c$\}
  
  \item $\semantics{\comparacionCaminos{\aPath = \aPathb}}_\aGraph = \{v \in V$ $|$ $\exists v', v'' \in V$, $(v,v') \in \semantics{\aPath}_\aGraph$, $(v,v'') \in \semantics{\aPathb}_\aGraph$, $D(v') = D(v'')$\} 
  
  \item $\semantics{\comparacionCaminos{\aPath \neq \aPathb}}_\aGraph = \{v \in V$ $|$ $\exists v', v'' \in V$, $(v,v') \in \semantics{\aPath}_\aGraph$, $(v,v'') \in \semantics{\aPathb}_\aGraph$, $D(v') \neq D(v'')$\} 
  
\end{itemize}

Notice that we are using the standard composition of relations in the definition of $\semantics{\aPath . \aPathb}_\aGraph$ and $\semantics{\aPath^{n,m}}_\aGraph$. More precisely, given two binary relations $R_1,R_2$ over $V_\aGraph$, we define $R_1 R_2$ as $\{(x,z) : \exists y\in  (x, y) \in R_1, (y,z) \in R_2\}$, and $R^{k+1} \equiv R^k R$.

We denote with $\aPath \entoncesCamino \aPathb$ the path expression $\aPathb \pathUnion \pathComplement{\aPath}$, and with $\aNodeExpression \entoncesNodo \aNodeExpressionb$ the node expression $\aNodeExpressionb \lor \neg \aNodeExpression$. We also note a label $\aLabel$ as $\down_\esLabel{\aLabel}$ in order to easily distinguish the `path' fragment from the expressions. For example, the expression \esLabel{child\_of} [\esDatoIgual{Maria}] \esLabel{sister\_of} will be noted as $\down_\esLabel{child\_of} [\esDatoIgual{Maria}] \down_\esLabel{sister\_of}$.

Naturally, the expression $\aPath \cap \aPathb$ can be rewritten as $\overline{\overline{\aPath} \cup \overline{\aPathb}}$ while preserving the semantics, and something similar happens with the operators $\wedge$ and $\vee$ for the case of node expressions using the $\lnot$ operator. However, we still include the definitions of all these operators for this grammar, since we will be interested in the sequel in a fragment of $\Gregxpath$ called $\Gposregxpath$. This fragment shares the same grammar with the exception of the $\overline{\aPath}$ and $\lnot \aFormula$ productions. Thus, we will not be able to `simulate' the $\cap$ operator in \Gposregxpath unless it is present in the original \Gregxpath grammar.

\begin{example}
 Using \Gcorexpath, we could write an expression capturing every `transitive friend` in a social network using  $\aPath = \down_\esLabel{friend\_of}^+$. This is a common example of regular expressions, and it does not require most of \Gcorexpath capacities. We can enhance our example by asking for an expression that captures those pairs of nodes who are `transitive friends` and also follow some other specific node $x$.  
\begin{equation*}
    \aPath = [\comparacionCaminos{\down_\esLabel{follows}[x^=]}]\down_\esLabel{friend\_of}^+ [\comparacionCaminos{\down_\esLabel{follows}[x^=]}]
\end{equation*}
\end{example}

In this case, we are already using the data tests and the \textit{nest} operator. 

Another interesting difference between most regular languages and \Gcorexpath is that we can easily obtain the complement of a path expression. Since regular languages are closed under complement, most navigation languages can express the complement of any expression, but this is not generally built into the grammar.

\begin{example}
 Given a social database where we have family links, we can capture those nodes that do not have any ancestor node with the same name. We do so by using the following node expression:

\begin{center}
    
$\varphi = \lnot \comparacionCaminos{\epsilon = \down_\esLabel{father\_of}^+}$

\end{center}

Note that this cannot  be expressed by most navigational languages, since they do not have a way of comparing data values.
\end{example}

\paragraph{Consistency}
Given a specific database, we may want a node or path expression to capture all the nodes from the data-graph, since it could represent some structure we expect to find in our data. This kind of \Gcorexpath or \Gregxpath expression would work as an \textit{integrity constraint} defining semantic relations among our data. A data-graph in which an expression $\aPath$, used as a constraint, does not capture all nodes, may be called \textit{inconsistent} with respect to $\aPath$. In general, we define the notion of consistency in the following way:

\begin{definition}[Consistency]
Let $\aGraph$ be a data-graph and $\aRestrictionSet = \aRestrictionSetPaths \union \aRestrictionSetNodes$ a finite set of restrictions, where $\aRestrictionSetPaths$ and $\aRestrictionSetNodes$ consist of path and node expressions, respectively. 
We say that $(\aGraph, \aRestrictionSet)$ is \defstyle{consistent} with respect to\ $\aRestrictionSet$, noted as $\aGraph \models \aRestrictionSet$, if the following conditions hold: 
\begin{itemize}

\item $\forall \aNode \in V_\aGraph$ and $\aNodeExpression \in \aRestrictionSetNodes$, we have that $\aNode \in \semantics{\aNodeExpression}$ 
\item $\forall \aNode,\aNodeb \in V_\aGraph$ and $\aPath \in \aRestrictionSetPaths$, we have that $(\aNode,\aNodeb) \in \semantics{\aPath}$
\end {itemize}
Otherwise we say that $\aGraph$ is inconsistent w.r.t.\ $\aRestrictionSet$. 
\end{definition}

In the remainder of this work we will simply say that $\aGraph$ is (in)consistent, whenever $\aRestrictionSet$ is clear from the context. 

\begin{example}

 Consider a film database (see figure \ref{fig:film}), where some nodes represent people from the film industry such as actors or directors, and others represent movies or documentaries. If we want to make a cut from that graph that preserves only actors who have worked with Philip Seymour Hoffman in a film from Paul Thomas Anderson, then we want the following formula to be satisfied:

\begin{center}
    $ \aNodeExpression = \comparacionCaminos{\down_\esLabel{type}[actor^=]}   \entoncesNodo \comparacionCaminos{\down_\esLabel{acts\_in}\comparacionCaminos{\down_\esLabel{directed\_by} [\textit{Anderson}^=]} \down_\esLabel{acts\_in}^- [\textit{Hoffman}^=]}
$
\end{center}

\begin{figure}[H]

\centering

\begin{tikzpicture}[node distance={25mm}, thick, main/.style = {draw, rectangle}]

\node[main] (Hoffman) {Hoffman};
\node[main] (Actor) [left of=Hoffman] {Actor};
\node[main] (Phoenix) [below left of=Actor] {Joaquin Phoenix};
\node[main] (Moore) [below right of=Hoffman] {Julianne Moore};
\node[main] (Magnolia) [above right of=Moore] {Magnolia};
\node[main] (The Master) [above of=Hoffman] {The Master};
\node[main] (film) [right of=Magnolia] {Film};
\node[main] (Anderson) [above of=film] {Anderson};

\draw[->] (Hoffman) -- (Actor) node[midway, above=0.2pt, sloped]{type};

\draw[->] (Phoenix) -- (Actor) node[midway, above=0.1pt, sloped] {type};

\draw[->] (Moore) -- (Actor) node[midway, above=0.2pt, sloped] {type};

\draw[->, bend left=50] (Phoenix) edge  node[midway, above=0.1pt, sloped] {acts\_in} (The Master) ;

\draw[->] (Hoffman) -- (The Master) node[midway, left=2pt]{acts\_in};

\draw[->] (The Master) -- (Anderson) node[midway, above=0.1pt] {directed\_by};

\draw[->] (Magnolia) -- (Anderson) node[midway, above=0.1pt, sloped] {directed\_by};

\draw[->] (Moore) -- (Magnolia) node[midway, above=0.1pt, sloped] {acts\_in};

\draw[->] (Magnolia) -- (film) node[midway, below=0.1pt] {type};

\draw[->, bend right=10] (The Master) edge node[midway, above=0.1pt, sloped] {type} (film) ;

\draw[->] (Hoffman) -- (Magnolia) node[midway, above=0.1pt] {acts\_in};

    
\end{tikzpicture} 

\caption{In this data-graph $\aNodeExpression$ is satisfied, since both Phoenix and Moore have worked with Hoffman in a film from Anderson (respectively through `The Master` and `Magnolia`). Note that the restriction also applies to Hoffman, so it is required that he participates in any film from Anderson in order to satisfy the constraint. }
\label{fig:film}

\end{figure}
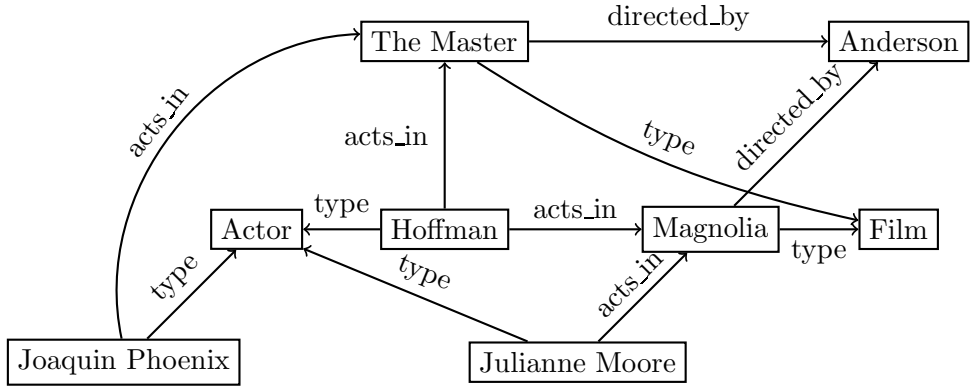
\end{example}

\begin{example} \label{example:FamilyGraph} 
\normalfont All \textit{Regular Path Constraints} (RPCs) considered in \cite{barcelo2017data} can be written as $\Gregxpath$ expressions, since the language is capable to express implications of the form `$\aPath_1 \entoncesCamino \aPath_2$` where $\aPath_i$ is a 2RPQ for $i \in \{1,2\}$. In particular, the constraints from Example 1 of \cite{barcelo2017data} can be written as $\Gcorexpath$ expressions in a straightforward way: the symmetry of the \esLabel{sibling\_of} relation can be stated as $\aPath = \down_\esLabel{sibling\_of} \entoncesCamino$ $ \down_\esLabel{sibling\_of}^{-}$, and the nibling condition as $\aPathb = \down_\esLabel{child\_of} \down_\esLabel{sibling\_of} \entoncesCamino \down_\esLabel{nibling\_of}$. These are exactly the constraints we wanted to express in figure \ref{figure:familia}. Now we can formally state that the data-graph is inconsistent, since (\esDato{María}, \esDato{Julieta}) $\notin \semantics{\aPathb}$. Meanwhile, every pair $(x,y)$ belongs~to~$\semantics{\aPath}$.

Deleting the edge (\esDato{María}, \esLabel{child\_of}, \esDato{Diego}) would make the graph consistent, since \esDato{María} would no longer be \esDato{Julieta}'s nibling and thus $\aPathb$ would be satisfied. On the other hand, deleting the edge (\esDato{Diego}, \esLabel{sibling\_of}, \esDato{Julieta}) would also fix $\aPathb$, but it would cause $\aPath$ to not be satisfied anymore, since the brothership relation would not be symmetric. Note that adding the edge (\esDato{Maria}, \esLabel{nibling\_of}, \esDato{Julieta}) would also make the data-graph consistent.  
\end{example}

\begin{example}
    We can express restrictions that are similar to \textit{foreign keys} using $\Gcorexpath$, relating data values from different entities. Consider a database where we have information about citizens and cities. We may like to ensure that a person's nationality coincides with the country of the city he was born in. This could be captured with the expression:
\begin{equation*}
    \aNodeExpression = \comparacionCaminos{\down_\esLabel{type}[person^=]} \entoncesNodo \comparacionCaminos{\down_\esLabel{bornInCity} \down_\esLabel{nation} = \down_\esLabel{nationality}}
\end{equation*} 
\end{example}

\begin{example}

NREs $\subseteq$ $\Gregxpath$ \cite{libkin2016querying} can express restrictions that take into account typing and inheritance properties common to the \textit{Resource Description Framework} (RDF), such as those considered in \cite{perez2010nsparql}.
\end{example}



\paragraph{Repairs}
When a graph database $\aGraph$ is inconsistent with respect to a set of restrictions $\aRestrictionSet$ (i.e. there is a path expression or node expression in $\aRestrictionSet$ that is not satisfied) we would like to compute a new graph database $\aGraph'$ consistent with respect to $\aRestrictionSet$ and that differs minimally from $\aGraph$. This new database $\aGraph'$ is usually called a \textit{repair} of $\aGraph$ with respect to $\aRestrictionSet$ following some formal definition for the semantics of `minimal difference'. 

%

We consider \textit{set repairs}, in which the notion of minimality is based on the difference between the sets of nodes and edges. While we could provide a notion of distance between arbitrary data-graphs via an adequate definition of symmetric difference, it has been the case that the complexity of finding such repairs is quite high. Thus, it is common to consider only those set repairs in which one graph is obtained from the other by only adding or only deleting information~\cite{barcelo2017data,lukasiewicz2013complexity,tenCate:2012}. This gives raise to subset and superset repairs.

We say that a data-graph $\aGraph = (V,L_e,D)$ is a \defstyle{subset} of a data-graph $\aGraph'=(V',L_e',D')$ (noted as $\aGraph \subseteq \aGraph'$) if and only if $V \subseteq V'$, and for every pair $v,v' \in V$ holds that $L_e(v,v') \subseteq L_e'(v,v')$ and $D(v) = D'(v)$. In this case, we also say that $\aGraph'$ is a \defstyle{superset} of $\aGraph$.




    
    
        
        



\begin{definition}[Subset  and superset repairs]

Let $\aRestrictionSet$ be a set of restrictions and $\aGraph$ a data-graph. We say that $\aGraph'$ is a \defstyle{subset repair} (respectively, \defstyle{superset repair}) or $\subseteq$-repair (respectively, $\supseteq$-repair) of $\aGraph$ if: 

\begin{itemize}
    \item $(\aGraph',\aRestrictionSet)$ is \defstyle{consistent} (i.e. $\aGraph' \models \aRestrictionSet$) 
    
    \item $\aGraph' \subseteq \aGraph$ (respectively, $\aGraph' \supseteq \aGraph$).
        
   \item There is no data-graph $\aGraph''$ such that $(\aGraph'',\aRestrictionSet)$ is consistent and $ \aGraph' \subset \aGraph'' \subseteq \aGraph$ (respectively, $\aGraph' \supset \aGraph'' \supseteq \aGraph$)   
\end{itemize}

We denote the set of subset repairs of $\aGraph$ with respect to $\aRestrictionSet$ as $\subseteq$-$Rep(\aGraph,\aRestrictionSet)$ (respectively, the set of superset repairs as $\supseteq$-$Rep(\aGraph,\aRestrictionSet)$). 

\end{definition}

\begin{example}
 In Example~\ref{figure:familia}, deleting the edge (\esDato{Maria}, \esLabel{child\_of}, \esDato{Diego}) results in a $\subseteq$-repair, while adding the edge (\esDato{Maria}, \esLabel{nibling\_of}, \esDato{Diego}) results in a $\supseteq$-repair. Deleting both (\esDato{Diego}, \esLabel{sibling\_of}, \esDato{Julieta}) and (\esDato{Julieta}, \esLabel{sibling\_of}, \esDato{Diego}) also creates a $\subseteq$-repair, since this new graph database is consistent with respect to $R$ and is also maximal.
\end{example}

\label{Section:Definitions}

\section{Computing repairs}\label{Section:Repairs}




In this Section, we study the computational complexity of the repair computing problem for a given data-graph and a set of constraints.
We start by mentioning a useful fact from~\cite{libkin2016querying} related to \Gregxpath expressions that we use later in this section.

\begin{theorem}\label{teo:PGxpath}
Given a \Gregxpath expression $\alpha$ and a data-graph $\aGraph$ it is possible to compute the set $\semantics{\alpha}_\aGraph$ in polynomial time in the size of $\aGraph$ and $\aPath$.
\end{theorem}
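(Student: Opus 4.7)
The plan is to proceed by mutual structural induction on the expression $\alpha$, in parallel for path and node sub-expressions, maintaining the invariant that each intermediate value is either a subset of $V_\aGraph \times V_\aGraph$ (for a path expression) or a subset of $V_\aGraph$ (for a node expression), hence of size at most $|V_\aGraph|^2$. Since the number of sub-expressions is linear in $|\alpha|$, it suffices to show that each production can be evaluated in polynomial time in $|\aGraph|$ once the semantics of its immediate children have been computed; bottom-up evaluation over the parse tree then yields the overall bound.

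For the base path productions ($\epsilon$, $\labelComodin$, $\aLabel$, $\aLabel^{-}$) the semantics can be read directly off the edge function $L_e$ in time $O(|V_\aGraph|^2)$, and similarly for the base node expressions $\esDatoIgual{c}$, $\esDatoDistinto{c}$ from $\dataFunction$ in time $O(|V_\aGraph|)$. Boolean combinators ($\cup$, $\cap$, $\overline{\cdot}$, $\land$, $\lor$, $\lnot$) are ordinary set operations on sets of size at most $|V_\aGraph|^2$. Composition $\aPath.\aPathb$ amounts to Boolean matrix multiplication on $|V_\aGraph| \times |V_\aGraph|$ matrices, doable in $O(|V_\aGraph|^3)$, and Kleene star $\aPath^{*}$ is the reflexive-transitive closure of the already-computed relation, again cubic. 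The test $\expNodoEnCamino{\aFormula}$ just intersects the diagonal with $\semantics{\aFormula}_\aGraph$; the nesting $\comparacionCaminos{\aPath}$ is a projection onto the first coordinate; and the data comparisons $\comparacionCaminos{\aPath = \aPathb}$ and $\comparacionCaminos{\aPath \neq \aPathb}$ can be evaluated, for each $v$, by grouping the $\aPath$-successors and $\aPathb$-successors of $v$ by their $\dataFunction$-value and checking whether a common (respectively different) bucket exists, which takes $O(|V_\aGraph|^2)$ per node and $O(|V_\aGraph|^3)$ in total.

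The one production that deserves a separate argument is the bounded iteration $\aPath^{n,m}$. Under the standard convention that $n$ and $m$ are written in unary, $m$ successive matrix multiplications give a polynomial-time procedure directly; if they are presented in binary, one uses fast exponentiation of the Boolean matrix of $\semantics{\aPath}_\aGraph$, which still gives polynomial time in $|\aGraph|$ and $|\alpha|$ since only $O(\log m)$ products of $|V_\aGraph| \times |V_\aGraph|$ matrices are needed. Summing the per-node contributions over the at most $|\alpha|$ sub-expressions yields the claimed polynomial bound.

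The main obstacle I anticipate is not any single case but the bookkeeping around the mutual recursion between path and node expressions: one must formulate a joint induction hypothesis so that an appearance of $\comparacionCaminos{\aPath}$ or $\expNodoEnCamino{\aFormula}$ deep inside a path expression can use, as a black box, the polynomial-time evaluation of its nested fragment. Once this is set up cleanly, every production above becomes a routine translation of its denotational clause into a bottom-up computation, and composing them gives the polynomial-time algorithm.
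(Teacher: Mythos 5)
Your argument is essentially correct, but note that the paper does not actually prove this statement: it imports it as a known fact from \cite{libkin2016querying}, and the proof there is precisely the bottom-up dynamic-programming evaluation over the parse tree that you describe, so you have reconstructed the intended argument rather than found an alternative to it. The one step that needs a small repair is the bounded iteration $\aPath^{n,m}$ under binary encoding: writing $R=\semantics{\aPath}_\aGraph$, fast exponentiation gives you $R^{m}$ for a single exponent, not the union $\bigcup_{k=n}^{m}R^{k}$, which has $m-n+1$ terms and so cannot be accumulated term by term in polynomial time. The standard fix is the identity $\bigcup_{k=n}^{m}R^{k}=R^{n}\,(\mathrm{id}\cup R)^{m-n}$, each factor of which is computable with $O(\log m)$ Boolean matrix products; with that observation (or under the unary convention for $n,m$, which is what is typically assumed for this operator) the remaining cases are routine and the claimed polynomial bound follows exactly as you set it up.
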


It follows from this result that, given a set of \Gregxpath expressions $\aRestrictionSet$ where $|R| = \Sigma_{\alpha \in \aRestrictionSet} |\alpha|$ and a data-graph $\aGraph$, it is possible to check if $\aGraph$ is consistent with respect to $\aRestrictionSet$ in polynomial time.

In the sequel, we study the complexity of the subset and superset repair-computing problems, which naturally depend on the fragment of \Gregxpath used to express the constraints.
For some results we consider $\aRestrictionSet$ fixed, which is a common simplification of the repair-computing problem usually called the \textbf{data complexity} of the problem.
We obtain bounds for both the combined and data complexity of the problems~\cite{vardi1982complexity}.

\subsection{Subset repairs}

Let us consider the \emph{empty graph} defined as $\aGraph = (\emptyset, L, D)$, from now on denoted by $\emptyset$. Since the empty graph satisfies every set of restrictions, we conclude that every graph $\aGraph$ has a subset repair given any set $\aRestrictionSet$ of restrictions. In order to understand the complexity of finding such repairs, we define the following decision problem:

\begin{center}

\fbox{\begin{minipage}{30em}
  \textsc{Problem}: \textsc{$\exists$SUBSET-REPAIR}

\textsc{Input}: A data-graph $\aGraph$ and a set $\aRestrictionSet$ of expressions from $\mathcal{L}$.

\textsc{Output}: Decide if $\aGraph$ has a subset repair $\aGraph' \neq \emptyset$ with respect to $\aRestrictionSet$.
\end{minipage}}
\end{center}

The difficulty of this problem depends on the set of expressions $\mathcal{L}$. Throughout this work, we will assume that $\mathcal{L}$ is always a set of expressions included in \Gregxpath. Observe that this problem can be reduced to the problem of finding a subset repair of $\aGraph$ with respect to $\aRestrictionSet$ and, therefore, by studying it we can derive lower bounds for the problem of computing subset repairs.


 \textsc{$\exists$SUBSET-REPAIR$_\mathcal{L}$} is in \textsc{NP}, since we can ask a non-empty subset of $\aGraph$ that satisfies $\aRestrictionSet$ as a positive certificate. In what follows, we prove that this problem is NP-complete even if considering only path expressions from $\Gposregxpath$.

\begin{theorem}\label{teo:subsetPositiveIntractable}

There exists a set \aRestrictionSet of \Gposregxpath path expressions such that the problem $\exists$\textsc{SUBSET-REPAIR} is \textsc{NP-complete}.

\end{theorem}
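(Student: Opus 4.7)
\textbf{Upper bound.} Given $(\aGraph, \aRestrictionSet)$, one can guess a non-empty subgraph $\aGraph' \subseteq \aGraph$ in polynomial time and use Theorem~\ref{teo:PGxpath} to verify $\aGraph' \models \aRestrictionSet$. Moreover, the existence of a non-empty subset repair is equivalent to the existence of \emph{any} non-empty consistent subgraph: a greedy completion of such a subgraph by re-adding elements of $\aGraph$ while preserving consistency yields a maximal consistent subgraph, which is non-empty and hence a non-empty subset repair.

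\textbf{Lower bound.} The plan is a polynomial-time reduction from $3$-SAT. Given a $3$-CNF formula $\phi$ over variables $x_1, \ldots, x_n$ and clauses $C_1, \ldots, C_m$, I would construct a data-graph $\aGraph_\phi$ together with a set $\aRestrictionSet$ of $\Gposregxpath$ path expressions such that $\phi$ is satisfiable iff $\aGraph_\phi$ admits a non-empty subset repair. The graph $\aGraph_\phi$ would contain a distinguished hub node $h$ (to anchor any non-empty consistent subgraph), two literal nodes $T_i, F_i$ per variable $x_i$ (with distinguishing data values), and one clause node $c_j$ per clause $C_j$, each $c_j$ equipped with $\esLabel{lit}$-edges to the literal nodes whose selection would satisfy $C_j$. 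The constraints in $\aRestrictionSet$ would collectively enforce (i) that any non-empty consistent subgraph is anchored at $h$ and retains every clause node, (ii) that every surviving clause node keeps at least one $\esLabel{lit}$-edge to a surviving literal node, and (iii) that no variable $x_i$ has both $T_i$ and $F_i$ present in the subgraph. Under these constraints, non-empty subset repairs of $\aGraph_\phi$ would correspond to satisfying assignments of $\phi$.

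\textbf{Main obstacle.} The crux will be enforcing condition (iii) using only the positive fragment, where neither path complement $\overline{\aPath}$ nor node negation $\lnot \aFormula$ is available. My plan is to exploit the two non-monotone-looking primitives that survive in $\Gposregxpath$: the data-inequality test $\esDatoDistinto{c}$, which acts as a one-sided data-level negation, and the path-intersection operator $\cap$. Concretely, I would wire $\aGraph_\phi$ with auxiliary conflict structure (for instance, shared successor nodes of each pair $T_i, F_i$, or cross-edges between them carrying distinguishing data labels) and design a path expression whose universal satisfaction at every pair of nodes can be maintained only when $T_i$ and $F_i$ are not simultaneously present. The delicate technical step will be verifying soundness and completeness of the encoding: showing that despite the absence of classical negation, the interplay between $\cap$, $\esDatoDistinto{c}$, and the ``for every pair'' quantifier built into the consistency definition is expressive enough to detect the forbidden coexistence of conflicting literals, while remaining permissive enough that genuine satisfying assignments still induce consistent subgraphs.
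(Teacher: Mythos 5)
Your upper bound is fine and matches the paper's (guess a non-empty consistent subgraph; any such subgraph extends to a maximal one). Your lower-bound skeleton — a 3-SAT reduction with clause nodes, two literal nodes per variable, and the three conditions (i)--(iii) — is also the shape of the paper's proof, and you correctly isolate the crux: enforcing mutual exclusion of $T_i$ and $F_i$ without negation. But the proposal stops exactly at that crux and the tools you reach for would not get you there cleanly. The paper's mechanism is purely structural and needs neither $\esDatoDistinto{c}$ nor $\cap$: it adds a single edge label \esLabel{unique} to \emph{every} pair of nodes \emph{except} the pairs $(\top_i,\bot_i)$, and imposes the fixed path constraint $\down_\esLabel{unique}$. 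Since consistency quantifies over all pairs of surviving nodes and a subset repair can never add the missing edge, the pair $(\top_i,\bot_i)$ can only be made consistent by deleting one of the two nodes. This is where the negation-like power comes from — the deliberate \emph{absence} of a witnessing edge on forbidden pairs — and it requires no data values at all. Your plan to encode the conflict via data-inequality tests and intersection over ``auxiliary conflict structure'' is left unspecified, and if the formulas must mention per-variable data values you would end up with a constraint set that grows with $\phi$, whereas the theorem as stated requires a single \emph{fixed} $\aRestrictionSet$ working for all inputs (this is what gives NP-hardness in data complexity). Your sketch never addresses fixedness.

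A second, smaller gap is condition (i). A hub node $h$ does not anchor anything: consistency is only required for pairs of nodes that \emph{survive}, so deleting a clause node simply removes the offending pairs rather than violating a constraint. The paper instead threads all clause nodes and the boolean nodes into a cycle of \esLabel{exists}-edges (with parallel branches through $\top_i$ and $\bot_i$) and imposes $\down_\esLabel{exists}^+$; then even a single surviving node $v$ forces the pair $(v,v)$ to lie on an \esLabel{exists}-cycle, which forces every clause node and at least one boolean node per variable to survive. You would need some such global connectivity gadget; as written, your condition (i) would fail. Condition (ii) is handled in the paper by the same absence-of-edge trick (a \esLabel{valid} label on all pairs except $(c_j,\star_i)$, plus the constraint $\down_\esLabel{valid} \cup \down_\esLabel{needs}\down_\esLabel{valid}$), which again avoids data tests entirely.
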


\begin{proof}
We reduce \textsc{3-SAT} to $\exists$\textsc{SUBSET-REPAIR}, considering a fixed set $\aRestrictionSet$ of restrictions. Given a 3CNF formula $\phi$ of $n$ variables $x_1,\ldots,x_n$ and $m$ clauses $c_1,\ldots,c_m$ we want to construct a data-graph $\aGraph=(V,L,D)$ of \Gposregxpath expressions such that $\aGraph$ has a non-trivial subset repair with respect to $\aRestrictionSet$ if and only if $\phi$ is satisfiable. We give a description of $\aRestrictionSet$ once we outline the structure of the data-graph $\aGraph$.

First, let us define the nodes of the graph. We have two `boolean nodes' for each variable, representing each possible assignment. In addition, we have one node for every clause.

\begin{center}
    $V = \{$ $\bot_i$ $|$ $1\leq i \leq n$ $\}$ $\cup$ $\{$ $\top_i$ $|$ $1\leq i \leq n$ $\}$ $\cup$ $\{$ $c_j$ $|$ $1 \leq j \leq m$ $\}$
\end{center}

In a valid subset repair, only one of the nodes $\bot_i$ or $\top_i$ will remain, and that will implicitly define a truth assignment on the variable $x_i$ of $\phi$.

We encode the information of the formula $\phi$ into the edges of our data-graph. For every clause node $c_j$ there is an edge to a boolean node $\bot_i$ (respectively, $\top_i$) if and only if the literal $\lnot x_i$ appears in $c_j$ (respectively without $\lnot$). For this, we consider $\Sigma_e \supseteq \{\esLabel{needs}, \esLabel{exists}, \esLabel{unique}\}$ and define:

\begin{align*}
L(c_j,\top_i) &= \{\esLabel{needs}\} \mbox{ if } x_i \mbox{ appears in } c_j \\
L(c_j,\bot_i) &= \{\esLabel{needs}\} \mbox{ if } \lnot x_i \mbox{ appears in }c_j
\end{align*}

In a valid subset repair, we want every clause $c_j$ to keep at least one of these edges, since that would imply that $c_j$ is satisfied.

However, a technical problem arises from this approach: it could be the case that not every node is present in a repair, and thus this might result in the corresponding clause node $c_j$ to be removed. To avoid this scenario, we add some edges to our graph and a path expression to $\aRestrictionSet$ that will force every node to be present in case the repair is non-trivial. To do this, we define a cyclic structure with $\esLabel{exist}$ edges:

\begin{center}

    $L(c_j,c_{j+1}) = \{\esLabel{exists}\}$
    
    We add $\esLabel{exists}$ edges to $L(c_m,\top_1)$, $L(c_m,\bot_1)$, $L(\top_n,c_1)$ and $L(\bot_n,c_1)$
    
    $L((\star^1)_i, (\star^2)_{i+1}) = \{\esLabel{exists}\}$ for $\star^1, \star^2 \in \{\top, \bot\}$

\end{center}

Using these edges and the following constraint:

\begin{equation} \label{eq:exists}
    \aPathb_1 = \down_\esLabel{exists}^+
\end{equation}

we ensure that every clause node and at least one $\star_i$ node is present for $\star \in \{\bot, \top\}, 1\leq i \leq n$.

We also add $\esLabel{unique}$ edges to every pair of nodes but those of the form $(\top_i, \bot_i)$. In other words, $\esLabel{unique} \in L(v,v')$ for every $v,v'\in V$ such that either $v \neq \top_i$ or $v' \neq \bot_i$ for every $i$. Observe that every pair of nodes except for those of the form $(\top_i, \bot_i)$ belongs to the \esLabel{unique} relation, and if $\aRestrictionSet$ contains the constraint:

\begin{equation} \label{eq:unique}
    \aPathb_2 = \down_\esLabel{unique},
\end{equation}

then every non-trivial repair will not contain both $\top_i$ and $\bot_i$ nodes for each $i$. Hence, every non-trivial repair will implicitly define an assignment on the variables of $\phi$.

Finally, we need to ensure that the assignment defined by any non-trivial repair satisfies $\phi$. To do this, we add more edges: we add a $\esLabel{valid}$ edge between every pair of nodes $v$ and $v'$ such that $(v,v')\neq (c_j,\star_i)$ for every $j$, $i$ and $\star \in \{\bot, \top\}$; and consider the following constraint:

\begin{equation} \label{eq:valid}
    \aPathb_3 = \down_\esLabel{valid} \cup \down_\esLabel{needs} \down_\esLabel{valid}
\end{equation}

Every pair of nodes belongs to the \esLabel{valid} relation, except for those of the form $(c_j,\star_i)$, where $\star \in \{\bot, \top\}$. Moreover, those pairs of nodes will satisfy path expression \ref{eq:valid} only if at least one of the $\esLabel{needs}$ edges remains for every $c_j$. This means that in a non-trivial repair, every clause $c_j$ must have a $\esLabel{needs}$ edge directed to one of its variables valuations that evaluates $c_j$ to $\top$. Thus, the valuation implicitly defined by the $\bot_i$,$\top_i$ nodes that remain in the repair will be a satisfying one.

Intuitively, if there is a non-trivial repair of $\aGraph$ with respect to $\aRestrictionSet = \{\aPathb_1, \aPathb_2, \aPathb_3\}$, then such a repair consists of a selection of boolean nodes for each $i$ such that every clause node keeps a \esLabel{needs} edge.  Notice that $\aRestrictionSet$ is independent of $\phi$.

Let us prove that $\phi$ is satisfiable if and only if $\aGraph$ has a non-trivial repair with respect to $\aRestrictionSet$.

$\implies$) If $\phi$ is satisfiable then there is an assignment $f$ of its variables such that $\phi$ evaluates to true. Consider the graph $\aGraph$, and delete from $\aGraph$ every node $\star_i$ for $\star \in \{\bot, \top\}$ such that $f(x_i) \neq \star_i$.

Since we deleted only one boolean node for each $i$, the path expression~\ref{eq:exists} is satisfied. The expression~\ref{eq:unique} is satisfied, since there is no $i$ for which both $\top_i$ and $\bot_i$ remain in the graph. Finally, since $f$ is a valid assignment, we know that for every clause node $c_j$ one of its $\esLabel{needs}$ edges remains, hence the expression~\ref{eq:valid} is satisfied. 

This implies that the subgraph satisfies $\aRestrictionSet$, which means that there exists a non-trivial subset repair of $\aGraph$ with respect to $\aRestrictionSet$.

$\impliedby$) Let $\aGraph'$ be a non-trivial repair of $\aGraph$ with respect to $\aRestrictionSet$. Since both expressions \ref{eq:exists} and \ref{eq:unique} are satisfied, we know that, for every $i$, one of the boolean nodes belongs to $\aGraph'$. Hence, we define an assignment $f$ on the variables $x_i$ as $f(x_i) = \top$ $\iff$ $\top_i \in V_{\aGraph'}$. Since the expression \ref{eq:valid} is satisfied, at least one variable from each clause evaluates to true (or false, if the variable is negated in the clause). It follows that $f$ is an assignment that evaluates $\phi$ to true.
\end{proof}

The problem is also \textsc{NP-hard} when considering only node expressions from \Gregxpath:

\begin{theorem}\label{teo:subsetNodeHardness}
The problem $\exists$\textsc{SUBSET-REPAIR} is NP-complete for a fixed set of \Gregxpath node expressions.
\end{theorem}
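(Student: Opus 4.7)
The plan is to reuse the 3-SAT reduction from the proof of Theorem~\ref{teo:subsetPositiveIntractable} verbatim, replacing each of its three path constraints by an equivalent \Gregxpath node constraint. The crucial observation is that, on any data-graph $\aGraph$, a path expression $\aPath$ is satisfied by every pair of nodes if and only if for every node $v$ there is no node $w$ with $(v,w) \in \semantics{\pathComplement{\aPath}}_\aGraph$; equivalently, $v \in \semantics{\lnot \comparacionCaminos{\pathComplement{\aPath}}}_\aGraph$. For each of the three path expressions $\aPathb_i$ from the previous proof I would therefore define the node expression $\varphi_i = \lnot \comparacionCaminos{\pathComplement{\aPathb_i}}$, which is a valid \Gregxpath node expression since \Gregxpath admits both path complement and node negation.

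With the translation in place, the next step is to verify that it preserves consistency on every data-graph — in particular on every subgraph of the constructed $\aGraph$: for any $\aGraph' \subseteq \aGraph$, one has $\aGraph' \models \{\aPathb_1,\aPathb_2,\aPathb_3\}$ iff $\aGraph' \models \{\varphi_1,\varphi_2,\varphi_3\}$. Given this equivalence, both directions of the correspondence between satisfying assignments of the input 3CNF $\phi$ and non-trivial subset repairs — selecting the surviving boolean node per variable and relying on the \esLabel{needs}/\esLabel{exists}/\esLabel{unique}/\esLabel{valid} edges exactly as before — transfer immediately from Theorem~\ref{teo:subsetPositiveIntractable}, yielding NP-hardness for a fixed set of three \Gregxpath node expressions.

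Membership in NP is immediate: any non-empty consistent subgraph of $\aGraph$ extends to a non-empty subset repair, so such a subgraph serves as a certificate, and its consistency with respect to the fixed set $\{\varphi_1,\varphi_2,\varphi_3\}$ can be verified in polynomial time by Theorem~\ref{teo:PGxpath}. I do not expect a genuine technical obstacle here; the only conceptual point — and the reason the statement requires full \Gregxpath rather than \Gposregxpath — is that the translation $\aPath \mapsto \lnot \comparacionCaminos{\pathComplement{\aPath}}$ crucially needs both path complement and node negation, neither of which is available in the positive fragment, so within \Gposregxpath this shortcut would fail and a genuinely different construction would be required.
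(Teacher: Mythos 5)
Your reduction is correct, but it is a genuinely different route from the one the paper takes. You derive the hardness for node expressions as a corollary of Theorem~\ref{teo:subsetPositiveIntractable} via the translation $\aPathb \mapsto \lnot\comparacionCaminos{\pathComplement{\aPathb}}$, and the key equivalence you rely on is sound: for any data-graph $\aGraph$ and path expression $\aPathb$, one has $\semantics{\aPathb}_\aGraph = V_\aGraph \times V_\aGraph$ iff $\semantics{\pathComplement{\aPathb}}_\aGraph = \emptyset$ iff $\semantics{\comparacionCaminos{\pathComplement{\aPathb}}}_\aGraph = \emptyset$ iff every node satisfies $\lnot\comparacionCaminos{\pathComplement{\aPathb}}$; since this holds on every data-graph, the two constraint sets have identical sets of subset repairs and the whole reduction transfers, with the translated set still fixed. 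The paper instead gives a self-contained, second reduction from \textsc{3-SAT}: it builds a different gadget with variable, clause and $\top/\bot$ nodes carrying data values ($var$, $clause$, $\top$, $\bot$) and uses three node expressions $\psi_1,\psi_2,\psi_3$ built from data tests, the nesting operator, and data-join comparisons of the form $\comparacionCaminos{\down_\esLabel{assign} \neq \down_\esLabel{assign}}$. The trade-off is this: your argument is shorter and reuses prior work, but the hardness it establishes hinges entirely on the path-complement operator $\pathComplement{\cdot}$ smuggled inside a node test; the paper's construction avoids path complement altogether and shows that node-level negation together with data(-inequality) tests already suffices for \textsc{NP}-hardness, which is a more informative localization of where the intractability comes from. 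Both arguments handle \textsc{NP} membership identically (a non-empty consistent subgraph as certificate, checkable in polynomial time by Theorem~\ref{teo:PGxpath}), and your closing remark about why the trick is unavailable in \Gposregxpath is accurate.
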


\begin{proof}

We reduce \textsc{3-SAT} to $\exists$\textsc{SUBSET-REPAIR}, considering a fixed set $\aRestrictionSet$ of node expressions from $\Gregxpath$. Given a 3CNF formula $\phi$ of $n$ variables $x_1,\ldots,x_n$ and $m$ clauses $c_1,\ldots,c_m$, we will construct a data-graph $\aGraph=(V,L,D)$ of \Gregxpath expressions such that $\aGraph$ has a non-trivial subset repair with respect to $\aRestrictionSet$ if and only if $\phi$ is satisfiable. We give a description of $\aRestrictionSet$ once we outline the structure of the data-graph $\aGraph$. For our reduction we require that $|\Sigma_e| \geq 4 $ and $|\Sigma_n| \geq 4$, and in particular we will assume that we have edge labels $\{\esLabel{assign},\esLabel{needs\_true},\esLabel{needs\_false},\esLabel{h}\} \subseteq \Sigma_e$ and data values $\{var, clause, \top, \bot\} \subseteq \Sigma_n$.

The data-graph $G=(V,L,D)$ is defined as

\begin{align*}
V =& \{\bot, \top\} \cup \{x_i : 1\leq i \leq n\} \cup \{c_j : 1\leq j \leq m\} \\
L(x_i,\star) =& \{\down_{\esLabel{assign}}\} \mbox{ for } i=1\ldots n, \star \in \{\bot, \top\} \\
L(c_j, x_i) =& \{\down_\esLabel{needs\_true}\} \text{ if $x_i$ is a literal in $c_j$ and $\lnot x_i$ is not} \\
L(c_j, x_i) =& \{\down_\esLabel{needs\_false}\} \text{ if $\lnot x_i$ is a literal in $c_j$ and $x_i$ is not} \\
L(c_j, x_i) =& \{\down_\esLabel{needs\_true},\down_\esLabel{needs\_false}\} \text{ if $x_i$ and $\lnot x_i$ are both literals in $c_j$} \\
L(v,w) =& \emptyset \text{ otherwise} \\
D(x_i) =&\, var \text{ for } i=1\ldots n \\
D(c_j) =&\, clause \text{ for } j=1\ldots m \\
D(\star) =& \star \text{ for } \star \in \{\bot, \top\}
\end{align*}
Given an ordering $p_1 \ldots p_{|V|}$ of the $|V|$ nodes of the data-graph, we add the edge labels $\down_\esLabel{h}$ to $L(p_k,p_{k+1 \mod |V|})$ for $k = 1,\ldots, |V|$.

The set of node expressions $R=\{\psi_1,\psi_2,\psi_3\}$ is defined as:

\begin{align*}
\psi_1 = &\comparacionCaminos{\down_{h}} \\ 
\psi_2 = &\expNodoEnCamino{var^{\neq} \vee \lnot \comparacionCaminos{\down_\esLabel{assign} \neq \down_\esLabel{assign}}}\\ 
\psi_3 = &\expNodoEnCamino{clause^{\neq} \vee
\comparacionCaminos{\down_\esLabel{needs\_true} \down_\esLabel{assign} \expNodoEnCamino{\top}} \vee
\comparacionCaminos{
\down_\esLabel{needs\_false} \down_\esLabel{assign} \expNodoEnCamino{\bot} 
}}
\end{align*}

The data-graph $\aGraph$ is a representation of the input formula $\phi$, where each variable is assigned to both boolean values $\bot$ and $\top$. A subset repair of $\aGraph$ with respect to $\aRestrictionSet$ will represent a proper assignment, where each variable node $x_i$ has a unique outgoing edge of label $\esLabel{assign}$. This condition is imposed by $\psi_2$, since it forbids the case in which a node with data value $var$ has two outgoing edges to nodes having different data value.

Observe that $\psi_3$ implies that in a valid repair every clause will either contain a literal that is a negation of a variable that is assigned to $\bot$, or a non-negated literal assigned to $\top$. Finally, $\psi_1$ forces the repair to either preserve all nodes or none of them: if a proper subset of the nodes is deleted then at least one node will not have an outgoing edge of label $\esLabel{h}$, and therefore $R$ will not be satisfied.

Now, we prove that $\phi$ is satisfiable if and only if $\aGraph$ has a non-trivial subset repair with respect to $\aRestrictionSet$.

$\implies$) Let $f:\{x_i\}_{1\leq i\leq n} \to \{\top,\bot\}$ be an assignment that satisfies $\phi$. Hence, we can define a sub data-graph $G'$ of $\aGraph$ by removing the edges $(x_i,\esLabel{assign},\lnot f(x_i))$. Since every node is present and every $var$ node has an unique outgoing edge of label \esLabel{assign}, $\psi_1$ and $\psi_2$ are satisfied in this data-graph. Moreover, $\psi_3$ is also satisfied: given a $clause$ node $c_j$, there is a literal $x_i$ or $\lnot x_i$ in $c_j$ that is satisfied through $f$, and the data-graph $G'$ contains the edges of the form $(x_i,\esLabel{assign},f(x_i))$. Therefore, from every $clause$ node there will be a path with the proper edge labels to either $\bot$ or $\top$. Since $G'$ satisfies $\aRestrictionSet$ and $G' \subset G$, there must be a non-trivial subset repair $G'' \supseteq G'$ of $G$ with respect to $\aRestrictionSet$.

$\impliedby$) Given a non-trivial subset repair $G'$ of $G$ with respect to $\aRestrictionSet$, we build a satisfying assignment of $\phi$. Since $\psi_1$ is satisfied by $\aGraph'$, every node from $G$ is present in $G'$. We define the following assignment $f:\{x_i\}_{1\leq i\leq n} \to \{\top,\bot\}$ such that $f(x_i) = \top \iff \esLabel{assign} \in L_{G'}(x_i,\top)$. We claim that this assignment satisfies $\phi$.

Given a clause $c_j$ from $\phi$, the node $c_j$ in $G'$ satisfies $\psi_3$. This means that, there is either a path $\down_\esLabel{needs\_true} \down_\esLabel{assign} [\top]$ or a path $\down_\esLabel{needs\_false} \down_\esLabel{assign} [\bot]$ starting at the node $c_j$. Assume without loss of generality that the path is of the form $\down_\esLabel{needs\_false} \down_\esLabel{assign} [\bot]$ (the other case follows analogously), and that the intermediate node is $x_i$. 
By construction, we know that $\lnot x_i$ is a literal in $c_j$. Due to constraint $\psi_2$, the node $x_i$ has at most one outgoing edge of label \esLabel{assign}, and therefore $\esLabel{assign} \in L(x_i,\bot)$ implies that $\esLabel{assign} \notin L(x_i,\top)$, which by definition of $f$ also implies that $f(x_i)= \bot$. We conclude that $c_j$ is satisfied.

\end{proof}

Since computing a repair seems unfeasible when using path constraints from $\Gposregxpath$ or node constraints from $\Gregxpath$, we now study the $\exists$\textsc{SUBSET-REPAIR} problem when $\mathcal{L}$ contains only \textit{node expressions} from $\Gposregxpath$.

First, we prove that the positive fragment of $\Gposregxpath$ satisfies a certain property of monotony:

\begin{lemma} [Monotony of \Gposregxpath] \label{monotony}
    Let $\aGraph$ be a data-graph, $\aPath$ a \Gposregxpath path expression, $\aFormula$ a \Gposregxpath node expression, and $\aGraph'$ a data-graph such that $\aGraph \subseteq \aGraph'$. Then:
    
    \begin{itemize}
    
        \item $\semantics{\aPath}_{\aGraph} \subseteq \semantics{\aPath}_{\aGraph'}$.
        
        \item $\semantics{\aFormula}_{\aGraph} \subseteq \semantics{\aFormula}_{\aGraph'}$.
    
    \end{itemize}
    
\end{lemma}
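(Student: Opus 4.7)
The plan is to proceed by mutual structural induction on path expressions $\aPath$ and node expressions $\aFormula$ of $\Gposregxpath$, since the two families of expressions are defined by mutual recursion. The guiding intuition is that every production kept in $\Gposregxpath$ is monotone in the data-graph with respect to $\subseteq$; the forbidden productions $\pathComplement{\aPath}$ and $\lnot \aFormula$ are precisely the ones that could invert the inclusion. Throughout, I would use that $\aGraph \subseteq \aGraph'$ gives $V_\aGraph \subseteq V_{\aGraph'}$, $L_e^\aGraph(v,w) \subseteq L_e^{\aGraph'}(v,w)$ for every $v,w \in V_\aGraph$, and crucially $D^\aGraph(v) = D^{\aGraph'}(v)$ for every $v \in V_\aGraph$.

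For the base path cases ($\epsilon$, $\labelComodin$, $\aLabel$, $\aLabel^-$), monotony follows by inspecting the definitions and applying the three containments above. For the inductive path cases I would use that relational composition ($\aPath.\aPathb$), union ($\pathUnion$), intersection ($\pathIntersection$), bounded iteration ($\aPath^{n,m}$), and reflexive transitive closure ($\aPath^*$) all preserve set inclusion of their arguments; for the Kleene star I would write $\semantics{\aPath^*}_\aGraph = \bigcup_{k\ge 0} (\semantics{\aPath}_\aGraph)^k$ and argue that each power is monotone by the induction hypothesis combined with monotonicity of composition, then take the union. The node-test case $\expNodoEnCamino{\aFormula}$ is immediate from the induction hypothesis applied to $\aFormula$.

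For the base node cases ($\esDatoIgual{c}$ and $\esDatoDistinto{c}$), monotony reduces to the fact that data values are preserved on common nodes. For $\comparacionCaminos{\aPath}$ I lift the existential witness from $\aGraph$ to $\aGraph'$ via the induction hypothesis. For $\comparacionCaminos{\aPath = \aPathb}$ and $\comparacionCaminos{\aPath \neq \aPathb}$, given a node $v \in V_\aGraph$ in the semantics over $\aGraph$, I would take the witnesses $v',v'' \in V_\aGraph$ and use the induction hypothesis on $\aPath$ and $\aPathb$ to place $(v,v'),(v,v'')$ in the semantics over $\aGraph'$; preservation of $D$ on common nodes then yields $D^{\aGraph'}(v')=D^\aGraph(v')$ and $D^{\aGraph'}(v'')=D^\aGraph(v'')$, so the equality or inequality of data values carries over. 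Finally, $\aFormula \wedge \aFormulab$ and $\aFormula \vee \aFormulab$ follow from the induction hypothesis together with monotonicity of intersection and union.

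The one step that requires some care, and that I expect to be the main conceptual obstacle, is $\comparacionCaminos{\aPath \neq \aPathb}$: the inequality on data values looks superficially non-monotone, and one might worry that adding material to the graph could create a witness $v''$ whose data value matches that of $v'$, invalidating the inequality. The point is that we are not discarding the old witnesses: the pair that worked for $\aGraph$ still works for $\aGraph'$ because both $v',v''$ and their data values are preserved. Once this observation is made, the remainder of the argument is a routine case analysis.
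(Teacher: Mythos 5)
Your proposal is correct and follows essentially the same route as the paper's proof: a mutual structural induction on path and node expressions, with the base cases handled by the three containments that define $\aGraph \subseteq \aGraph'$ and the inductive cases by monotonicity of composition, union, intersection, iteration, and closure, including the same key observation for $\comparacionCaminos{\aPath \neq \aPathb}$ that the original witnesses and their data values persist in the superset. No gaps.
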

\begin{proof} 
Intuitively, if a pair of nodes satisfies $(v,w) \in \semantics{\aPath}_\aGraph$ where $\aPath$ is a positive expression, then there is a certain subgraph $H$ of $\aGraph$ that is a witness of that fact. For example, if $(v,w) \in \semantics{\down_l [c^=] \down_l}_\aGraph$ then there is a node $z$ such that $D(z) = c$ and a path $vzw$ made by edges with label $l$. The subgraph composed by these three nodes and two edges is enough for the pair $(v,w)$  to satisfy the positive path expression $\aPath$, and therefore in every superset $G'$ of $G$ it will be the case that $(v,w) \in \semantics{\aPath}_G$. See the appendix~\ref{Appendix} for the proper proof.
\end{proof}

This fact allows us to define an efficient procedure for finding a subset repair, based on the following observation: if a node $v$ in $\aGraph$ does not satisfy a positive node expression $\aFormula$, then there is no subset repair $\aGraph'$ such that $v$ is a node from $\aGraph'$.
More generally, we have the following result.

\begin{theorem}\label{procedureCorrectness}

    Let $\aGraph$ be a data-graph, $\aRestrictionSet$ a set of \Gposregxpath restrictions and $v$ a node from $\aGraph$ which violates a node expression. Then
    
\begin{equation*}
    \subseteq \mhyphen Rep(\aGraph,\aRestrictionSet) = \subseteq \mhyphen Rep(\aGraph_{V_\aGraph \setminus \{v\}},\aRestrictionSet)
\end{equation*}

where $\aGraph_{V}$ is the sub data-graph of $\aGraph$ induced by the nodes $V\subseteq V_\aGraph$
        
\end{theorem}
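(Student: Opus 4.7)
The plan is to use the monotony lemma to observe that \emph{no} consistent subgraph of $\aGraph$ can contain the offending node $v$; once this is established, the equality of the two sets of subset repairs reduces to a routine maximality argument.

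First I would fix a node expression $\aFormula \in \aRestrictionSetNodes$ witnessing the hypothesis, i.e. $v \in V_\aGraph$ but $v \notin \semantics{\aFormula}_\aGraph$. For any data-graph $\aGraph'' \subseteq \aGraph$ with $v \in V_{\aGraph''}$, the monotony lemma yields $\semantics{\aFormula}_{\aGraph''} \subseteq \semantics{\aFormula}_{\aGraph}$, so in particular $v \notin \semantics{\aFormula}_{\aGraph''}$ and hence $\aGraph'' \not\models \aRestrictionSet$. This is the key structural fact: every consistent subgraph of $\aGraph$ is already a subgraph of $\aGraph_{V_\aGraph \setminus \{v\}}$.

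With this in hand, I would prove the two inclusions of the claimed equality.
\begin{itemize}
\item ($\subseteq$) Let $\aGraph' \in\,\subseteq\mhyphen Rep(\aGraph,\aRestrictionSet)$. Since $\aGraph'$ is consistent and contained in $\aGraph$, the previous paragraph gives $v \notin V_{\aGraph'}$, so $\aGraph' \subseteq \aGraph_{V_\aGraph \setminus \{v\}}$. If some consistent $\aGraph''$ satisfied $\aGraph' \subset \aGraph'' \subseteq \aGraph_{V_\aGraph \setminus \{v\}}$, then also $\aGraph'' \subseteq \aGraph$, contradicting the maximality of $\aGraph'$ as a subset repair of $\aGraph$. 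Hence $\aGraph' \in\,\subseteq\mhyphen Rep(\aGraph_{V_\aGraph \setminus \{v\}},\aRestrictionSet)$.
\item ($\supseteq$) Conversely, let $\aGraph' \in\,\subseteq\mhyphen Rep(\aGraph_{V_\aGraph \setminus \{v\}},\aRestrictionSet)$. Then $\aGraph' \subseteq \aGraph_{V_\aGraph \setminus \{v\}} \subseteq \aGraph$ and $\aGraph' \models \aRestrictionSet$. If some consistent $\aGraph''$ satisfied $\aGraph' \subset \aGraph'' \subseteq \aGraph$, the key fact above forces $v \notin V_{\aGraph''}$, so $\aGraph'' \subseteq \aGraph_{V_\aGraph \setminus \{v\}}$, contradicting the maximality of $\aGraph'$ inside $\aGraph_{V_\aGraph \setminus \{v\}}$. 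Hence $\aGraph' \in\,\subseteq\mhyphen Rep(\aGraph,\aRestrictionSet)$.
\end{itemize}

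The only substantive step is the appeal to monotony; everything after is bookkeeping on the $\subseteq$-order. I do not foresee a genuine obstacle here, but one point to keep clean is that the statement refers to a node violating ``a node expression'', so I must explicitly pick such an $\aFormula \in \aRestrictionSetNodes$ rather than an arbitrary path constraint (monotony still applies to path expressions, but we need a node-level violation to conclude that the mere presence of $v$ in any subgraph breaks consistency).
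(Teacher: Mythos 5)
Your proposal is correct and follows essentially the same route as the paper's proof: both use Lemma~\ref{monotony} to conclude that $v$ cannot appear in any consistent subgraph of $\aGraph$, and then settle the two inclusions by the same maximality bookkeeping. Your explicit isolation of the "key structural fact" and the remark that the violation must come from a node expression (not merely a path constraint) are accurate refinements of the argument the paper already makes.
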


\begin{proof}
For the $\subseteq$ direction: let $H \in Rep(\aGraph,\aRestrictionSet)$ be a subset repair of $\aGraph$ with respect to $\aRestrictionSet$. Due to Lemma \ref{monotony} and the fact that there is a node expression $\aFormula \in \aRestrictionSet$ such that $v \notin \semantics{\aFormula}_{\aGraph}$ we conclude that $v$ is not a node from $H$. This implies that $H \subseteq G_{V_\aGraph \setminus \{v\}}$, and since $H$ is a maximal consistent subset of $G$ with respect to $\aRestrictionSet$, it also is a maximal consistent subset from $G_{V_\aGraph \setminus \{v\}}$.

For the other direction: let $H \in Rep(G_{V_\aGraph \setminus \{v\}}, R)$. Since $H$ is consistent with respect to $\aRestrictionSet$ and $H \subseteq \aGraph$ we only need to prove that it is maximal with respect to $\aGraph$. Toward a contradiction, suppose there exists a data-graph $H'$ such that $H \subset H' \subseteq G$ and $H'$ satisfies $\aRestrictionSet$. Since $v$ is not a node from $H'$, it follows that $H' \subseteq G_{V_\aGraph \setminus \{v\}}$, which then implies that $H$ is not a repair of $G_{V_\aGraph \setminus \{v\}}$. This results in a contradiction, and hence we proved that $H$ is a subset repair of $\aGraph$ with respect to $\aRestrictionSet$.
\end{proof}

Furthermore, given two data-graphs $\aGraph_1$ and $\aGraph_2$ satisfying a \Gposregxpath node expression $\aFormula$ it can be shown that $\aGraph_1 \cup \aGraph_2$ satisfies $\aFormula$ as well  (this follows from Lemma~\ref{monotony}). Then, if $\aRestrictionSet$ only contains \Gposregxpath node expressions we can conclude that there is a unique subset repair of $\aGraph$ with respect to $\aRestrictionSet$.

Given all these facts, we define an algorithm that computes the unique subset repair of a data-graph given a set of \Gposregxpath node expressions:

\begin{algorithm}[H]

\begin{algorithmic}[1]

\REQUIRE $\aGraph$ is a data-graph and $\aRestrictionSet$ a set of $\Gposregxpath$ node expressions.

\WHILE{$(\aGraph,\aRestrictionSet)$ is inconsistent}
    \STATE $V_{\bot} \leftarrow \{$ $v$ $|$ $v \in V_{\aGraph}$ and $\exists$ $\aFormula \in \aRestrictionSet$ such that $v \notin \semantics{\aFormula}_\aGraph$\}
    \STATE $\aGraph \leftarrow \aGraph_{V_{\aGraph} \setminus V_{\bot}}$
\ENDWHILE
\RETURN $\aGraph$
\end{algorithmic}
\caption{$SubsetRepair(\aGraph,\aRestrictionSet)$}
\label{algorithm:subset}
\end{algorithm}

This method is correct since Theorem~\ref{procedureCorrectness} implies that $Rep(\aGraph,\aRestrictionSet) = Rep(\aGraph_{V_{\aGraph}\setminus V_{\bot}}, \aRestrictionSet)$. It also terminates, since $(\emptyset,\aRestrictionSet)$ is consistent. The set $V_{\bot}$ can be computed in polynomial time, and since there are at most $|V_{\aGraph}|$ iterations we conclude the following:

\begin{theorem}\label{teo:uniqueSubsetRepair}
    Given a data-graph $\aGraph$ and a set of \Gposregxpath node expressions $\aRestrictionSet$, it is possible to compute the unique subset repair of $\aGraph$ with respect to $\aRestrictionSet$ in polynomial time. 
\end{theorem}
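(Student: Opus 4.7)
The plan is to verify that Algorithm~\ref{algorithm:subset} outputs the (unique) subset repair of $\aGraph$ with respect to $\aRestrictionSet$ in polynomial time; the three ingredients needed are already in place just above.

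First, I would settle uniqueness separately. Given two candidate repairs $\aGraph_1,\aGraph_2 \subseteq \aGraph$ satisfying $\aRestrictionSet$, I form their componentwise union $\aGraph_1 \cup \aGraph_2$ (which is still a subgraph of $\aGraph$), and invoke Lemma~\ref{monotony}: each node expression in $\aRestrictionSet$ that holds at every node of $\aGraph_i$ continues to hold at that node in the larger graph. Hence the union is consistent, so maximality forces $\aGraph_1 = \aGraph_2 = \aGraph_1 \cup \aGraph_2$; thus $\subseteq\mhyphen Rep(\aGraph,\aRestrictionSet)$ is a singleton.

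Next, I would argue correctness of the algorithm by induction, maintaining the invariant $\subseteq\mhyphen Rep(\aGraph_{\text{orig}},\aRestrictionSet) = \subseteq\mhyphen Rep(\aGraph,\aRestrictionSet)$ throughout the loop. Theorem~\ref{procedureCorrectness} gives exactly this equality when \emph{one} violating node is removed; the one subtlety is that a single iteration of the algorithm removes \emph{all} nodes in $V_\bot$ simultaneously. To license this, I would peel them off one by one: if $v_1,\dots,v_k$ all violate some expression in the current graph $\aGraph$, then by (the contrapositive of) Lemma~\ref{monotony}, $v_{i+1}$ still violates its witnessing expression in the graph obtained after removing $v_1,\dots,v_i$, so Theorem~\ref{procedureCorrectness} applies iteratively and the equality is preserved after the whole batch deletion. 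When the loop exits, $\aGraph$ is consistent, so $\subseteq\mhyphen Rep(\aGraph,\aRestrictionSet) = \{\aGraph\}$, and by the invariant $\aGraph$ is the unique subset repair of the original input.

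For termination and complexity, note that whenever the loop body executes, $(\aGraph,\aRestrictionSet)$ is inconsistent, so some node of $\aGraph$ violates some $\aFormula \in \aRestrictionSet$, forcing $V_\bot \neq \emptyset$ and hence $|V_\aGraph|$ to strictly decrease. This caps the number of iterations by $|V_\aGraph|$. Inside each iteration, checking consistency and computing $V_\bot$ reduces to evaluating each $\aFormula \in \aRestrictionSet$ on $\aGraph$ and taking the complement of the resulting node set, which is polynomial in $|\aGraph| + |\aRestrictionSet|$ by Theorem~\ref{teo:PGxpath}. Combining both bounds yields overall polynomial runtime. The only real content of the argument is the simultaneous-removal step, and since that reduces immediately to iterated applications of Theorem~\ref{procedureCorrectness} together with monotony, I do not anticipate any serious obstacle.
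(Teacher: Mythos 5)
Your proposal is correct and follows essentially the same route as the paper: uniqueness via closing consistent subgraphs under union using Lemma~\ref{monotony}, correctness of Algorithm~\ref{algorithm:subset} via Theorem~\ref{procedureCorrectness}, and the polynomial bound via Theorem~\ref{teo:PGxpath} together with the $|V_\aGraph|$ cap on iterations. Your peeling argument for the simultaneous removal of all of $V_\bot$ (each remaining violator still violates in the shrunken graph, by monotony) is a detail the paper glosses over but is exactly the intended justification.
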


Note that Theorem~\ref{teo:subsetPositiveIntractable} readily implies that the problem $\exists$\textsc{SUBSET-REPAIR} is $\textsc{NP-hard}$ for $\mathcal{L} = \Gposregxpath$ if we allow both node and path expressions, even when considering the data complexity of the problem.

We conclude this section by noticing that even though the problem of finding subset repairs turns out to be \textsc{NP-complete} for a fragment of quite simple path expressions, node expressions on the other hand have substantial expressive power, as we show in our examples. 
These examples actually use node expression negation for building the `implications', but they can be rewritten to avoid this by taking into account some extra assumptions on the database.

The examples use node expression negation to make `typed restrictions' over the data-graph. It would be reasonable to assume that a constraint of the form $\comparacionCaminos{\down_\esLabel{type}}$ will be present in the database, and we could exploit this one to modify the other restrictions: instead of writing $type(\alpha) \implies restriction$ we could define $\bigvee_{\beta \neq \alpha}type(\beta) \cup restriction$ (supposing that $\Sigma_n$ is finite), which, given the previous `type' constraint, will retain the original semantics of the implication. Also, since the set of node labels is fixed, this will only increment the expression length by a constant factor. 

Finally, we remark that the algorithm for $\Gposregxpath$ node expressions will work as long as the expressions satisfy monotony. This means that we could add more monotone tools to the language and the procedure would still be correct and run in polynomial type, assuming of course that the expressions from the new language can be evaluated in polynomial time given the length of the expressions and the data-graph.

\subsection{Superset repairs}

We start this section by noticing that, in contrast with the subset repair problem, not every data-graph $\aGraph$ admits a superset repair with respect to a set of \Gregxpath expressions:

\begin{remark}\label{remark:20}

    There exists a data-graph $\aGraph$ and a set $\aRestrictionSet$ of \Gposregxpath expressions such that there is no superset repair of $\aGraph$ with respect to $\aRestrictionSet$.

\end{remark}

For example, consider $\aGraph = (\{v\}, L, D)$ where $D(v) = c$, $L(v,v)=\emptyset$ and the set $\aRestrictionSet$ with only one node expression $\phi = [c^{\neq}]$. Every superset $\aGraph'$ of $\aGraph$ contains $v$ with data value $c$, which implies that $v \notin \semantics{\phi}_{\aGraph'}$, and thus $\aGraph'$ is not consistent with respect to $\aRestrictionSet$. 

In order to study the complexity of finding superset repairs, we define the following decision problem:

\begin{center}

\fbox{\begin{minipage}{30em}
  \textsc{Problem}: \textsc{$\exists$SUPERSET-REPAIR}

\textsc{Input}: A data-graph $\aGraph$ and a set $\aRestrictionSet$ of $\mathcal{L}$ expressions.

\textsc{Output}: Does $\aGraph$ have a superset repair with respect to $\aRestrictionSet$?
\end{minipage}}

\end{center}

When fixing \aRestrictionSet the problem remains intractable in general. 
\begin{theorem}\label{teo:supersetIndecidible}
There is a set of labels $\Sigma_e$ such that 
\decisionProblem{$\exists$SUPERSET-REPAIR} is undecidable for a fixed set $\aRestrictionSet$ of path expressions from \GRegXPath.
\end{theorem}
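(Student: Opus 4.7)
The plan is to reduce from the halting problem of a fixed universal Turing machine $U$ to $\exists$\textsc{SUPERSET-REPAIR} with a fixed set $\aRestrictionSet$ of \GRegXPath path expressions. Given an input $w$ to $U$, I would construct a data-graph $\aGraph_w$ whose consistent finite supersets correspond exactly to halting computations of $U$ on $w$; since in the finite setting any consistent extension can be greedily maximized, the existence of a superset repair coincides with the existence of any finite consistent superset of $\aGraph_w$ with respect to $\aRestrictionSet$.

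The graph $\aGraph_w$ would have two regions. The first, independent of $w$, is a ``transition table'' gadget that encodes the transitions of $U$ via labelled edges among designated nodes whose data values name the involved states and tape symbols. The second encodes the initial configuration as a linear row of cell-nodes carrying the corresponding tape symbols together with an anchor marking the head position and the initial state $q_0$. Any consistent superset would be forced to extend this initial row into a two-dimensional grid of cell-nodes whose successive rows represent consecutive configurations of $U$ on $w$.

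The fixed set $\aRestrictionSet$ would consist of implication-style constraints $\aPath \entoncesCamino \aPathb = \aPathb \pathUnion \pathComplement{\aPath}$, whose very expressibility hinges on the path-complement operator $\pathComplement{\cdot}$ available in \GRegXPath but absent in \Gposregxpath. These constraints would enforce: (i) every non-halting configuration-cell has a successor cell in the next row attached to the same column; (ii) cells away from the head preserve their data value across consecutive rows; (iii) the head-position cell of each row is updated according to some entry of the transition-table gadget, verified by nested path expressions $\comparacionCaminos{\cdot}$ together with data tests $\esDatoIgual{\cdot}$ that navigate from two consecutive rows into the transition table and check that the recorded symbol/state triples match some stored transition; and (iv) every terminal row must carry the accepting-state marker, preventing the grid from stopping prematurely at a non-halting configuration.

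The main obstacle will be keeping $\aRestrictionSet$ genuinely independent of $w$: all $U$-specific information must live in the transition-table region of $\aGraph_w$ itself, while the constraints use only generic navigation patterns. Once this delicate separation is in place, correctness is straightforward. If $U$ halts on $w$ in $T$ steps, the explicit grid of $T+1$ configuration-rows extended to $\aGraph_w$ is a consistent finite superset, and any maximal consistent superset containing it is a superset repair. Conversely, any consistent finite superset must encode a faithful run of $U$ on $w$: constraints (i)--(iii) force valid transitions row by row, and constraint (iv) together with finiteness force the chain to terminate in a halting state, so $U$ must halt on $w$. The undecidability of the halting problem then transfers to $\exists$\textsc{SUPERSET-REPAIR}, completing the proof.
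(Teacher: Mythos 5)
Your strategy is genuinely different from the paper's: the paper does not build a Turing-machine gadget at all, but instead reduces from the superset-CQA problem for word constraints and non-recursive RPQs, already known to be undecidable from \cite{barcelo2017data}, translating each word constraint into an implication-style \Gregxpath path expression and converting ``$(u,v)$ is a certain answer to $q$'' into ``no superset repair exists'' by adding a fresh edge label $\downarrow_{\esLabel{x}}$ between $u$ and $v$ together with the single extra constraint $\downarrow_{\esLabel{x}} \entoncesCamino \pathComplement{\translationRPQsToGXPath{q}}$. That route outsources all the combinatorial difficulty to the cited result; yours tries to recreate it from scratch, which is legitimate in principle but places the full burden of a finite-model-style TM encoding on you.

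There is a concrete gap in your sketch: periodic non-halting computations. A superset repair exists iff \emph{some} finite consistent superset exists (your ``greedily maximized'' should read ``minimized''---repairs are minimal consistent supersets---but the equivalence itself is fine). Now suppose $U$ on $w$ loops forever within bounded space, so its set of reachable configurations is finite and eventually periodic. Then one can build a finite ``lasso'': a chain of configuration-rows whose successor edges eventually point back to an earlier row. Every cell of a non-halting row has a successor row satisfying your local-consistency constraints (i)--(iii), because each row genuinely is the successor configuration of its predecessor; and your constraint (iv) is vacuously satisfied because the lasso has no terminal row. Hence a finite consistent superset---and therefore a superset repair---exists even though $U$ does not halt on $w$, breaking the ``repair exists $\Rightarrow$ halts'' direction. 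To repair this you need a \emph{positive} reachability constraint forcing the initial anchor to reach a halting-state cell via the successor relation (so that non-halting periodic runs are inconsistent rather than merely non-terminal), and you must then argue---using functionality constraints built from $\pathComplement{\cdot}$ and $\cap\,\overline{\epsilon}$ patterns---that no finite superset can cheat by merging cells or adding spurious successor edges so as to fabricate such a path without faithfully simulating $U$. None of this is ``straightforward''; it is precisely the content of the undecidability proof your reduction is implicitly trying to reproduce.
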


\begin{proof}
To prove undecidability, we reduce the superset-CQA problem \cite [Theorem 4]{barcelo2017data} to our problem. Since the problem is undecidable, so will be ours.

In \cite [Theorem 4]{barcelo2017data}, the following \decisionProblem{CQA problem} is proven to be undecidable for a particular choice of $\Sigma, q, \Gamma$: given a finite alphabet $\Sigma$, a non-recursive RPQ query $q$, a set of word constraints $\Gamma$, a graph $\aGraph$, and a tuple $(\aNode,\aNodeb)$ of nodes of $\aGraph$, decide whether $(\aNode,\aNodeb)\in \semantics{q}_{\aGraph'}$ for every $\aGraph' \in \supseteq$-$Rep(\aGraph, \Gamma)$ (i.e., there is a $q$-labeled path from $\aNode$ to $\aNodeb$ in all supersets of $\aGraph$ satisfying the constraints $\Gamma$).


For ease of reference, we now provide the required definitions.
Given an alphabet $\Sigma$, regular path queries over $\Sigma$, noted \RPQ, are defined by the following grammar: 
\begin{equation} \label{eq:rpq}
    \anRPQ = \epsilon \mid \aLabel 
    \mid \anRPQ.\anRPQ  \mid \anRPQ \cup \anRPQ \mid \anRPQ^* 
\end{equation} 

We say that an RPQ is \emph{non-recursive} if it does not mention the Kleene-star.

The semantics for \RPQ formulas over a graph $\aGraph$ with edges labeled in $\Sigma$ is as follows:

\begin{itemize}[leftmargin=.6in,label={}]
\itemsep0em 
  \item $\semantics{\epsilon}_\aGraph = \{(v,v) \mid v \in V\}$
  \item $\semantics{\aLabel}_\aGraph = \{(v,w) \mid v,w \in V, \aLabel \in L(v,w)\}$  
  \item $\semantics{\anRPQ.\anRPQ'}_\aGraph = \{(v,x) \mid \exists w \in \aGraph \mbox{ s.t. } (v,w) \in \semantics{\anRPQ}_\aGraph \mbox{ and } (w,x) \in \semantics{\anRPQ'}_\aGraph\}$
  \item $\semantics{\anRPQ \cup \anRPQ'}_\aGraph = \semantics{\anRPQ}_\aGraph \cup \semantics{\anRPQ'}_\aGraph$
  \item $\semantics{\anRPQ^{*}}_\aGraph = \{ (v,w) \mid (v,w) \mbox{ belongs to the reflexive-transitive closure of $\semantics{\anRPQ}_\aGraph$}\}$
\end{itemize}

A word constraint is defined as a formula $\aPath_1 \subseteq \aPath_2$ where $\aPath_i$ is a word formula for $i \in \{1,2\}$; that is, a finite conjunction of labels. Intuitively, a graph $\aGraph$ satisfies a word constraint $\aPath_1 \subseteq \aPath_2$ if $\semantics{\aPath_1}_\aGraph \subseteq \semantics{\aPath_2}_\aGraph$. 

We note that a word constraint $\alpha = \aLabel_1 \dots \aLabel_m \subseteq \aLabel'_1 \dots \aLabel'_n$ is equivalent to a $\GRegXPath$ formula $\translationRPQsToGXPath{\aPath}=\down_{\aLabel_1} \dots \down_{\aLabel_m} \entoncesCamino \down_{\aLabel'_1} \dots \down_{\aLabel'_n}$ (that is, $\down_{\aLabel'_1} \dots \down_{\aLabel'_n} \pathUnion\pathComplement{\down_{\aLabel_1} \dots \down_{\aLabel_m}}$). For a non-recursive RPQ $q$ there is also a straightforward translation to an equivalent $\GRegXPath$ formula $\translationRPQsToGXPath{q}$ that preserves its semantics, since a non-recursive RPQ is a finite union of word formulas.

We define $\aRestrictionSet' = (\bigcup\limits_{\aPath \in \Gamma} \{\translationRPQsToGXPath{\aPath}\}) \cup \{\aPathb \entoncesCamino \pathComplement{\translationRPQsToGXPath{q}}\}$, where $\aPathb$ is the path expression $\downarrow_\esLabel{x}$. We consider the set of edge labels $\Sigma_e = \{\down_{\aLabel} \mid \aLabel \in \Sigma\} \sqcup \{\downarrow_{\esLabel{x}}\}$. Given $(\aGraph, (\aNode,\aNodeb))$,
let $\hat{\aGraph}$ be the graph $\aGraph$ with edges $\aLabel$ transformed into $\downarrow_{\aLabel}$, and augmented with an edge $\downarrow_{\esLabel{x}}$ such that $\aNode$ is connected with $\aNodeb$ via $\downarrow_\esLabel{x}$. Thus, it follows that:
\begin{align*}
& (\aGraph, (\aNode,\aNodeb)) \notin \supseteq\text{-}CQA(q, \Gamma) \\
\iff &  \exists \aGraph' \in \, \supseteq\text{-}Rep(\hat{\aGraph}, \Gamma) : (\aNode,\aNodeb) \not \in \semantics{\translationRPQsToGXPath{q}}_{\aGraph'}  \\
\iff & \exists \aGraph' \in \, \supseteq\text{-}Rep(\hat{\aGraph}, \aRestrictionSet') \\
\iff & \mbox{\decisionProblem{$\exists$SUPERSET-REPAIR($\hat{\aGraph}$, $\aRestrictionSet'$)} $ =$ \true}
\end{align*}
\end{proof}

Notice that this proof actually shows something stronger:

 \begin{observation}
 The problem is undecidable even when only considering restrictions from the fragment of \GRegXPath that has no node expressions and whose path expressions are of the form:

 \begin{center}
     $\aPath, \aPathb = \epsilon \mid \aLabel \mid \aPath . \aPathb 
     \mid \aPath \pathUnion \aPathb \mid \pathComplement{\aPath}$ 
 \end{center} 

 \end{observation}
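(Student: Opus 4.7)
The plan is to verify the claim by a direct inspection of the set of constraints $\aRestrictionSet'$ constructed in the reduction of Theorem~\ref{teo:supersetIndecidible}, showing that every expression already lies in the restricted fragment. No new reduction is needed: since the reduction establishing undecidability works verbatim, it suffices to check that none of its constraints make use of node expressions, Kleene star, intersection, or inverse edges.

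First, I would examine the translations $\translationRPQsToGXPath{\aPath}$ of word constraints. A word constraint has the shape $\aLabel_1 \dots \aLabel_m \subseteq \aLabel'_1 \dots \aLabel'_n$ and was translated as $\down_{\aLabel'_1} \dots \down_{\aLabel'_n} \pathUnion \pathComplement{\down_{\aLabel_1} \dots \down_{\aLabel_m}}$. This expression uses only labels, composition, union, and complement, all permitted by the fragment of the observation.

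Next, I would handle the formula $\aPathb \entoncesCamino \pathComplement{\translationRPQsToGXPath{q}}$, which by definition unfolds to $\pathComplement{\translationRPQsToGXPath{q}} \pathUnion \pathComplement{\down_\esLabel{x}}$. The critical sub-step is to argue that $\translationRPQsToGXPath{q}$ itself lives in the fragment, and this is exactly where the non-recursiveness of $q$ is used. Indeed, the RPQ grammar~\eqref{eq:rpq} without $\cdot^*$ yields only $\epsilon$, labels, composition, and union, and the natural translation into $\GRegXPath$ preserves this structure symbol by symbol. Taking complement once more, and forming a union with $\pathComplement{\down_\esLabel{x}}$, keeps us strictly inside labels, $\epsilon$, composition, union, and complement.

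Since both types of constraints produced by the reduction fit the restricted grammar, and since no node expression nor any other operator outside the listed ones is introduced anywhere in the construction, the reduction witnesses the same undecidability result for this smaller fragment. The only potential pitfall is to overlook that $\translationRPQsToGXPath{q}$ could secretly require star or node tests for some non-recursive $q$; this is dispatched by observing that a non-recursive RPQ is just a finite union of words, so its translation is a finite union of finite compositions of labels, which is evidently within the fragment. Hence the observation follows immediately from Theorem~\ref{teo:supersetIndecidible}.
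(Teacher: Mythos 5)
Your proposal is correct and matches the paper's intended justification: the observation is obtained precisely by inspecting the set $\aRestrictionSet'$ built in the proof of Theorem~\ref{teo:supersetIndecidible} and noting that the translated word constraints and the formula $\pathComplement{\translationRPQsToGXPath{q}} \pathUnion \pathComplement{\downarrow_\esLabel{x}}$ use only $\epsilon$, labels, composition, union, and complement, with non-recursiveness of $q$ ruling out the Kleene star. Nothing further is needed.
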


As for the particular case of \decisionProblem{$\exists$SUPERSET-REPAIR} that considers restriction sets only consisting of node-expressions, we have the following result:


\begin{theorem}\label{teo:undecidabilityNodeExpressions}
There is a set of labels $\Sigma_e$ such that 
\decisionProblem{$\exists$SUPERSET-REPAIR} is undecidable, even when $\aRestrictionSet$ is a set of node expressions (but it isn't fixed). 
\end{theorem}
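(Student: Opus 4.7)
The plan is to recycle the reduction of Theorem~\ref{teo:supersetIndecidible} almost verbatim, transforming each path constraint it produces into an equivalent node expression. I would start from the same superset-CQA instance $(\aGraph,(\aNode,\aNodeb))$ with $\Sigma$, $q$ and $\Gamma$ as given by \cite[Theorem~4]{barcelo2017data}, and construct the same augmented graph $\hat{\aGraph}$ (equipped with the distinguished edge $\downarrow_\esLabel{x}$ from $\aNode$ to $\aNodeb$) exactly as in that proof.

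The key step is the following equivalence, which packages an arbitrary $\GRegXPath$ path constraint as a node expression via the nest and complement operators: for any path expression $\aPath$, a data-graph $\aGraph'$ satisfies the path constraint $\aPath$ (i.e.\ $\semantics{\aPath}_{\aGraph'}=V_{\aGraph'}\times V_{\aGraph'}$) if and only if every node $u$ of $\aGraph'$ satisfies the node expression $\lnot\comparacionCaminos{\pathComplement{\aPath}}$. Unfolding definitions, $u\in\semantics{\lnot\comparacionCaminos{\pathComplement{\aPath}}}_{\aGraph'}$ exactly when no $v$ satisfies $(u,v)\in\semantics{\pathComplement{\aPath}}_{\aGraph'}$, i.e.\ when $(u,v)\in\semantics{\aPath}_{\aGraph'}$ for every $v$; requiring this at every $u$ is precisely the path-constraint condition.

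I would apply this translation termwise to the set $\aRestrictionSet'$ of Theorem~\ref{teo:supersetIndecidible}, yielding a set $\aRestrictionSet''$ of $\GRegXPath$ node expressions (in particular, the implication $\downarrow_\esLabel{x}\entoncesCamino\pathComplement{\translationRPQsToGXPath{q}}$ becomes a single nested-complement node expression). Since the translation is equivalent on every data-graph, we obtain $\supseteq$-$Rep(\hat{\aGraph},\aRestrictionSet'')=\supseteq$-$Rep(\hat{\aGraph},\aRestrictionSet')$, and the chain of biconditionals closing the proof of Theorem~\ref{teo:supersetIndecidible} transfers verbatim, yielding undecidability of \decisionProblem{$\exists$SUPERSET-REPAIR} when the restriction set consists of node expressions only.

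The main obstacle I anticipate is essentially technical bookkeeping: verifying that the equivalence between each path constraint and its nested-complement counterpart is preserved uniformly across all possible supersets $\aGraph'\supseteq\hat{\aGraph}$ (including any newly added nodes), and double-checking that each translated expression indeed lies within $\GRegXPath$ (which follows since $\pathComplement{\cdot}$ and $\comparacionCaminos{\cdot}$ are native operators of the language). No conceptual ideas beyond this nesting-under-complement trick should be required.
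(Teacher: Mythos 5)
Your proof is correct and takes a genuinely different --- and far more economical --- route than the paper's. The paper proves Theorem~\ref{teo:undecidabilityNodeExpressions} from scratch, by reducing \decisionProblem{Finite 2RPQ Entailment from $\alcoif$ KBs}: it builds a data-graph of ``partial'' and ``total'' nodes whose data values encode sets of concept names, together with a family of node constraints $\psi_1,\dots,\psi_{10}$ simulating the TBox and blocking the query. You instead observe that, under the paper's definition of consistency, a path constraint $\aPath$ (i.e.\ the requirement $\semantics{\aPath}_{\aGraph'}=V_{\aGraph'}\times V_{\aGraph'}$) is equivalent on \emph{every} data-graph $\aGraph'$ to the node constraint $\lnot\comparacionCaminos{\pathComplement{\aPath}}$: unfolding the semantics of $\lnot$, $\comparacionCaminos{\cdot}$ and $\pathComplement{\cdot}$ gives $v\in\semantics{\lnot\comparacionCaminos{\pathComplement{\aPath}}}_{\aGraph'}$ iff $(v,w)\in\semantics{\aPath}_{\aGraph'}$ for all $w\in V_{\aGraph'}$, so imposing it at every node is exactly the path-constraint condition. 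Since the equivalence holds graph by graph (in particular on every superset of $\hat{\aGraph}$, whatever nodes or edges it adds), the two constraint sets have identical consistent supersets, hence identical superset repairs, and the chain of biconditionals from Theorem~\ref{teo:supersetIndecidible} transfers verbatim. The translated formulas are legal \Gregxpath node expressions, so I see no gap.

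One consequence deserves emphasis. The set $\aRestrictionSet'$ of Theorem~\ref{teo:supersetIndecidible} is \emph{fixed} (it depends only on the fixed $q$ and $\Gamma$ of \cite[Theorem 4]{barcelo2017data}), so your translated set $\aRestrictionSet''$ is fixed as well; your argument therefore yields undecidability for a \emph{fixed} set of \Gregxpath node expressions. That is strictly stronger than the theorem as stated (which explicitly allows the set to vary with the input) and is precisely the question the paper lists as open in its conclusions. Either the authors overlooked the $\lnot\comparacionCaminos{\pathComplement{\cdot}}$ encoding of universal path constraints, or they have in mind some restriction on node-expression constraints that the formal definitions do not capture. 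As written, your proof settles both the stated theorem and the open variant; what the paper's longer construction buys instead is an undecidability witness whose node constraints genuinely exploit data tests and nesting rather than merely repackaging path constraints under a complement.
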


\begin{proof}
We reduce the problem \decisionProblem{Finite 2RPQ Entailment from $\alcoif$ KBs} to our problem. First, we give some background and the required definitions. 

Let $N_C$, $N_R$ and $N_I$ be countably infinite disjoint
sets representing \textit{concept names} (unary relations), \textit{role names} (binary relations)
and \textit{individual names} (constants), respectively. 
An \textit{assertion} is an expression of the form $C(a)$ or $r(a,b)$, where $a,b$ are individual names, $C$ is a concept name, and $r$ is a role name. 
A \textit{concept} is any of the following expressions:
\begin{equation}
    A, B = \top \mid C \mid \neg A 
     \mid A \sqcap B \mid A \sqcup B \mid \forall r. A \mid \exists r. A \mid \{a\}
\end{equation}
where $C\in N_C$, $r\in N_R$ and $a\in N_I$. 
A concept of the form $\{a\}$ is called a \textit{nominal}. 

An $\alcoif$ \textit{axiom} is a concept inclusion $A\sqsubseteq B$ or a functionality restriction ${Fun}(r)$ with $r$ a role concept. 
A \textit{knowledge base} (KB from now on) is a pair $\aKB=(\aTBox,\anABox)$ where $\aTBox$ (the \textit{TBox}) is a finite set of $\alcoif$ axioms and $\anABox$ (the \textit{ABox}) is a finite set of assertions. We denote $CN(\aKB)$ to the set of concept names in $\aKB$, $ind(\aKB)$ to the set of individuals in $\aKB$ and $nom(\aKB)$ to the set of nominals in $\aKB$. 

An \textit{interpretation} is a pair $\aKBModel=(\aKBModelDomain,\aKBModelDot)$ where $\aKBModelDomain$ is a non-empty set called the \textit{domain} of $\aKBModel$, and $\aKBModelDot$ is a mapping called the \textit{interpretation function}, that assigns as follows:
\begin{itemize}
    \item $C^\mathcal{I} \subseteq \Delta^\mathcal{I}$ for every concept name $C$;
    \item $r^\mathcal{I} \subseteq \Delta^\mathcal{I} \times \Delta^\mathcal{I}$ for every role name $r$;
    \item $a^\mathcal{I}\in \Delta^\mathcal{I}$ for every individual name $a$. 
\end{itemize}

We say that $\aKBModel$ is a \textit{finite interpretation} if $\aKBModelDomain$ is finite. For convenience, we assume $\aKBModelDomain\cap N_I=\emptyset$ for every $\aKBModel$.
The semantics for the remaining concepts can be extended as follows: 
 \begin{align*}
   \top^\aKBModel &= \aKBModelDomain \\
   (\neg A)^\aKBModel&=\aKBModelDomain\setminus A^\aKBModel\\
   (A \sqcap B)^\aKBModel&=A^\aKBModel\cap B^\aKBModel\\
   (A \sqcup B)\aKBModel &= A^\aKBModel\cup B^\aKBModel\\
   (\forall r.A)^\aKBModel&=\{ u\in \aKBModelDomain:\forall v. (u,v)\in r^\aKBModel \rightarrow v\in A^\aKBModel \}\\
   (\exists r.A)^\aKBModel&=\{ u\in \aKBModelDomain:\exists v. (u,v)\in r^\aKBModel \wedge\,  v\in A^\aKBModel \}\\
  \{a\}^\aKBModel&=\{a^\aKBModel\}
 \end{align*}

An interpretation $\aKBModel$ is a \textit{model} of $\aKB=(\aTBox,\anABox)$ if all of the following assertions hold:
\begin{itemize}
    \item  $A^\aKBModel\subseteq B^\aKBModel$ for every concept inclusion $A \sqsubseteq B$ in $\aTBox$;
    \item for every $Fun(r)$ in $\aTBox$ is true that $(u,v),(u,w)\in r^\aKBModel$ implies $v=w$, for all $u,v,w\in \aKBModelDomain$;
    \item $a^\aKBModel\in C^\aKBModel$ for every assertion $C(a)$ in $\anABox$;
    \item $(a^\aKBModel,b^\aKBModel)\in r^\aKBModel$ for every assertion $r(a,b)$ in $\anABox$.
\end{itemize}

The signature of any KB structure can be extended with new unary symbols to obtain a normal form for the concept inclusions \cite{DBLP:journals/corr/abs-1808-03130}, hence we assume without loss of generality that the TBox only contains concept inclusions of the form
$$\bigsqcap_i A_i \sqsubseteq \bigsqcup_j B_j, \quad A\sqsubseteq \forall r. B, \quad A\sqsubseteq \exists r. B, \quad A\equiv \{a\},$$
where $A_i,B_j,A,B$ are concept names and $\{a\}$ is a nominal. 

A 2RPQ is defined analogously as in (\ref{eq:rpq}), with the difference that we also allow to traverse edges backwards, i.e., we add $r^-$ to the syntax.

The problem \decisionProblem{Finite 2RPQ Entailment from $\alcoif$ KBs} asks, given a KB $\aKB$ and a 2RPQ $q$, whether $\aKBModel\models \exists x. q(x,x)$ is true for every finite model $\aKBModel$ of $\aKB$, or not. 
Notice that this problem is quite similar to superset-CQA, with the exception that the underlying structure given by the ABox is not just a graph database, but is a graph that allows multiple labels on edges and nodes, and the set of restrictions is now given by the TBox. In other words, we ask if the query is valid on any model that ``repairs'' the KB, seen as a partial representation of a graph-like model. \decisionProblem{Finite 2RPQ Entailment from $\alcoif$ KBs} is undecidable~\cite{rudolph2016undecidability}, even for a finite set $\Sigma$ of role names, which is the particular case of the problem we consider from now on.

For every KB $\aKB$ and 2RPQ $q$ we will construct a data-graph $\aDBforKB$ and a set of node constraints $\aRforKB$ such that $\aDBforKB$ has a superset-repair with respect to $\aRforKB$ if and only if there is a finite \textit{counter-model} for $\aKB$ and $\exists x.q(x,x)$, that is, a finite model of $\aKB$ that does not satisfy the query. 

We fix the set of edge labels to be $\Sigma_e=\{\down_r:r\in \Sigma\} \cup \{\down_\esLabel{total}\}$, and define the set of data values for $\aDBforKB$ as $\Sigma_n=\{P_I:I\subseteq CN(\aKB)\cup nom(\aKB)\} \cup\{T_I:I\subseteq CN(\aKB)\cup nom(\aKB)\}$. Intuitively, we will use the data values $P_I$ as a preliminary description of the concepts used in the ABox of $\aKB$, and the data values $T_I$ as a total description of the concepts used in a model of $\aKB$. 
We call \textit{partial node} to a node having data value $P_I$, and \textit{total node} to a node having data value $T_I$. In general, we assume that the indices $I$, $J$ vary in $CN(\aKB)\cup nom(\aKB)$. We say that a node \textit{contains} a concept name $A$ if $A$ lies in the set $I$ that corresponds to the index of its data value. For $\aKB=(\aTBox,\anABox)$ let us consider $\aDBforKB=(V,L,D)$ where:
\begin{itemize}
    \item $V=ind(\aKB)$;
    \item $L(a,b)=\{\downarrow_r:r(a,b)\in\anABox\}$;
    \item $D(a)=P_I$, where $I=\{C\in CN(\aKB):C(a)\in\anABox\}\cup \{ \{a\} \}$, if $\{a\} \in nom(\aKB)$, or simply $I=\{C\in CN(\aKB):C(a)\in\anABox\}$ if otherwise. 
\end{itemize}

We now present the formulas in $\aRforKB$, together with a short description of the restriction we want to model in each case:
\begin{enumerate}[label=(\roman*)]
    \item\label{eq:assign} The conjunction of the following formulas
    
    \begin{align*}
    	&\bigwedge_I (P_I^{\,=} \Rightarrow  \langle \down_\esLabel{total} \big[\bigvee_{I\subseteq J}  T_J^{\,=}\big] \rangle)\\
    	&\neg\langle \down_\esLabel{total}^-\down_\esLabel{total} \cap\, \overline{\epsilon} \rangle \\
    	&\langle\down_\esLabel{total}\rangle \Rightarrow  \bigvee_{J}  P_J^{\,=}
    \end{align*}
    We denote this conjunction by $\psi_1$. This formula states that every partial node must be connected to precisely one total node through the relation $\down_\esLabel{total}$, and every total edge in the data-graph must be an outgoing edge from a partial node. Furthermore, the middle part of the formula asserts that partial nodes cannot have more than one outgoing total edge. A total node connected to a partial node through the relation $\down_\esLabel{total}$ is called \textit{the total counterpart} of that partial node. 
    \item\label{eq:ci1} For every $\bigsqcap_i A_i \sqsubseteq \bigsqcup_j B_j$ in $\aTBox$ consider: 
    $$\psi_2=\bigvee_{\forall i. \,A_i\in I} T_I^{\,=} \Rightarrow \bigvee_{\exists j.\,B_j\in I} T_J^{\,=},$$
    which states that the data value of every total node must contain all the concept-name information to describe a model of $\aKB$.
    \item\label{eq:ci2} 
    For every $A\sqsubseteq \forall r. B$ in $\aTBox$ consider: 
    $$\psi_3=\neg \langle \big[\bigvee_{A\in I} T_I^{\,=} \big] (\epsilon\,\cup\down_\esLabel{total}^-) \downarrow_r (\epsilon\,\cup\down_\esLabel{total}) \big[\bigvee_{B\not\in J} T_J^{\,=}\big] \rangle,$$
    which states that, for every total node containing the concept name $A$ such that either it has an outgoing edge $\down_r$ or the node itself is the total counterpart of a partial node having an outgoing edge $\down_r$, is not possible that the total node meets another total node that does not contain the concept $B$ through a path in $(\epsilon\,\cup\down_\esLabel{total}^-) \down_r (\epsilon\,\cup\down_\esLabel{total})$. One way to intuitively interpret this, is to think of every outgoing edge from a partial node also as an outgoing edge from its total counterpart, and vice versa. From now on, ``outgoing edge'' will have this interpretation. 
    \item\label{eq:ci3} For every $A\sqsubseteq \exists r. B$ in $\aTBox$ consider: 
    $$\psi_4=\bigvee_{A\in I} T_I^{\,=} \Rightarrow \langle (\epsilon\,\cup\down_\esLabel{total}^-)\down_r(\epsilon\,\cup\down_\esLabel{total})\big[\bigvee_{B\in J}  T_J^{\,=}\big] \rangle,$$
    which states that every total node containing the concept name $A$ must have an outgoing edge $\down_r$ joining it with another total node containing concept $B$. 
    \item\label{eq:ci4} For every $A\equiv\{a\}$ in $\aTBox$, we consider the conjunction of the following formulas: 
    \begin{align*}
    	&\psi_5=\bigvee_{\{a\}\in I} P_I^{\,=}\Rightarrow \langle\down_\esLabel{total}\big[\bigvee_{A\in J} T_J^{\,=}\big] \rangle \\
    	&\psi_6=\bigvee_{A\in J} T_J^{\,=}\Rightarrow \langle\down_\esLabel{total}^-\big[\bigvee_{\{a\}\in I}  P_I^{\,=}\big] \rangle
    \end{align*} 
    which states that a total node contains concept $A$ if and only if such node is the total counterpart of a partial node containing the nominal $\{a\}$.
    \item\label{eq:fun} For every $Fun(r)$ in $\aTBox$, we consider the conjunction of the following formulas: 
    \begin{align*}
    	& \psi_7=\neg\langle\big[\bigvee_I T_I^{\,=}\big]\down_r^-\down_r\big[\bigvee_I T_I^{\,=}\big]\cap\, \overline{\epsilon}\rangle \\
    	& \psi_8=\neg\langle\big[\bigvee_I T_I^{\,=}\big]\down_r^-\down_\esLabel{total}^-\down_r(\epsilon\,\cup\down_\esLabel{total})\big[\bigvee_I T_I^{\,=}\big]\cap\, \overline{\epsilon}\rangle \\
    	& \psi_9=\neg\langle\big[\bigvee_I T_I^{\,=}\big]\down_\esLabel{total}^-\down_r^-\down_r(\epsilon\,\cup\down_\esLabel{total})\big[\bigvee_I T_I^{\,=}\big]\cap\,
    	\overline{\epsilon}\rangle
    \end{align*}
    
    Notice that these three formulas could be unified into a single one. However, for simplicity we write them separately to help us clarify which patterns (and how) we wish to avoid in a valid repair, which are depicted in Figure~\ref{fig:patterns_to_avoid}. 

\begin{figure}[H]
\centering
\begin{tabular}{ l | l | l | l | l }
	(a) & (b) & (c) & (d) & (e) \\
	\begin{tikzpicture}%
		[>=stealth,
		shorten >=1pt,
		node distance=1cm,
		on grid,
		auto,
		every state/.style={draw=black!60, fill=black!5, very thick}
		]
		\node [circle,draw,fill] (mid)                  { };
		\node  [rectangle,fill,draw] (upper) [above right=of mid] { };
		\node [rectangle,draw] (right) [right=of mid] { };
		\node [circle] (upper2) [above right=of upper] { };
		
		\path[-latex]
		(mid)   edge  node[swap]                      {$r$} (right)
		(mid)   edge  node                      {$r$} (upper)
		;
	\end{tikzpicture}
	&
	\begin{tikzpicture}%
		[>=stealth,
		shorten >=1pt,
		node distance=1cm,
		on grid,
		auto,
		every state/.style={draw=black!60, fill=black!5, very thick}
		]
		\node [circle,draw] (mid)                  { };
		\node  [rectangle,fill,draw] (upper) [above right=of mid] { };
		\node [rectangle,draw] (right) [right=of mid] { };
		\node [rectangle,draw,fill] (upper2) [above right=of upper] { };
		\node [circle] (right2) [above right=of right] { };
			
		\path[-latex]
		(mid)   edge  node[swap]                      {$r$} (right)
		(mid)   edge  node                      { } (upper)
		(upper)   edge  node                      {$r$} (upper2)
		;
	\end{tikzpicture}
	&
	\begin{tikzpicture}%
		[>=stealth,
		shorten >=1pt,
		node distance=1cm,
		on grid,
		auto,
		every state/.style={draw=black!60, fill=black!5, very thick}
		]
		\node [circle,draw] (mid)                  { };
		\node  [rectangle,draw,fill] (upper) [above right=of mid] { };
		\node [circle,draw] (right) [right=of mid] { };
		\node [rectangle,draw,fill] (upper2) [above right=of upper] { };
		\node [rectangle,draw] (right2) [above right=of right] { };
			
		\path[-latex]
		(mid)   edge  node[swap]                      {$r$} (right)
		(mid)   edge  node                    { } (upper)
		(upper)   edge  node                      {$r$} (upper2)
		(right)   edge  node                    { } (right2)
		;
	\end{tikzpicture}
	&
	\begin{tikzpicture}%
		[>=stealth,
		shorten >=1pt,
		node distance=1cm,
		on grid,
		auto,
		every state/.style={draw=black!60, fill=black!5, very thick}
		]
		\node [circle,draw,fill] (mid)                  { };
		\node  [circle,draw] (upper) [above right=of mid] { };
		\node [rectangle,draw] (right) [right=of mid] { };
		\node [rectangle,draw,fill] (upper2) [above right=of upper] { };
		\node [circle] (right2) [above right=of right] { };
		
		\path[-latex]
		(mid)   edge  node[swap]                      {$r$} 
		(right)
		(mid)   edge  node                    {$r$} (upper)
		(upper)   edge  node                      {} (upper2)
		;
	\end{tikzpicture}
	&
	\begin{tikzpicture}%
		[>=stealth,
		shorten >=1pt,
		node distance=1cm,
		on grid,
		auto,
		every state/.style={draw=black!60, fill=black!5, very thick}
		]
		\node [circle,draw,fill] (mid)                  { };
		\node  [circle,draw] (upper) [above right=of mid] { };
		\node [circle,draw] (right) [right=of mid] { };
		\node [rectangle,draw,fill] (upper2) [above right=of upper] { };
		\node [rectangle,draw] (right2) [above right=of right] { };
		
		\path[-latex]
		(mid)   edge  node[swap]                      {$r$} (right)
		(mid)   edge  node                    {$r$} (upper)
		(upper)   edge  node                      { } (upper2)
		(right)   edge  node                    { } (right2)
		;
	\end{tikzpicture}
	\\ \hline
	&&&& \\
	\begin{tikzpicture}%
		[>=stealth,
		shorten >=1pt,
		node distance=1cm,
		on grid,
		auto,
		every state/.style={draw=black!60, fill=black!5, very thick}
		]
		\node [circle,draw,fill] (mid)                  { };
		\node  [rectangle,fill,draw] (upper) [above right=of mid] { };
		\node [rectangle] (right) [right=of mid] { };
		\node [circle] (upper2) [above right=of upper] { };
		
		\path[-latex]
		(mid)   edge  node                      {$r$} (upper)
		;
	\end{tikzpicture}
	&
	\begin{tikzpicture}%
		[>=stealth,
		shorten >=1pt,
		node distance=1cm,
		on grid,
		auto,
		every state/.style={draw=black!60, fill=black!5, very thick}
		]
		\node [circle,draw] (mid)                  { };
		\node  [rectangle,draw,fill] (upper) [above right=of mid] { };
		\node [rectangle] (right) [right=of mid] { };
		\node [rectangle,draw,fill] (upper2) [above right=of upper] { };
		\node [circle] (right2) [above right=of right] { };
		
		\path[-latex]
		(mid)   edge[bend right=50]  node[swap]                      {$r$} (upper2)
		(mid)   edge  node                    { } (upper)
		(upper)   edge  node                      {$r$} (upper2)
		;
	\end{tikzpicture}
	&
	\begin{tikzpicture}%
		[>=stealth,
		shorten >=1pt,
		node distance=1cm,
		on grid,
		auto,
		every state/.style={draw=black!60, fill=black!5, very thick}
		]
		\node [circle,draw] (mid)                  { };
		\node  [rectangle,draw,fill] (upper) [above right=of mid] { };
		\node [circle,draw] (right) [right=of mid] { };
		\node [rectangle,draw,fill] (upper2) [above right=of upper] { };
		\node [circle] (right2) [above right=of right] { };
		
		\path[-latex]
		(mid)   edge  node[swap]                      {$r$} (right)
		(mid)   edge  node                    { } (upper)
		(upper)   edge  node                      {$r$} (upper2)
		(right)   edge[bend right=10]  node                      { } (upper2)
		;
	\end{tikzpicture}
	&
	\begin{tikzpicture}%
		[>=stealth,
		shorten >=1pt,
		node distance=1cm,
		on grid,
		auto,
		every state/.style={draw=black!60, fill=black!5, very thick}
		]
		\node [circle,draw,fill] (mid)                  { };
		\node  [circle,draw] (upper) [above right=of mid] { };
		\node [rectangle] (right) [right=of mid] { };
		\node [rectangle,draw,fill] (upper2) [above right=of upper] { };
		\node [circle] (right2) [above right=of right] { };
		
		\path[-latex]
		(mid)   edge  node                      {$r$} (upper)
		(upper)   edge  node                      { } (upper2)
		(mid)   edge[bend right=50]  node[swap]                      {$r$} (upper2)
		;
	\end{tikzpicture}
	&
	\begin{tikzpicture}%
		[>=stealth,
		shorten >=1pt,
		node distance=1cm,
		on grid,
		auto,
		every state/.style={draw=black!60, fill=black!5, very thick}
		]
		\node [circle,draw,fill] (mid)                  { };
		\node  [circle,draw] (upper) [above right=of mid] { };
		\node [circle,draw] (right) [right=of mid] { };
		\node [rectangle,draw,fill] (upper2) [above right=of upper] { };
		\node [circle] (right2) [above right=of right] { };
		
		\path[-latex]
		(mid)   edge  node                      {$r$} (upper)
		(mid)   edge  node[swap]                      {$r$} (right)
		(upper)   edge  node                      { } (upper2)
		(right)   edge[bend right=20]  node                      { } (upper2)
		;
	\end{tikzpicture}
	
\end{tabular}
\caption{In the upper boxes, the patterns we do not wish to appear in a repair as specified by the constraints. In the lower boxes, how to fix each problem by merging the problematic $\square$ node with a $\blacksquare$ node. The nodes are represented as follows: $\square$ undesired total nodes, $\blacksquare$ total nodes, $\fullmoon$ partial nodes, $\newmoon$ any kind of nodes. The unlabeled edges represent total edges, thus the $\square$ or $\blacksquare$ node is the total counterpart of the incident $\fullmoon$ node. The case (a) is handled by $\psi_7$, cases (b) and (c) by $\psi_8$, and cases (d) and (e) by $\psi_9$.}
\label{fig:patterns_to_avoid}
\end{figure}
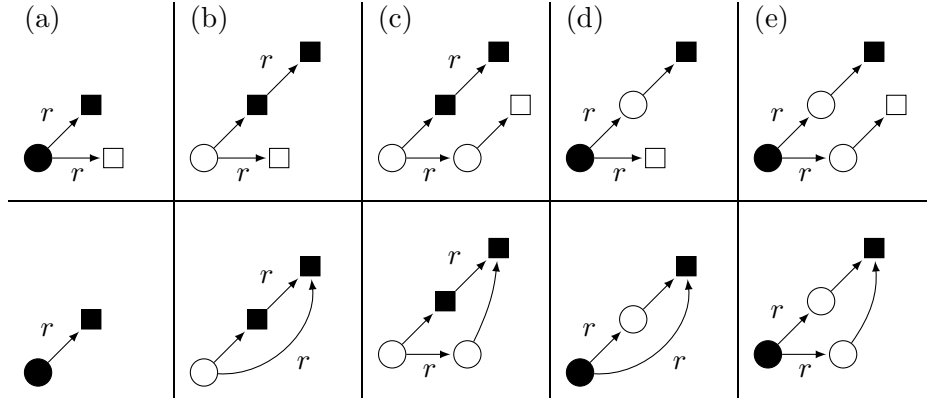

    \item\label{eq:notquery} Finally, we should add a restriction that ensures the non-satisfiability of the query. But first, we show how to translate a 2RPQ $\alpha$ to obtain the path expression $\translationRPQsToGXNode{\alpha}$ that we need. The translation is as follows:
\begin{itemize}[leftmargin=.6in,label={}]
\itemsep0em 
  \item $\epsilon\mapsto \epsilon\,\cup\down_\esLabel{total}\cup\down_\esLabel{total}^-$, denoted by $\translationRPQsToGXNode{\epsilon}$
  \item $r \mapsto \translationRPQsToGXNode{\epsilon}\down_r \translationRPQsToGXNode{\epsilon}$
  \item $r^- \mapsto \translationRPQsToGXNode{\epsilon}\down_r^-\translationRPQsToGXNode{\epsilon}$
  \item $\alpha^* \mapsto (\translationRPQsToGXNode{\alpha})^*$
  \item $\alpha \cdot \beta \mapsto \translationRPQsToGXNode{\alpha} \cdot \translationRPQsToGXNode{\beta}$
  \item $\alpha \cup \beta \mapsto \translationRPQsToGXNode{\alpha} \cup \translationRPQsToGXNode{\beta}$
\end{itemize}

We add the formula $\psi_{10}=\langle \epsilon \cap \overline{\translationRPQsToGXNode{q}}\rangle$ to $\aRforKB$. Notice that if we delete every appearance of the symbol $\down_\esLabel{total}$ in a word $\omega$ from $\translationRPQsToGXNode{\alpha}$ we obtain a word from $\translationRPQsToGXPath{\alpha}$, where $\translationRPQsToGXPath{\alpha}$ is the translation of $\alpha$ as defined in the proof of Theorem \ref{teo:supersetIndecidible}.
\end{enumerate}

We now proceed to prove that there is a finite counter-model for $\aKB$ and $\exists x.q(x,x)$ if and only if there is a superset-repair for $\aDBforKB$ and $\aRforKB$. 

$\implies$) Let $\aKBModel=(\aKBModelDomain,\aKBModelDot)$ be a counter-model for $\aKB$ and $\exists x.q(x,x)$. We define the data-graph $G'=(V',L',D')$, where $V'=ind(\aKB)\cup \aKBModelDomain=V\cup\aKBModelDomain$, $D'|_V=D$, $L'|_{V\times V}=L$, for $u\in \aKBModelDomain$, $D'(u)=T_I$ with $I=\{A\in CN(\aKB):u\in A^\aKBModel\} \cup \{\{a\}\in nom(\aKB):a^\aKBModel=u\}$, and 
\begin{itemize}[leftmargin=.6in,label={}]
\itemsep0em 
  \item $L'(a,a^\aKBModel)=\{\down_\esLabel{total}\}$, for every $a\in V$,
  \item $L'(u,v)=\{\down_r:(u,v)\in r^\mathcal{I}\}$, for every $u,v\in \aKBModelDomain$,
  \item and $L'(u,v)$ is empty for any other remaining case.
\end{itemize}

Notice that if $r(a,b)$ is an assertion, then $\down_r \in L'(a,b)\cap L'(a^\aKBModel,b^\aKBModel)$, and if $\{a\}$ is a nominal in $\aKB$, then $\semantics{\bigvee_{\{a\}\in I} P_I^{\,=}}_{G'}=\{a\}$. This is relevant to simplify the proof below. 

We show that the graph $G'$ satisfies each of the formulas in $\aRforKB$.
It follows by definition that formula in item~\ref{eq:assign} is satisfied. 

For item \ref{eq:ci1}, suppose that $\bigsqcap_i A_i \sqsubseteq \bigsqcup_j B_j$ is in $\aTBox$ and $u\in V'$ belongs to $\semantics{T_I^{\,=}}_{G'}$ where $A_i\in I$ for every $i$. This implies by definition that $u\in\aKBModelDomain$ and $u\in A_i^\aKBModel$ for every $i$. Since $\aKBModel\models\aTBox$, we have that $u\in B_k^\aKBModel$ for some $k$, hence $B_k\in I$. 

For item \ref{eq:ci2}, suppose that $A\sqsubseteq \forall r. B$ is in $\aTBox$ and let $u\in V'$. If $u$ does not contain the concept $A$, then $u$ is in $\semantics{\psi_3}_{G'}$. On the other hand, if $D'(u)=T_I$ for $A\in I$, and $(u,v)\in r^\aKBModel$ for some $v\in \aKBModelDomain$, then we have that $v\in B^\aKBModel$ since $\aKBModel\models\aTBox$, and thus $D'(v)=T_J$ for some $B\in J$. It follows that $u$ is in $\semantics{\psi_3}_{G'}$. 

For item \ref{eq:ci3}, suppose $A\sqsubseteq \exists r. B$ is in $\aTBox$ and $u\in V'$ is a total node that contains the concept $A$. This implies by definition that $u\in\aKBModelDomain$ and $u\in A^\aKBModel$. Since $\aKBModel\models\aTBox$, there is $v\in \aKBModelDomain$ such that $(u,v)\in r^\aKBModel$ and $v\in B^\aKBModel$. Thus, $u\in \semantics{\psi_4}_{G'}$. 

For item \ref{eq:ci4}, suppose $A\equiv\{a\}$ is in $\aTBox$ and let $u=a^\aKBModel$. Since $\aKBModel\models\aTBox$, it follows that $u\in A^\aKBModel$. This implies that $a\in\semantics{\psi_5}_{G'}$. 
Now, if $v\in V'$ is a total node containing the concept $A$, then $v\in A^\mathcal{I}$, thus by hypothesis, $v=u$. It follows that 
$\semantics{\bigvee_{A\in J } T_J^{\,=}}_{G'}=\{u\}$ and hence $\semantics{\psi_6}_{G'}=V'$. 

For item \ref{eq:fun}, suppose $Fun(r)$ is in $\aTBox$. We will only show that $\semantics{\psi_9}_{G'}=V'$, since a similar argument holds for the remaining formulas of the conjunction. 
If $u\in ind(\aKB)$ or $u\neq b^\aKBModel$ for every $b\in ind(\aKB)$, then it is straightforward that $u\in\semantics{\psi_9}_{G'}$. If instead $u=b^\aKBModel$ for $b\in ind(\aKB)$, and $r(a,b),r(a,c)$ are assertions in $\aKB$, then we assert that $b^\aKBModel=c^\aKBModel$ since $\aKBModel$ is a model of $\aKB$. This implies $u\in\semantics{\psi_9}_{G'}$. 
 
For item \ref{eq:notquery} we will prove that, if $\semantics{\psi_{10}}_{G'}\neq V'$, then $\aKBModel$ satisfies the query, i.e. there is a cycle in $\aKBModel$ reading a word from the language $q$. 
First, notice that for every 2RPQ $\alpha$, $a\in V$ and $u\in V'$, the pair $(a,u)\in\semantics{\translationRPQsToGXNode{\alpha}}$ if and only if $(a^\aKBModel,u)\in\semantics{\translationRPQsToGXNode{\alpha}}$, and the pair $(u,a)\in\semantics{\translationRPQsToGXNode{\alpha}}$ if and only if $(u,a^\aKBModel)\in\semantics{\translationRPQsToGXNode{\alpha}}$. Since $\down_r \in L'(a,b)\cap L'(a^\aKBModel,b^\aKBModel)$ for every assertion $r(a,b)$ in $\aKB$, we conclude that, if $u,v\in\aKBModelDomain$ and $(u,v)\in \semantics{\translationRPQsToGXNode{\alpha}}$, then $(u,v)\in \semantics{\translationRPQsToGXPath{\alpha}}$. In other words, if $u$ and $v$ are connected by a path with label in $\translationRPQsToGXNode{\alpha}$, then $u$ and $v$ are also connected by a path that has no edge $\down_\esLabel{total}$. 
This statement can be easily proved by induction on the structure of $\alpha$. Suppose now that $\semantics{\psi_{10}}_{G'}\neq V'$, and let $u\in V'$ be a node for which the constraint $\psi_{10}$ is not valid. By definition, this occurs when $(u,u)\in \semantics{\translationRPQsToGXNode{q}}_{G'}$. We may assume that $u\in\aKBModelDomain$, thus $(u,u)\in\semantics{\translationRPQsToGXPath{q}}_{G'}$, which implies that $\aKBModel$ has a cycle reading a word from $q$ that contains the node $u$. 

Therefore, $G'$ is a finite data-graph that contains $\aDBforKB$ and satisfies $\aRforKB$, hence it contains a superset-repair of $\aDBforKB$. 

$\impliedby$) Let $G'=(V',L',D')$ be a superset-repair of $\aDBforKB$ with respect to $\aRforKB$. We will obtain a counter-model for $\aKB$ and $\exists x.q(x,x)$ by applying an equivalence relation on $V'$. Let $\sim$ be the relation on $V'$ defined as: $u\sim v$ if and only if $(u,v)\in \semantics{(\down_\esLabel{total}\cup\down_\esLabel{total}^-)^*}_{G'}$. We use the Kleene star to guarantee that the relation is transitive and reflexive, and the symmetry is a direct consequence of the definition. In essence, we are collapsing every node having data value $P_I$ with its total counterpart, which we know exists since \ref{eq:assign} is satisfied. For every $u\in V'$, we denote by $[u]$ to its equivalence class. 
Let us consider $\aKBModel=(\aKBModelDomain,\aKBModelDot)$, where $\aKBModel=V'/\sim$ and $\aKBModelDot$ is defined over the signature of $\aKB$ as follows:
\begin{itemize}
    \item $C^\aKBModel=\{[u]:\exists I,\exists u'\sim u. \, D'(u')=T_I \text{ and } C\in I\}$ for every concept name $C$; 
    \item $r^\aKBModel = \{ ([u],[v]):\exists u'\sim u, v'\sim v.\, \down_r\in L'(u',v')\}$ for every role name $r$;
    \item $a^\aKBModel = [a]$ for every individual name $a$. 
\end{itemize} 

Notice that every class $[u]$ contains exactly one total node, and it may contain several (or no) elements from $ind(\aKB)$ and other partial nodes. We denote by $\totalrep{u}$ to the total node of the class $[u]$.

We will now show that $\aKBModel$ is a counter-model for $\aKB=(\aTBox,\anABox)$ and $\exists x.q(x,x)$. It follows from the definition that $a^\aKBModel\in C^\aKBModel$ for every assertion $C(a)$ in $\anABox$, and $(a^\aKBModel,b^\aKBModel)\in r^\aKBModel$ for every assertion $r(a,b)$ in $\anABox$. 

Suppose that $\bigsqcap_i A_i \sqsubseteq \bigsqcup_j B_j$ is in $\aTBox$, and let $[u]\in \aKBModelDomain$ such that $[u]\in A_i^\aKBModel$ for every $i$. 
For the element $u'=\totalrep{u}$, we know that $D'(u')=T_I$ with $A_i\in I$ for every $i$. Since $G'$ satisfies the constraint $\psi_2$ from item \ref{eq:ci1}, it follows that $B_k\in I$ for some $k$, which implies that $[u]\in B_k^\aKBModel$. 

Suppose that $A\sqsubseteq \forall r. B$ is in $\aTBox$, and let $[u],[v]\in \aKBModelDomain$ such that $[u]\in A^\aKBModel$ and $([u],[v])\in r^\aKBModel$. Thus, $D'(\totalrep{u})=T_I$ where $A\in I$, and there are nodes $u'$ and $v'$, such that $u'\sim u,v'\sim v$, and $\down_r\in L'(u',v')$. 
Since $G'$ satisfies $\psi_3$ from item \ref{eq:ci2}, if $D'(v')=T_J$ for some $J$ then $B$ must be in $J$. If instead $D'(v')=P_J$, then $(\totalrep{u},\totalrep{v})\in \semantics{(\epsilon\,\cup\down_\esLabel{total}^-)\down_r \down_\esLabel{total}}_{G'}$ and thus $D'(\totalrep{v})=T_{J'}$ for $B\in J'$. Whichever the case may be, we obtain $[v]\in B^\mathcal{I}$. 

Suppose that $A\sqsubseteq \exists r. B$ is in $\aTBox$, and let $[u]\in \aKBModelDomain$ such that $[u]\in A^\aKBModel$. Thus $D'(\totalrep{u})=T_I$ where $A\in I$, and since $G'$ satisfies $\psi_4$ from item \ref{eq:ci3}, there is a node $v$ such that $D'(v)=T_J$ with $B\in J$ and $(\totalrep{u},v)\in \semantics{(\epsilon\,\cup\down_\esLabel{total}^-)\down_r(\epsilon\,\cup\down_\esLabel{total})}_{G'}$. This implies that $([u],[v])\in r^\aKBModel$ and $[v]\in B^\aKBModel$.  

Suppose that $A\equiv\{a\}$ is in $\aTBox$, and let $[u]\in \aKBModelDomain$ such that $[u]\in A^\aKBModel$. Hence, $D'(\totalrep{u})=T_I$ where $A\in I$, and since $G'$ satisfies $\psi_5$ and $\psi_6$ from item \ref{eq:ci4}, it follows that $\totalrep{u}=a^\aKBModel$. 

Suppose that $Fun(r)$ is in $\aTBox$ and $([u],[v]),([u],[w])\in r^\aKBModel$. We need to prove that $[v]=[w]$ or, equivalently, that $\totalrep{v}=\totalrep{w}$. By hypothesis, there exist $u'\sim u,u''\sim u, v'\sim \totalrep{v}$ and $w'\sim \totalrep{w}$ such that $\down_r\in L'(u',v')$ and $\down_r\in L'(u'',w')$. There are several cases to analyse. If $u'=u''$, $v'=\totalrep{v}$ and $w'=\totalrep{w}$, it follows that $v'=w'$ from the satisfaction of formula $\psi_7$ from item \ref{eq:fun}. For the remaining cases, consider that every node that is different from its total counterpart is connected to it by a total edge. As a consequence of this fact and the satisfaction of formulas $\psi_8$ and $\psi_9$, we obtain that the statement holds for the remaining cases. 

Finally, we have to check that $\aKBModel$ does not satisfy the query. Toward a contradiction, let $[u]\in \aKBModelDomain$ such that, for a cycle in $\aKBModel$ containing $[u]$, the label of the path starting and finishing in $[u]$ reads a word on the language $q$. 
We will prove that there is a cycle in $G'$ reading a word from the language $\translationRPQsToGXNode{q}$. Moreover, this cycle contains $\totalrep{u}$ and the read word starts in this node. Suppose the cycle in $\aKBModel$ is 
$$([u_0],r_1,[u_1]),([u_1],r_2,[u_2]),\ldots, ([u_{n-1}],r_n,[u_n]),$$
where $[u_0]=[u_n]=[u]$ and $\omega=r_1r_2\cdots r_n\in q$. The triplet $(v,\alpha,w)$ means $(v,w)\in \alpha^\aKBModel$. Then, $$(\totalrep{u_0},\translationRPQsToGXNode{r_1},\totalrep{u_1}), (\totalrep{u_1},\translationRPQsToGXNode{r_2},\totalrep{u_2}), \ldots, (\totalrep{u_{n-1}},\translationRPQsToGXNode{r_n},\totalrep{u_n})$$
is a cycle in $G'$. Since any word in $\translationRPQsToGXNode{r_1}\translationRPQsToGXNode{r_2}\cdots \translationRPQsToGXNode{r_n}$ is also a word in $\translationRPQsToGXNode{q}$ we obtain that $G'$ cannot satisfy $\psi_{10}$ since $(\totalrep{u},\totalrep{u})\in \semantics{\translationRPQsToGXNode{q}}_{G'}$.

\end{proof}

As pointed out in Remark~\ref{remark:20}, simple data tests may prevent the existence of superset repairs. Observe that in the absence of them, any data-graph $\aGraph$ has a superset repair if we consider only the positive fragment of \Gregxpath: add every possible edge label $l \in \Sigma_e$ to every pair of nodes $v, v' \in V_\aGraph$ and the resulting graph will satisfy any expression. The proof of this fact follows quite straightforward by induction in the expression's structure.

Before proceeding, we define some concepts related to data values:

\begin{definition}
Let $\eta$ be a \Gregxpath expression. We define the set of data values present in $\eta$ as the set of all those $c \in \Sigma_n$ such that the subexpression $[c^=]$ or $[c^{\neq}]$ is used in $\eta$. We denote it as \defstyle{$\Sigma_n^\eta$}.

Analogously, we define the set of data values used by a set of $\Gregxpath$ expressions $\aRestrictionSet$ as $\defstyle{\Sigma_n^\aRestrictionSet} = \bigcup\limits_{\eta \in \aRestrictionSet} \Sigma_n^\eta$.

We also denote the set of data values used in a graph $\aGraph$ as \defstyle{$\Sigma_n^\aGraph$}.
\end{definition}

Even though $\Sigma_n$ may be infinite, when considering only \Gposregxpath expressions we obtain the following lemma:

\begin{lemma}\label{teo:dataValuesInR}
Let $\aGraph$ be a data-graph and $\aRestrictionSet$ a set of \Gposregxpath expressions. If there is a superset repair $\aGraph'$ of $\aGraph$ with respect to the constraints $\aRestrictionSet$, then there is another superset repair $H$ that only uses data values from $\Sigma_n^\aRestrictionSet \cup \Sigma_n^\aGraph$ plus at most two extra data values not mentioned in $\aRestrictionSet$.
\end{lemma}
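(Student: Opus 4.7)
The plan is to build a consistent superset $H^\star$ of $\aGraph$ whose data values all lie in $\Sigma_n^\aRestrictionSet \cup \Sigma_n^\aGraph \cup \{c_1, c_2\}$, for two chosen values $c_1, c_2 \notin \Sigma_n^\aRestrictionSet \cup \Sigma_n^\aGraph$, and then to trim $H^\star$ down to a superset repair. Call a node $v \in V_{\aGraph'}$ \emph{exotic} if $D_{\aGraph'}(v) \notin \Sigma_n^\aRestrictionSet \cup \Sigma_n^\aGraph$; since $\aGraph' \supseteq \aGraph$, every exotic node lies in $V_{\aGraph'} \setminus V_\aGraph$. I would build $H^\star$ from $\aGraph'$ by \emph{duplicating} each exotic node $v$ into two twins $v^1, v^2$ with $D_{H^\star}(v^i) := c_i$ and leaving every non-exotic node untouched; for each edge $(u,w)$ labelled $\ell$ in $\aGraph'$ I add to $H^\star$ all edges $(u', w')$ labelled $\ell$ with $u' \in \rho(u)$, $w' \in \rho(w)$, where $\rho(v) = \{v\}$ if $v$ is non-exotic and $\rho(v) = \{v^1, v^2\}$ if $v$ is exotic. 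By construction $\aGraph \subseteq H^\star$ and $\Sigma_n^{H^\star} \subseteq \Sigma_n^\aRestrictionSet \cup \Sigma_n^\aGraph \cup \{c_1, c_2\}$.

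The heart of the argument is to show $H^\star \models \aRestrictionSet$, via a joint structural induction on \Gposregxpath expressions yielding two complementary facts. The \emph{lifting} property: if $(u,w) \in \semantics{\aPath}_{\aGraph'}$ then for every $u' \in \rho(u)$ there is some $w' \in \rho(w)$ with $(u',w') \in \semantics{\aPath}_{H^\star}$, and if $v \in \semantics{\phi}_{\aGraph'}$ then every $v' \in \rho(v)$ satisfies $v' \in \semantics{\phi}_{H^\star}$. The \emph{twin symmetry} property: because $c_1, c_2$ do not appear in any expression of $\aRestrictionSet$, the two copies $v^1, v^2$ of any exotic node behave identically with respect to any subexpression of a formula in $\aRestrictionSet$, so $(u', v^1) \in \semantics{\aPath}_{H^\star} \Leftrightarrow (u', v^2) \in \semantics{\aPath}_{H^\star}$, and analogously on the left-hand side and for node expressions. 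Combining these two facts, the $w'$ in the lifting statement may be chosen to be any element of $\rho(w)$, so every pair of $V_{H^\star} \times V_{H^\star}$ satisfies every path constraint and every node in $V_{H^\star}$ satisfies every node constraint; hence $H^\star \models \aRestrictionSet$.

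Finally, the set $\{H'' : \aGraph \subseteq H'' \subseteq H^\star,\, H'' \models \aRestrictionSet\}$ is nonempty and finite, so it has a $\subseteq$-minimal element $H$; any $H''' \subsetneq H$ with $\aGraph \subseteq H'''$ and $H''' \models \aRestrictionSet$ would also be contained in $H^\star$, contradicting minimality, so $H$ is a $\supseteq$-repair of $\aGraph$ with $\Sigma_n^H \subseteq \Sigma_n^{H^\star} \subseteq \Sigma_n^\aRestrictionSet \cup \Sigma_n^\aGraph \cup \{c_1, c_2\}$. The main obstacle is executing the mutual induction cleanly. The delicate case is path intersection $\aPath \cap \aPathb$ (and the common-source witness in $[\aPath = \aPathb]$), where lifting alone yields potentially distinct endpoints $w'_1, w'_2 \in \rho(w)$ on the two sides, and it is precisely twin symmetry that lets us merge them into a single witness; the $[\aPath \neq \aPathb]$ base case is handled by picking twins of opposite colours so the two endpoints carry $c_1$ and $c_2$ respectively, which differ precisely because $c_1 \neq c_2$ and $c_1, c_2 \notin \Sigma_n^\aRestrictionSet \cup \Sigma_n^\aGraph$.
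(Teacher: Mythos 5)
Your construction is genuinely different from the paper's. The paper keeps the node set of $\aGraph'$ fixed, collapses \emph{all} exotic data values to a single fresh value $c$ (retaining one further value $d$ solely to rescue $\comparacionCaminos{\aPath\neq\aPathb}$ witnesses), and then \emph{completes} the edge relation by putting every label of $\Sigma_e^{\aGraph'}$ between every pair of nodes; the induction is then carried by an auxiliary lemma about such complete graphs, in which any two nodes whose data values are unmentioned in an expression are interchangeable as path endpoints. Your twin-duplication with product edges preserves the edge structure instead of completing it, which is arguably cleaner, and your ``opposite colours'' treatment of $\comparacionCaminos{\aPath\neq\aPathb}$ plays exactly the role of the paper's spare value $d$. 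The final trimming step to a minimal consistent superset is correct (and is actually left implicit in the paper).

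There is, however, a genuine flaw in your key lemma: \emph{twin symmetry as you state it is false}. Take $\aPath=\epsilon$ and $u'=v^1$: then $(v^1,v^1)\in\semantics{\epsilon}_{H^\star}$ but $(v^1,v^2)\notin\semantics{\epsilon}_{H^\star}$. The same failure occurs for $[\aFormula]$, for $\aPath^*$, and for any expression whose denotation meets the diagonal at $v^1$ --- and this is precisely the situation in which you invoke the lemma (merging witnesses for $\aPath\cap\aPathb$ when one conjunct holds only on the diagonal, or producing a common-colour witness for $\comparacionCaminos{\aPath=\aPathb}$ when one of the two witnesses is the source node itself via an $\epsilon$-like subexpression). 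Moreover, the justification ``the two copies behave identically because $c_1,c_2$ are unmentioned'' ignores that $v^1$ and $u^1$ share the value $c_1$ while $v^1$ and $u^2$ do not, so the copies are \emph{not} interchangeable with respect to data comparisons against other twins; only the global swap of all twins simultaneously is semantics-preserving, and that does not yield your per-pair biconditional. The repair is to replace lifting-plus-twin-symmetry by a single sharper invariant proved by mutual induction: if $(u,w)\in\semantics{\aPath}_{\aGraph'}$ with $u\neq w$ then $(u',w')\in\semantics{\aPath}_{H^\star}$ for \emph{all} $u'\in\rho(u)$, $w'\in\rho(w)$; if $(u,u)\in\semantics{\aPath}_{\aGraph'}$ then $(u',u')\in\semantics{\aPath}_{H^\star}$ for all $u'\in\rho(u)$; and if $v\in\semantics{\aFormula}_{\aGraph'}$ then $v'\in\semantics{\aFormula}_{H^\star}$ for all $v'\in\rho(v)$. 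With this formulation the intersection, concatenation, Kleene-star and $\comparacionCaminos{\aPath=\aPathb}$/$\comparacionCaminos{\aPath\neq\aPathb}$ cases all close (in the last two one coordinates the colours of the chosen representatives, forced to the colour of $v'$ whenever a witness coincides with the source), and the rest of your argument goes through.
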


\begin{proof} 
Intuitively, values that are not mentioned in $\aRestrictionSet$ are not neccessary to satisfy $\aRestrictionSet$, and therefore, if there exists a repair, there must be one with only data values from $\aRestrictionSet$ and the original data values from $\aGraph$. Nonetheless, it could be the case that every data value mentioned in $\aRestrictionSet$ is in an expression of the form $[c^{\neq}]$, and in that case we might need an extra data value (for example, consider the expression $\aFormula = \comparacionCaminos{\gamma}$ where $\gamma=\down_x [\bigvee\limits_{c \in \mathcal{D}}c^{\neq}]$ for $\mathcal{D}$ a set of data values). Actually, we might need two fresh data values to satisfy some expression of the form $\comparacionCaminos{\aPath \neq \aPathb}$ (replace $\aPath = \aPathb = \gamma$). See the Appendix~\ref{Appendix} for the detailed proof.
\end{proof}

The following observation is a consequence of the previous lemma:

\begin{observation}
If there is a superset repair $\aGraph$ with respect to $\aRestrictionSet$, then there exists a superset repair with a number of different data values that linearly depends on $|\aGraph| + |\aRestrictionSet|$.
\end{observation}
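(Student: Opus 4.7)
The plan is to derive this observation directly as a counting corollary of Lemma~\ref{teo:dataValuesInR}, so the work has essentially already been done. First I would invoke the lemma: assuming some superset repair of $\aGraph$ with respect to $\aRestrictionSet$ exists, it produces a (possibly different) superset repair $H$ whose data values all lie in $\Sigma_n^\aRestrictionSet \cup \Sigma_n^\aGraph$ together with at most two extra data values not mentioned in $\aRestrictionSet$.

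Next, I would bound each of the two sets appearing in the union by a syntactic measure. Every $c \in \Sigma_n^\aRestrictionSet$ must appear explicitly inside a subexpression of the form $[c^=]$ or $[c^{\neq}]$ in some expression of $\aRestrictionSet$, and each such occurrence contributes at least one symbol to the total size; thus $|\Sigma_n^\aRestrictionSet| \leq |\aRestrictionSet|$. Similarly, every $c \in \Sigma_n^\aGraph$ is the image under $D$ of some node of $\aGraph$, so $|\Sigma_n^\aGraph| \leq |V_\aGraph| \leq |\aGraph|$. Adding the at most two fresh values supplied by the lemma, the total count of distinct data values appearing in $H$ is bounded by $|\aRestrictionSet| + |\aGraph| + 2$, which is linear in $|\aGraph| + |\aRestrictionSet|$, as required.

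There is no real obstacle here: all of the technical content (constructing a repair whose values come from the stated restricted pool, and arguing why at most two fresh values suffice to handle subexpressions of the form $[c^{\neq}]$ and $\comparacionCaminos{\aPath \neq \aPathb}$) was already absorbed in the proof of Lemma~\ref{teo:dataValuesInR}. The observation itself is just the syntactic-size bookkeeping on top of that lemma.
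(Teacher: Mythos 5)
Your proposal is correct and matches the paper's intent exactly: the paper presents this observation as an immediate consequence of Lemma~\ref{teo:dataValuesInR}, and your argument simply makes explicit the routine counting ($|\Sigma_n^\aRestrictionSet| \leq |\aRestrictionSet|$, $|\Sigma_n^\aGraph| \leq |V_\aGraph|$, plus the two fresh values) that the paper leaves implicit.
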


Notice that this observation does not imply that there must be a superset repair with linear size on $|\aGraph| + |\aRestrictionSet|$: it could be the case that there is an exponential number of nodes having the same data value. However, we could somehow `merge' all those nodes with the same data value while preserving some edges, such that the resulting graph will still satisfy all those $\Gposregxpath$ expressions that were satisfied in the original graph:

\begin{lemma} \label{teoremaHorrible}
Let $\aGraph$ be a data-graph that satisfies a \Gposregxpath restriction $\eta$. Let $V_{d}$ be a set of nodes from $\aGraph$ having the same data value $d$. If we define a new data-graph $H$ where
\begin{align*}
V_H &= (V_\aGraph \setminus V_d) \cup \{v_d\} \\ 
L_H(v,w) &= L_\aGraph(v,w) \forall v,w \in V_\aGraph \setminus V_d \\
L_H(v,v_d) &= \{e \in \Sigma_n \mid \exists w \in V_d \mbox{ such that } e \in L_\aGraph(v,w)\}, \forall v \in V_\aGraph \setminus V_d\\
L_H(v_d,v) & = \{e \in \Sigma_n \mid \exists w \in V_d \mbox{ such that } e \in L_\aGraph(w,v)\}, \forall v \in V_\aGraph \setminus V_d \\
L_H(v_d,v_d) & = \{e \in \Sigma_n \mid \exists w_1,w_2 \in V_d \mbox{ such that } e \in L(w_1,w_2)\}\\
D_H(v) &= D_\aGraph(v), \forall v \in V_\aGraph \setminus V_d \\
D_H(v_d) &= d
\end{align*}

Then, $H$ satisfies $\eta$.

\end{lemma}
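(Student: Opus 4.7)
The plan is to prove a stronger statement by simultaneous induction on the structure of \Gposregxpath\ path and node expressions. Define the collapse map $\pi : V_\aGraph \to V_H$ by $\pi(v) = v_d$ if $v \in V_d$ and $\pi(v) = v$ otherwise; observe first that $D_H(\pi(v)) = D_\aGraph(v)$ for every $v \in V_\aGraph$ (since nodes in $V_d$ all have data value $d$, which is also the value of $v_d$). The statement to prove inductively is:
\begin{itemize}
\item for every positive path expression $\alpha$, if $(v,w) \in \semantics{\alpha}_\aGraph$ then $(\pi(v),\pi(w)) \in \semantics{\alpha}_H$;
\item for every positive node expression $\varphi$, if $v \in \semantics{\varphi}_\aGraph$ then $\pi(v) \in \semantics{\varphi}_H$.
\end{itemize}
Once this is established, the lemma follows quickly: for any pair $(v',w') \in V_H \times V_H$ (resp.\ $v' \in V_H$) we pick preimages $v,w \in V_\aGraph$ under $\pi$, which always exist (take any element of $V_d$ if the target is $v_d$, and the node itself otherwise); the hypothesis $\aGraph \models \eta$ gives $(v,w) \in \semantics{\eta}_\aGraph$, and the claim yields $(v',w') \in \semantics{\eta}_H$.

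The base cases are immediate from the definition of $L_H$ and the observation on data values: for each label $a$, if $a \in L_\aGraph(v,w)$ then $a \in L_H(\pi(v),\pi(w))$ by construction of $L_H$ for each of the four types of pairs; the cases $\epsilon$, $\labelComodin$, and $a^{-}$ are analogous. For the data tests $[c^{=}]$ and $[c^{\neq}]$ we use $D_H(\pi(v)) = D_\aGraph(v)$ directly. The inductive cases for $\cdot$, $\cup$, $\cap$, $\alpha^{*}$, $\alpha^{n,m}$, $\wedge$ and $\vee$ are completely routine: each witness in $\aGraph$ is transported along $\pi$ via the inductive hypothesis. The case $[\varphi]$ reduces to the node expression hypothesis, and $\comparacionCaminos{\alpha}$ reduces to the path expression hypothesis.

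The only slightly delicate cases are $\comparacionCaminos{\alpha = \beta}$ and $\comparacionCaminos{\alpha \neq \beta}$. If $v \in \semantics{\comparacionCaminos{\alpha = \beta}}_\aGraph$ with witnesses $v', v''$, the induction gives $(\pi(v),\pi(v')) \in \semantics{\alpha}_H$ and $(\pi(v),\pi(v'')) \in \semantics{\beta}_H$; since $D_\aGraph(v') = D_\aGraph(v'')$ and $D_H(\pi(u)) = D_\aGraph(u)$ for every $u$, the required data-value equation holds in $H$. The same argument works for $\neq$ precisely because we only collapse nodes that already share the data value $d$: if $D_\aGraph(v') \neq D_\aGraph(v'')$ then $D_H(\pi(v')) \neq D_H(\pi(v''))$, since merging cannot make distinct data values coincide.

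The expected obstacle is this last point: one might worry that the collapse destroys inequality witnesses or identifies data values that should have remained different. The observation that $D_H \circ \pi = D_\aGraph$ rules this out, and crucially, the absence of the negation operators $\lnot\aFormula$ and $\pathComplement{\aPath}$ in \Gposregxpath\ is what makes the whole induction go through: we never need to argue that something \emph{fails} to hold in $H$, only that positive witnesses survive the quotient, and the construction of $L_H$ is designed so that every edge between $v$ and $w$ in $\aGraph$ induces an edge between $\pi(v)$ and $\pi(w)$ in $H$.
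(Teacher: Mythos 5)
Your proposal is correct and follows essentially the same route as the paper's own proof: the paper also defines the collapse map (called $f$ there) sending $V_d$ to $v_d$ and fixing all other nodes, and proves by the same mutual structural induction that witnesses are transported along it, with the data-test and $\langle\alpha \star \beta\rangle$ cases handled exactly as you do via preservation of data values under the collapse. Your added remark on why the inductive claim suffices (surjectivity of the collapse onto $V_H$) and the explicit observation $D_H \circ \pi = D_\aGraph$ are fine elaborations of steps the paper leaves implicit.
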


\begin{proof}
Expressions from $\Gposregxpath$ can only interact with data values and edge labels, ignoring the actual identity of the nodes\footnote{This is not completely true, since the path expression $\epsilon$ can be used to relate a node with only itself.}. Then, observe that in the data-graph just defined the neighbourhoods of all nodes are the same as in data-graph $G$, except that we might have collapsed some set of nodes. But for those nodes collapsed, we created a new one with the same data value and the same (or even bigger) neighbourhood. Finally, the positive expression of \Gposregxpath cannot really distinguish this new node from the previous ones. See the Appendix~\ref{Appendix} for the actual proof.
\end{proof}

As it was mentioned previously, when we define the graph $H$ the set of nodes $V_d$ is collapsed into a unique node $v_d$ while preserving all the edges that were incident to the set $V_d$. This operation is usually called vertex contraction. Using Lemmas~\ref{teo:dataValuesInR} and \ref{teoremaHorrible}, we prove the following very useful fact:

\begin{theorem} \label{teo:superset}
Let $\aGraph$ be a data-graph and $\aRestrictionSet$ a set of \Gposregxpath expressions. If there is a superset repair of $\aGraph$ with respect to $\aRestrictionSet$, then there is a superset repair of $\aGraph$ with respect to $\aRestrictionSet$ of polynomial size depending on $|\aGraph| + |\aRestrictionSet|$.
\end{theorem}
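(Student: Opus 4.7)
The approach is to take an arbitrary superset repair guaranteed by the hypothesis and compress it in two stages, using Lemma~\ref{teo:dataValuesInR} and Lemma~\ref{teoremaHorrible}; then extract a minimal consistent superset inside the compressed graph.

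First, by hypothesis and Lemma~\ref{teo:dataValuesInR} we may fix a superset repair $H$ of $\aGraph$ all of whose data values lie in $\Sigma_n^\aRestrictionSet \cup \Sigma_n^\aGraph$ plus at most two fresh values, so $H$ uses only $k = O(|\aRestrictionSet|+|\aGraph|)$ distinct data values, though it may still contain many nodes per data value. For each such $d$, let $V_d^{\mathrm{new}} = \{\, v \in V_H : D_H(v) = d \text{ and } v \notin V_\aGraph\,\}$. The plan is to apply Lemma~\ref{teoremaHorrible} iteratively, one data value at a time, contracting each $V_d^{\mathrm{new}}$ into a single representative $v_d$; call $H_2$ the resulting graph. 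At every step the contracted set consists of nodes with a common data value, so Lemma~\ref{teoremaHorrible} applies; invoking it once per restriction $\eta \in \aRestrictionSet$ shows that $H_2 \models \aRestrictionSet$. Because $V_d^{\mathrm{new}} \cap V_\aGraph = \emptyset$, every node of $\aGraph$ survives the process together with its original data value and incident edges, so $\aGraph \subseteq H_2$. Counting: $H_2$ has at most $|V_\aGraph| + k = O(|\aGraph|+|\aRestrictionSet|)$ nodes and therefore a polynomial number of edges.

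To finish, we extract a superset repair inside $H_2$. The collection $\mathcal{S} = \{\, H' : \aGraph \subseteq H' \subseteq H_2,\ H' \models \aRestrictionSet \,\}$ is finite and contains $H_2$, so it admits an inclusion-minimal element $H_3$. Any globally smaller consistent superset $H''$ with $\aGraph \subseteq H'' \subsetneq H_3$ would itself lie in $\mathcal{S}$, contradicting the choice of $H_3$; hence $H_3$ is a genuine superset repair of $\aGraph$ with respect to $\aRestrictionSet$, and $|H_3| \leq |H_2|$ is polynomial in $|\aGraph|+|\aRestrictionSet|$.

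The main delicate point is the iterated use of Lemma~\ref{teoremaHorrible}, whose statement covers only a single contraction against a single restriction. The iteration works because the representative $v_d$ created by contracting $V_d^{\mathrm{new}}$ retains data value $d$, so a later contraction for a different value $d' \neq d$ still targets exactly the nodes that had value $d'$ in the previous graph; and keeping every contraction disjoint from $V_\aGraph$ is what ensures $\aGraph$ remains literally embedded at every intermediate step. Care is also needed to bound $|\Sigma_e|$ against the input size when counting edges, but this is standard since $\Sigma_e$ is part of the problem's input.
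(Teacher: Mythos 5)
Your proof is correct and follows essentially the same route as the paper's: bound the number of data values via Lemma~\ref{teo:dataValuesInR}, collapse nodes sharing a data value via iterated applications of Lemma~\ref{teoremaHorrible}, and then extract an inclusion-minimal consistent superset inside the resulting polynomial-size graph. The only (harmless) cosmetic differences are that the paper merges new nodes carrying data values from $\Sigma_n^\aGraph$ into existing nodes of $\aGraph$ rather than into a fresh representative, and that you spell out the final minimality-extraction step that the paper leaves implicit.
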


\begin{proof}
We define a data-graph $H$ such that $\aGraph \subseteq H$, $H \models \aRestrictionSet$, and the size of $H$ polynomially depends on $|\aGraph| + |\aRestrictionSet|$. This will suffice to prove the theorem.

Let $\aGraph'$ be a superset repair of $\aGraph$ with respect to $\aRestrictionSet$ that uses a number of data values that linearly depends on $|\aGraph| + |\aRestrictionSet|$. Such a superset repair exists as a consequence of Lemma~\ref{teo:dataValuesInR} and the hypothesis. Thus, for every data value $c \in \Sigma_n^{\aGraph'} \setminus \Sigma_n^\aGraph$, we contract all the nodes having data value $c$ to a unique node. Let $H'$ be the graph obtained once those nodes were contracted. Notice that $H'$ satisfies $\aRestrictionSet$ due to Lemma~\ref{teoremaHorrible}.

We contract every other node $v \in V_{\aGraph'} \setminus V_{\aGraph}$ in $H'$ that has a data value from $\Sigma_n^\aGraph$ with a node from $\aGraph$ having the same data value. The resulting graph is indeed the $H$ we are looking for, since once again due to Lemma~\ref{teoremaHorrible}, $H$ satisfies $\aRestrictionSet$ and $|H|$ is exactly $|V_\aGraph| + |\Sigma_n^{\aGraph'} \setminus \Sigma_n^\aGraph|$. Moreover, $|H|$ linearly depends on $|\aGraph| + |\aRestrictionSet|$. Therefore, the size of $H$ is at most quadratic on $|\aGraph| + |\aRestrictionSet|$.

\end{proof}

This last theorem shows that the problem $\exists$\textsc{SUPERSET-REPAIR} lies in \textsc{NP} for \Gposregxpath expressions, since we can use the repair itself as a witness for a positive instance. Moreover, we can use the same proof to define a polynomial-time algorithm that computes a superset repair of a data-graph $\aGraph$. If there exists a superset repair of $\aGraph$ with respect to $\aRestrictionSet$, then there is a `small' graph $H$ such that $H \models R$ and $H$ has a very precise structure: the only nodes that are added to $\aGraph$ in order to obtain $H$ have a one-to-one correspondence with those new data values that were not present in $\aGraph$. From now on we will refer to this structure as the \textit{standard form of a superset repair}. Hence, we may find a minimal --with respect to node inclusion-- data-graph $H'$ that satisfies $H' \models \aRestrictionSet$ and $G \subseteq H'$ as follows: 
Iterate over every possible subset $S$ of $(\Sigma_n^\aRestrictionSet \cup \{c,d\}) \setminus \Sigma_n^{\aGraph}$, where $c$ and $d$ are the `fresh' data values mentioned in Lemma~\ref{teo:dataValuesInR}. Then, add one node to $\aGraph$ for each data value in $S$ and every possible edge between any pair of nodes. Finally, check if the resulting data-graph satisfies $\aRestrictionSet$.

If none of these data-graphs satisfies $R$, then it follows from Theorem~\ref{teo:superset} that there is no superset repair. Otherwise, after computing the minimal --with respect to node inclusion-- data-graph $H'$, we may find a repair by deleting edges from $H'$: if once we delete an edge $e$ from $H'$ we notice that $\aRestrictionSet$ is not satisfied anymore, then it follows from Lemma~\ref{monotony} that there is no subset $H''$ of $H'$ that does not contain the edge $e$ such that $H''$ satisfies $\aRestrictionSet$ and such that $H''$ is a data-graph that still contains $\aGraph$. This allows us to assert that $e$ will belong to the final repair.

Considering all the previous discussions,  we design the algorithm \ref{algorithm:superset}. The first \textbf{for} will be executed at most $2^{\Sigma_n^\aRestrictionSet + 2}$ times. If $\Sigma_n$ is finite, then this number of executions is constant. Otherwise, it can depend exponentially on $|\aRestrictionSet|$. Everything inside the loop can be computed in polynomial time thanks to Theorem~\ref{teo:PGxpath}. The second loop will run at most $(|V_\aGraph| + \Sigma_n^\aRestrictionSet)^2$ times, which is quadratic in respect to the input size. Since the procedure is correct, we obtain the final result:

\begin{algorithm}[h!]
\begin{algorithmic}[1]
\REQUIRE $\aGraph$ is a data-graph and $\aRestrictionSet$ a set of $\Gposregxpath$ expressions.
\FOR{$S \in \mathcal{P}(\Sigma_n^\aRestrictionSet \cup \{c,d\} \setminus \Sigma_n^\aGraph)$}
    \STATE $H \leftarrow buildGraph(\aGraph, S)$
    \IF{$H \models \aRestrictionSet$ and $H \subseteq H'$}
        \STATE $H' \leftarrow H$
    \ENDIF
\ENDFOR
\IF{$H'$ is not initialized}
    \RETURN `There is no superset repair'
\ENDIF
\FOR{$e \in E_{H'} \setminus E_{\aGraph}$}
    \IF{$(V_{H'}, E_{H'} \setminus \{e\}, D_{H'}) \models \aRestrictionSet$}
        \STATE $H' \leftarrow (V_{H'}, E_{H'} \setminus \{e\}, D_{H'})$
    \ENDIF
\ENDFOR
\RETURN $H'$
\end{algorithmic}
\caption{$SupersetRepair(\aGraph,\aRestrictionSet)$}
\label{algorithm:superset}
\end{algorithm}

\begin{theorem}\label{teo:tractableSuperset}
Given a data-graph $\aGraph$ and a set of \Gposregxpath expressions, there is an algorithm that computes a superset repair of $\aGraph$ with respect to $\aRestrictionSet$ in polynomial time, if $\Sigma_n$ is finite.

If $\Sigma_n$ is infinite, then we can still compute a superset repair in polynomial time if $\aRestrictionSet$ is fixed.
\end{theorem}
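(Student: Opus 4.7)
The plan is to establish correctness and polynomial runtime of Algorithm~\ref{algorithm:superset} by combining the structural results already developed in this subsection. First I would bound the number of iterations. The outer loop ranges over subsets of $(\Sigma_n^\aRestrictionSet \cup \{c,d\}) \setminus \Sigma_n^\aGraph$, which has cardinality at most $|\Sigma_n^\aRestrictionSet| + 2$. If $\Sigma_n$ is finite, then $|\Sigma_n^\aRestrictionSet| \leq |\Sigma_n|$ is bounded by a constant of the signature; if $\Sigma_n$ is infinite but $\aRestrictionSet$ is fixed, then $|\Sigma_n^\aRestrictionSet|$ is bounded by the fixed size of $\aRestrictionSet$. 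In either case, the outer loop runs a constant number of times. Inside the loop, $buildGraph(\aGraph, S)$ adds one fresh node per value in $S$ plus every possible edge between pairs of nodes, yielding a graph of polynomial size in $|\aGraph|$, and consistency is then tested in polynomial time by Theorem~\ref{teo:PGxpath}. The second loop iterates once per edge of $H'$, of which there are at most $|\Sigma_e| \cdot (|V_\aGraph| + |\Sigma_n^\aRestrictionSet| + 2)^2$, still polynomial.

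Correctness of the outer loop is where the main content lies, and here I would rely on Theorem~\ref{teo:superset}: if any superset repair exists, then there exists one of polynomial size in \emph{standard form}, meaning its node set is exactly $V_\aGraph$ augmented by one node per data value in some $S \subseteq (\Sigma_n^\aRestrictionSet \cup \{c,d\}) \setminus \Sigma_n^\aGraph$. I would then argue that the fully saturated candidate produced by $buildGraph(\aGraph, S)$---which contains every possible edge on that node set---necessarily satisfies $\aRestrictionSet$ whenever the standard-form witness does. This uses monotony (Lemma~\ref{monotony}): the semantics of any \Gposregxpath expression can only grow when more edges are added, so a supergraph of a consistent graph remains consistent. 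Therefore some iteration of the loop will detect a consistent $H'$ containing $\aGraph$ iff a superset repair exists.

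The second loop turns $H'$ into a repair by greedily deleting edges not originally in $\aGraph$. I would justify this again by Lemma~\ref{monotony}: if removing an edge $e$ causes $\aRestrictionSet$ to fail, then no further supergraph of $\aGraph$ that omits $e$ can satisfy $\aRestrictionSet$ either, so $e$ must stay in every repair containing the current graph; otherwise $e$ can be safely dropped without losing consistency. After one pass over $E_{H'} \setminus E_\aGraph$, the output is a consistent supergraph of $\aGraph$ from which no further edge can be removed, i.e., a $\supseteq$-repair. The main obstacle I anticipate is not computational but notational: carefully handling the two `fresh' data values $c, d$ from Lemma~\ref{teo:dataValuesInR} so that the bound $2^{|\Sigma_n^\aRestrictionSet| + 2}$ on the outer loop remains constant in both regimes, and ensuring that the greedy second loop is proved maximal (not merely consistent) before concluding.
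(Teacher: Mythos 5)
Your proposal matches the paper's argument for Theorem~\ref{teo:tractableSuperset} essentially step for step: enumerate the subsets of $(\Sigma_n^\aRestrictionSet\cup\{c,d\})\setminus\Sigma_n^\aGraph$ (constantly many in either regime), invoke Theorem~\ref{teo:superset} together with Lemma~\ref{monotony} to conclude that a fully saturated candidate on a given node set is consistent exactly when a standard-form repair on that node set exists, and then greedily delete edges with monotony guaranteeing minimality of the result. The only wording slip is the claim that ``no further supergraph of $\aGraph$ that omits $e$ can satisfy $\aRestrictionSet$'' --- Lemma~\ref{monotony} only yields this for \emph{subgraphs of the current candidate} that contain $\aGraph$ and omit $e$, which is precisely what the correctness of the greedy pass needs and is how the paper states it.
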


Note that if $\aRestrictionSet$ only contains positive node expressions then we can use a similar procedure as in algorithm \ref{algorithm:subset}: we start by building the data-graph $H=buildGraph(G,S)$ where $S = \Sigma_n^\aRestrictionSet \cup \{c,d\} \setminus \Sigma_n^\aGraph$ and then we iteratively remove vertices that do not satisfy some node expression $\phi \in \aRestrictionSet$. By theorem \ref{teo:superset} we know that if there exists a superset repair, then there must exist some superset repair $G'$ such that $G \subseteq G' \subseteq H$. Also, if $v \notin \semantics{\phi}_H$ for some $v$ then $v \notin G'$. Therefore, in at most $|V_H \setminus V_\aGraph| \leq |\aRestrictionSet| + 2$ iterations we will find a data-graph that contains the same set of nodes as some superset repair of $\aGraph$ with respect to $\aRestrictionSet$. Then, we only need to remove the edges to reach the minimality condition. This yields the following result:

\begin{theorem}\label{teo:supersetPostNodeEpxressionTractable}
The $\exists$\textsc{SUPERSET-REPAIR} problem can be solved in polynomial-time in combined complexity if $\aRestrictionSet$ only contains node expressions from $\Gposregxpath$.
\end{theorem}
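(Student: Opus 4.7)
The plan is to adapt the strategy used for Algorithm~\ref{algorithm:subset} to the superset setting, using the structural information about superset repairs developed in Theorem~\ref{teo:superset}. The crucial simplification is that when $\aRestrictionSet$ consists only of node expressions, we do \emph{not} need to enumerate the $2^{|\Sigma_n^\aRestrictionSet|+2}$ candidate subsets $S$ as in Algorithm~\ref{algorithm:superset}; instead, a single maximal choice $S = (\Sigma_n^\aRestrictionSet \cup \{c,d\}) \setminus \Sigma_n^\aGraph$ suffices to capture every superset repair (up to vertex removal).

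First, I would construct $H = buildGraph(\aGraph, S)$, which contains $\aGraph$ together with one fresh node per data value in $S$, and every possible edge-label between every pair of nodes. By Theorem~\ref{teo:superset}, if any superset repair $\aGraph'$ of $\aGraph$ exists, then there is one in standard form whose set of vertices is a subset of $V_H$. Next, I would iteratively remove every node $v \in V_H$ for which some $\phi \in \aRestrictionSet$ satisfies $v \notin \semantics{\phi}_H$. The key correctness argument uses monotony (Lemma~\ref{monotony}): for any data-graph $H'$ with $\aGraph \subseteq H' \subseteq H$ that satisfies $\aRestrictionSet$, we have $\semantics{\phi}_{H'} \subseteq \semantics{\phi}_H$, so every node of $H'$ (in particular, every node of a potential repair) must already lie in $\semantics{\phi}_H$. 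Thus the removal of such a $v$ is safe; moreover, if the removed node happens to belong to $V_\aGraph$, we may conclude that no superset repair exists.

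Once the vertex-removal loop stabilises --- which happens in at most $|V_H| = |V_\aGraph| + |\Sigma_n^\aRestrictionSet| + 2$ iterations, each of polynomial cost by Theorem~\ref{teo:PGxpath} --- we are left with a data-graph $H''$ containing $\aGraph$ such that every node satisfies every node expression in $\aRestrictionSet$, i.e.\ $H'' \models \aRestrictionSet$. To achieve minimality, I would then perform the edge-removal phase of Algorithm~\ref{algorithm:superset}: iterate over $e \in E_{H''} \setminus E_{\aGraph}$ and delete $e$ whenever the resulting graph still satisfies $\aRestrictionSet$. Monotony guarantees that any edge whose removal breaks consistency is forced to be in the repair, so this greedy pass produces a maximal-edge-deletion graph that is a bona fide superset repair.

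The main obstacle is justifying the leap from ``enumerate all $S$'' to ``use the maximal $S$ once''. This hinges on two observations: (i) by Lemma~\ref{teo:dataValuesInR} every superset repair can be assumed to use only data values in $\Sigma_n^\aRestrictionSet \cup \Sigma_n^\aGraph \cup \{c,d\}$, and (ii) because only node expressions (not path expressions) appear in $\aRestrictionSet$, satisfaction depends solely on which nodes survive and which edges are present, and the monotony lemma converts the non-satisfaction of a node expression at a vertex of $H$ into an unconditional obstruction for every subgraph of $H$ containing that vertex. Together with the polynomial bound on $|V_H|$, this makes the procedure run in polynomial time in the combined size of $\aGraph$ and $\aRestrictionSet$, establishing the theorem.
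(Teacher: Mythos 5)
Your proposal is correct and follows essentially the same route as the paper: build $H = buildGraph(\aGraph, S)$ for the single maximal set $S = (\Sigma_n^\aRestrictionSet \cup \{c,d\}) \setminus \Sigma_n^\aGraph$, invoke Theorem~\ref{teo:superset} to confine any repair between $\aGraph$ and $H$, iteratively delete nodes violating some node expression using monotony (Lemma~\ref{monotony}) exactly as in Algorithm~\ref{algorithm:subset}, and finish with the edge-removal pass of Algorithm~\ref{algorithm:superset}. The correctness justifications you give (monotony forcing deleted nodes out of every candidate repair, and failure being detected when a node of $V_\aGraph$ must be deleted) match the paper's argument.
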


If we do not impose any of these restrictions the problem is once again intractable in general:

\begin{theorem}\label{teo:supsetHardInfinite}
If the set of \Gposregxpath expressions $\aRestrictionSet$ is not fixed and $\Sigma_n$ is allowed to be infinite, then the problem $\exists$\textsc{SUPERSET-REPAIR} is \textsc{NP-complete} in general.
\end{theorem}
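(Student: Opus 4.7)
My plan has two parts: NP membership and NP-hardness.

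For NP membership, I would directly invoke Theorem~\ref{teo:superset}: whenever a superset repair of $\aGraph$ with respect to $\aRestrictionSet$ exists, one of polynomial size in $|\aGraph| + |\aRestrictionSet|$ also exists. Such a repair serves as an NP witness that can be verified in polynomial time by Theorem~\ref{teo:PGxpath}.

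For NP-hardness, I would reduce from 3-SAT, exploiting the joint flexibility provided by a non-fixed $\aRestrictionSet$ and the infiniteness of $\Sigma_n$. Given a 3-CNF $\phi$ with variables $x_1,\ldots,x_n$ and clauses $c_1,\ldots,c_m$, the construction would introduce fresh data values $T_i, F_i \in \Sigma_n$ per variable (appearing only in $\aRestrictionSet$) to represent the two truth values. The data-graph $\aGraph$ would be a small fixed gadget, for instance a distinguished root node $r$ with data value $\rho$, together with one variable node per $x_i$ and one clause node per $c_j$ carrying pairwise distinct data values that anchor the per-variable and per-clause constraints. The set $\aRestrictionSet$ would then contain, for each variable $x_i$, a node expression forcing the repair to include at least one $T_i$-valued or $F_i$-valued node, such as $[\rho^{\,\neq}] \lor \comparacionCaminos{\labelComodin^{*}[T_i^{\,=}]} \lor \comparacionCaminos{\labelComodin^{*}[F_i^{\,=}]}$; and, for each clause $c_j$, a node expression encoding that $c_j$ is satisfied by the chosen assignment.

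The hard part will be breaking the monotonicity of \Gposregxpath: with only positive existence tests one could satisfy every variable and clause constraint at once by greedily adding all possible fresh data values, trivialising the existence problem. To handle this I would rely on data-inequality tests $[d^{\,\neq}]$. The key observation is that a universal constraint of the form $[d^{\,\neq}] \lor \psi$ is equivalent to "if some node with data value $d$ is present, then $\psi$ holds at that node", giving a Horn-like rule whose antecedent is exactly "$d$ is added to the repair". By combining such conditionals with pure positive existence requirements and unconditional exclusions (plain $[d^{\,\neq}]$ constraints) across the per-clause gadgets, I aim to encode the mixed-polarity structure of 3-clauses. The most delicate step will be verifying soundness and completeness of the encoding, in particular ensuring that configurations in which both a $T_i$-node and an $F_i$-node are simultaneously present in a consistent candidate can always be projected to a bona fide satisfying assignment of $\phi$, so that the reduction is tight in both directions.
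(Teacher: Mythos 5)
Your NP membership argument is exactly the paper's: Theorem~\ref{teo:superset} gives a polynomial-size witness whenever a superset repair exists, and Theorem~\ref{teo:PGxpath} lets it be verified in polynomial time. Your hardness plan also starts where the paper's reduction does (one fresh data value per literal, existence tests for clauses, $[d^{\neq}]$ tests for conditional behaviour), but there is a genuine gap at precisely the point you flag as ``the most delicate step'': you never exhibit a mechanism preventing a consistent candidate from containing \emph{both} a $T_i$-node and an $F_i$-node, and the toolbox you propose provably cannot supply one. All your constraints are \Gposregxpath\ \emph{node} expressions ($[\rho^{\neq}]\lor\cdots$, $[d^{\neq}]\lor\psi$), and these are pointwise monotone under supersets (Lemma~\ref{monotony}); consequently, if $H_1$ is a consistent superset containing a $T_i$-node and $H_2$ is one containing an $F_i$-node, their union is again consistent and contains both, so coexistence cannot be excluded. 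The decisive obstruction is Theorem~\ref{teo:supersetPostNodeEpxressionTractable}: $\exists$\textsc{SUPERSET-REPAIR} restricted to \Gposregxpath\ node expressions is solvable in polynomial time, so a correct 3-SAT reduction of the kind you sketch would prove $\textsc{P}=\textsc{NP}$. Your fallback of ``projecting'' candidates containing both $T_i$ and $F_i$ onto satisfying assignments also fails: the graph obtained by adding a node for every literal value satisfies all existence constraints whether or not $\phi$ is satisfiable, so completeness of the reduction breaks.

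The missing idea is to use \emph{path} constraints, which must hold for \emph{every pair} of nodes and therefore behave antimonotonically as constraints: adding a node creates new pairs that must be satisfied, and data tests at both endpoints can make some pairs unsatisfiable. The paper starts from a single-node data-graph and forbids coexistence with, for each $i$, the path constraint $([x_i^=]\down_{\esLabel{down}}^*[\lnot x_i^{\neq}]) \cup ([x_i^{\neq}]\down_{\esLabel{down}}^*[\lnot x_i^=]) \cup ([x_i^{\neq}]\down_{\esLabel{down}}^*[\lnot x_i^{\neq}])$: a pair $(v,w)$ with $D(v)=x_i$ and $D(w)=\lnot x_i$ falsifies all three disjuncts, so no consistent superset contains both values. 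Clause satisfaction is then enforced by the path constraints $\bigcup_{k}\down_{\esLabel{down}}^*[(c^j_k)^=]\down_{\esLabel{down}}^*$. Replacing your per-variable node expressions with constraints of this shape repairs the argument; the rest of your plan then goes through.
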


\begin{proof}

    We reduce $3$\textsc{-SAT} to an instance of $\exists$\textsc{SUPERSET-REPAIR}, where $|R|$ depends on the input formula of $3$\textsc{-SAT} and $\Sigma_n$ is infinite. We take $\Sigma_e \supseteq \{ \esLabel{down}$\} and $\Sigma_n \supseteq \{x_i$ $|$ $i\in \N$\} $\cup$ \{$\lnot x_i$ $|$ $i \in \N\}$ $\cup$ $\{null\}$.
    
    Given the \textsc{3-CNF} formula $\phi$ with $n$ variables and $m$ clauses we define the data-graph $\aGraph=(V,L,D)$ where:
\begin{align*}
V_\aGraph &= \{v\}& \\
L(v,v) &= \emptyset& \\
D(v) &= null&
\end{align*}
    
    We would like that every superset repair of $\aGraph$ represents a valuation (or at least a partial valuation) on the variables $x_i$. Intuitively, given a superset repair $\aGraph'$, a variable $x_i$ will evaluate to $\top$ if there is a node in $\aGraph'$ with data value $x_i$, and will evaluate to $\bot$ if there is a node in $\aGraph'$ with data value $\lnot x_i$.
    
    In order to obtain valid valuations, we need to avoid the scenario where both data values $x_i$ and $\lnot x_i$ are present in a repair. To do this, we define the following constraints:
    
    \begin{equation}\label{path:validEval}
        \aPath_i = ([x_i^=] \down_\esLabel{down}^* [\lnot x_i^{\neq}]) \cup ([x_i^{\neq}] \down_\esLabel{down}^* [\lnot x_i^=]) \cup ([x_i^{\neq}] \down_\esLabel{down}^* [\lnot x_i^{\neq}])
    \end{equation}
    
    where $i$ iterates over $1 \leq i \leq n$. The expressions defined by $\ref{path:validEval}$ cannot be satisfied by any graph that has both nodes with data values $x_i$ and $\lnot x_i$ for some $i$. This follows from the fact that, for any of the three subexpressions, the path cannot begin and end in nodes having data values $x_i$ and $\lnot x_i$ respectively.
    
    Then, we add the following constraint to ensure that, for every repair, the defined valuation evaluates the clauses to true:
    
    \begin{equation}\label{path:satis}
        \aPathb_j = (\down_\esLabel{down}^* [(c_1^j)^=] \down_\esLabel{down}^*) \cup( \down_\esLabel{down}^* [(c_2^j)^=] \down_\esLabel{down}^* )\cup( \down_\esLabel{down}^* [(c_3^j)^=] \down_\esLabel{down}^*)
    \end{equation}
    
    where $j$ iterates over $1 \leq j \leq m$ and $c_k^j$ represents the $k$th literal appearing in $c_j$, either $x_i$ or $\lnot x_i$, for some $i$. Any repair of $\aGraph$ will satisfy all the expressions from~\ref{path:satis}, which implies that in any repair there must be at least one variable from every clause that evaluates to true.
    
    In summary, we defined $\aRestrictionSet = \{\aPath_i$ $|$ $1 \leq i \leq n$\} $\cup$ $\{\aPathb_j$ $|$ $1 \leq j \leq m\}$. Now we show that $\phi$ is satisfiable if and only if $\aGraph$ has a superset repair with respect to $\aRestrictionSet$.
    
    $\implies)$ Let $f$ be a valuation of the variables of $\phi$ that evaluates $\phi$ to true. We define the data-graph $H=(V_H,L_H,D_H)$ as:
    
\begin{align*}
V_H &= \{v\} \cup \{x_ \mid 1 \leq i \leq n \mbox{ and } f(x_i) = \top\} \cup \{\lnot x_i \mid 1 \leq i \leq n \mbox{ and } f(x_i)= \bot\}\\
L(a,b) &= \Sigma_e, \mbox{ for every pair } a,b \in V_H\\
D(v) &= null\\
D(x_i) &= x_i, \mbox{ for every node } x_i \in V_H\\
D(\lnot x_i) &= \lnot x_i, \mbox{ for every node }\lnot x_i \in V_H
\end{align*}
    
    Every expression $\aPath_i$ is satisfied, since there are no pairs of nodes with `opposite' data values and the graph is fully connected. The expressions $\aPathb_j$ are also satisfied given that for every clause $c_j$, at least one of the variables that evaluates the clause to true is present in $H$. Since $\aGraph\subseteq H$ and $H \models \aRestrictionSet$ there exists a superset repair $\aGraph'$ of $\aGraph$ with respect to $\aRestrictionSet$.

    $\impliedby)$ Let $\aGraph'$ be a superset repair of $\aGraph$ with respect to $\aRestrictionSet$. We define a valuation on the variables of $\phi$ by considering the data values present in $\aGraph'$. If the data value $x_i$ is present then $f(x_i) = \top$, and we define $f(x_i) = \bot$ otherwise. Observe that since $\aPath_i$ is satisfied for every $i$ this is a valid valuation. Moreover, notice that since $\aPathb_j$ is satisfied for every $j$, this valuation makes every clause evaluate to true. Let us consider an arbitrary clause $c_j$. Since $\aPathb_j$ is satisfied in $\aGraph'$, one of the literals $x_k^j$ for $k \in \{1,2,3\}$ has to appear as data value in $\aGraph'$. If one of those literals that appear is positive (i.e. without negation) then it evaluates to true by $f$, and thus $c_j$ is satisfied. Otherwise, there is a literal with negation from $c_j$ as data value in $\aGraph'$, and it evaluates to true through $f$, and thus $c_j$ evaluates to true. This shows that $\phi$ is satisfiable.
    
\end{proof}


\section{Related Work}\label{Section:RelatedWork}

Models for graph databases and knowledge graphs have been developed intensively since the 1990s, along with query languages and integrity constraints for them \cite{angles2008survey,barcelo2013querying}. During the last years they have earned significant attention from industry and academy due to their 
efficiency when modeling diverse, dynamic and large-scale collections of data \cite{hogan2020knowledge}. There are open source knowledge graphs such as YAGO \cite{rebele2016yago} or DBPEDIA \cite{lehmann2015dbpedia} that implement many features studied by the research community.

\Gregxpath is a query language developed for graph databases with data values in the nodes \cite{libkin2016querying} largely inspired by \textit{XPath} \cite{benedikt2009xpath}, a language for traversing data trees (i.e. XML documents). \Gregxpath expressiveness is quite understood in relation to other query languages \cite{libkin2016querying}, but there are still some open problems related to it, such as query containment and query equivalence.
Two other candidates for this work were the RQMs and RDQs ~\cite{libkin2012regular}, the first is more expressive than \Gregxpath, but its evaluation belongs to \textsc{PSpace}, while the other one is less expressive than \Gregxpath and therefore its complexity is lower.
The main advantage of \Gregxpath in comparison with other query languages is its great balance between expressiveness and complexity of evaluation (being polynomial on the graph and query size), as well as the fact that it can interact with the topology of the graph and at the same time reason about the data that it contains.

Different types of integrity and path constraints have been defined and studied in this context \cite{buneman2000path,abiteboul1999regular,barcelo2017data}. There is not yet a standard definition of \textit{consistency} for a graph database over some constraints, and therefore we developed our own based on a set of typical examples found in the literature. Other type of graph database constraints can be expressed through graph patterns and graph dependencies \cite{fan2019dependencies}. 

The notion of \textit{database repair} and the \textit{repair computing} problem, as well as the \textit{Consistent Query Answering} (CQA) problem, were first introduced in the relational context \cite{arenas1999consistent}. Since then CQA has received much attention from the research community in data management, developing techniques and efficient algorithms under different notions of repairs and considering all kinds of combinations of classes of integrity constraints and queries \cite{bertossi2011database,bertossi2019database}. These concepts were successfully extended to diverse data models such as \textit{XML} or \textit{description logics}  \cite{bertossi2006consistent,tenCate:2012,lukasiewicz2015classical,wijsen2005database}. In the case of graph databases there has been some work concerning CQA over graphs without data values \cite{barcelo2017data}, but, to the best of our knowledge, no previous work studies the problem of computing repairs neither the case when the graphs have data values in their nodes.

When studying CQA and repair problems in contexts such as description logics, it seems more natural to focus on subset repairs, as the DL semantics is \textit{open world} in nature. This means that the non presence of a fact in the database (explicitly or implicitly) is not enough to derive the negation of the fact. In the case of graph databases, some applications can respond  either to \textit{closed} or \textit{open world} semantics, and therefore both types of set-based repairs could be meaningful. In this work, we have considered both types of set-based repairs, but we note that we left for future work the study of \textit{symmetric difference}-based repairs, which were actually studied in~\cite{barcelo2017data}. Usually reasoning on symmetric difference repairs is much harder, and in the case of graphs with data the definition of \textit{symmetric difference} is not as direct to formalize as it is in the models defined in~\cite{barcelo2017data}.








\section{Conclusions}\label{Section:Conclusions}

In this work, we presented a graph database model along with a notion of consistency based upon path constraints defined by the \Gregxpath language. We proved many results concerning the complexity of computing subset and superset repairs of data-graphs given a set of restrictions (which can be summarized in Table~\ref{table:1}). We proved that, depending on the semantics of the repair and the syntactic conditions imposed on the set of constraints, the problem ranges from polynomial-time solvable to undecidable.

When restricting the language to the positive fragment of \Gregxpath, which we denote \Gposregxpath, we obtained polynomial-time algorithms (on data complexity) for superset-repair computing, and when considering only node expressions from \Gposregxpath we found a polynomial-time algorithm for computing subset repair. We note that the algorithm for subset repairs runs in polynomial time on both the size of the graph $\aGraph$ and the size of the restriction set $\aRestrictionSet$ (in other words, we consider \textit{combined complexity} of the problem). Furthermore, for every intractable case studied, we also obtained tight bounds on the combined and data complexity of the problem.

Some questions regarding repair computing in this context remain open. For example, if the $\exists$\textsc{SUPERSET-REPAIR} problem remains undecidable when considering only a fixed set of node expressions from \Gregxpath. It does not seem direct to extend the same proofs and techniques used for the case of path expressions or node expressions (under unfixed constraints).

Another venue of future research we are interested in is the development of preference criteria over repairs. In the problems we studied, we were concerned in finding \textit{any} repair of a data-graph given a set of constraints. This assumes no prior information about the domain semantics. However, in many contexts we might have a preference criterion that may yield an ordering over all possible repairs.

The complexity of CQA under this context has not been studied yet. Taking into account the results from \cite{Cate:2015} and \cite{barcelo2017data} it is plausible that reasoning over superset repairs will turn out to be harder than in subset repairs, but meanwhile the complexity of the problem remains unknown. Nevertheless, we note that the CQA problem turns out to be trivial in some cases based on the facts developed through this work concerning \Gregxpath and \Gposregxpath; for example, when considering subset repairs and \Gposregxpath there is a unique repair which is computable in polynomial time, and therefore the CQA problem is easy to solve.

\begin{table}[H]
\begin{tabular}{|l|p{6.2cm}|p{5.375cm}|}
\hline
 & \multicolumn{1}{c|}{Subset} & \multicolumn{1}{c|}{Superset}  \\ \hline
\Gregxpath   & \begin{tabular}[c]{@{}l@{}}NP-complete, even for a fixed set \\ of node constraints  (Th.~\ref{teo:subsetNodeHardness})\end{tabular} & \begin{tabular}[c]{@{}l@{}}Undecidable for fixed set of \\ path expressions (Th.~\ref{teo:supersetIndecidible}) and \\ for an unfixed set of node \\
expressions (Th.~\ref{teo:undecidabilityNodeExpressions})\end{tabular} \\
\hline
\multicolumn{1}{|c|}{\Gposregxpath} & \begin{tabular}[c]{@{}l@{}}PTime for node constraints. (Th.~\ref{teo:uniqueSubsetRepair}) \\ NP-complete in data complexity for \\ path constraints (Th.~\ref{teo:subsetPositiveIntractable}) \end{tabular} & \begin{tabular}[c]{@{}l@{}}PTime for any fixed constraints\\ (Th.~\ref{teo:tractableSuperset}) or for an unfixed set of \\ node expressions (Th.~\ref{teo:supersetPostNodeEpxressionTractable}). \\ NP-complete otherwise (Th.~\ref{teo:supsetHardInfinite}) \end{tabular} \\ \hline
\end{tabular}
\caption{Results for subset and superset repair computing problems.}
\label{table:1}
\end{table}

\section*{Acknowledgements}

We thank Ricardo Oscar Rodriguez (Departamento de Computación, Universidad de Buenos Aires) for his helpful comments and suggestions during the development of this work.



\section{Appendix}\label{Appendix}

\appendix

\begin{proof}[Proof of Lemma~\ref{monotony}]\label{Proof:Monotony}
We will prove it by induction over the structure of the expressions.
Let us note $\aGraph = (V_\aGraph, L_e, \dataFunction)$ and $\aGraph'= (V_{\aGraph'}, L_{e'}, \dataFunction')$.

In the base case of a path expression $\aPath$, it must be of the form $\epsilon$, $\_$, $\aLabel$ or $\aLabel^-$. In all of these cases the property holds: since $\aGraph \subseteq \aGraph'$, we have that $\{(x,x) \mid x \in V_{\aGraph}\} \subseteq \{(x,x) \mid x \in V_{\aGraph'}\}$, that  $\{(x,y) \mid x,y \in V_{\aGraph} \land L_e(x,y) \neq \emptyset \} \subseteq \{(x,y) \mid x,y \in V_{\aGraph'} \land L_{e'}(x,y) \neq \emptyset\}$, etc.

For the base case of a node expression $\aFormula$, it can only be of the form $\esDatoIgual{c}$ or $\esDatoDistinto{c}$, and it is easy to see that the property also holds.

For the inductive step, we consider all possible cases:

\begin{itemize}[label=$\circ$] 
    \item If $\aPath = \expNodoEnCamino{\aFormulab}$ then $\semantics{\aFormulab}_{\aGraph} \subseteq \semantics{\aFormulab}_{\aGraph'}$ (which holds by inductive hypothesis) implies $\semantics{\expNodoEnCamino{\aFormulab}}_{\aGraph} \subseteq \semantics{\expNodoEnCamino{\aFormulab}}_{\aGraph'}$.
    
    \item If $\aPath = \aPathb_1 . \aPathb_2$ then $(v,w) \in \semantics{\aPath}_{\aGraph}$ $\iff$ $\exists z$ such that $(v,z) \in \semantics{\aPathb_1}_{\aGraph}$ and $(z,w) \in \semantics{\aPathb_2}_{\aGraph}$. Since $\semantics{\aPathb_i}_{\aGraph} \subseteq \semantics{\aPathb_i}_{\aGraph'}$ for $i \in \{1,2\}$, then $(v,z) \in \semantics{\aPathb_1}_{\aGraph'}$ and $(z,w) \in \semantics{\aPathb_2}_{\aGraph'}$, which implies $(v,w) \in \semantics{\aPath}_{\aGraph'}$.
    
    \item If $\aPath = \aPathb_1 \cup \aPathb_2$ then $\semantics{\aPathb_i}_{\aGraph} \subseteq \semantics{\aPathb_i}_{\aGraph'}$ for $i \in \{1,2\}$ readily implies $\semantics{\aPath}_{\aGraph} \subseteq \semantics{\aPath}_{\aGraph'}$. 
    
    \item If $\aPath = \aPathb ^*$ then $(v,w) \in \semantics{\aPathb^*}_{\aGraph}$ $\iff$ $\exists z_1,z_2,...,z_m$ such that $z_1=v$, $z_m=w$ and $(z_i,z_{i+1}) \in \semantics{\aPathb}_{\aGraph}$ for $1\leq i < m$. Since $\semantics{\aPathb}_{\aGraph} \subseteq \semantics{\aPathb}_{\aGraph'}$ then $(z_i,z_{i+1}) \in \semantics{\aPathb}_{\aGraph'}$ for $1 \leq i < m$, which implies $(v,w) \in \semantics{\aPathb^*}_{\aGraph'}$.
    
    \item If $\aPath = \aPathb^{n,m}$ then $(v,w) \in \semantics{\aPathb^{n,m}}_\aGraph \iff \exists z_1,...,z_k, n+1 \leq k \leq m+1$ such that $z_1=v$, $z_k=w$ and $(z_i, z_{i+1}) \in \semantics{\aPathb}_\aGraph$ for $1 \leq i \leq k-1$. By the inductive hypothesis then $(z_i,z_{i+1}) \in \semantics{\aPathb}_{\aGraph'}$, which implies $(z,w) \in \semantics{\aPath^{n,m}}_{\aGraph'}$.
    
    \item If $\aPath = \aPathb_1 \cap \aPathb_2$ then $\semantics{\aPathb_i}_\aGraph \subseteq \semantics{\aPathb_i}_{\aGraph'}$ for $i \in \{1,2\}$ implies $\semantics{\aPathb_1 \cap \aPathb_2}_\aGraph = \semantics{\aPathb_1}_\aGraph \cap \semantics{\aPathb_2}_\aGraph \subseteq \semantics{\aPathb_1}_{\aGraph'} \cap \semantics{\aPathb_2}_{\aGraph'} = \semantics{\aPathb_1 \cap \aPathb_2}_{\aGraph'}$.
    
    \item If $\aFormula = \aFormulab_1 \wedge \aFormulab_2$ then $\semantics{\aFormulab_i}_{\aGraph} \subseteq \semantics{\aFormulab_i}_{\aGraph'}$ for $i \in \{1,2\}$ readily implies $\semantics{\aFormulab_1 \wedge \aFormulab_2}_{\aGraph} = \semantics{\aFormulab_1} \cap \semantics{\aFormulab_2} \subseteq \semantics{\aFormulab_1}_{\aGraph'} \cap \semantics{\aFormulab_2}_{\aGraph'} = \semantics{\aFormulab_1 \wedge \aFormulab_2}_{\aGraph'}$
    
    \item The case $\aFormula = \aFormulab_1 \vee \aFormulab_2$ can be treated in the same way.
    
    \item If $\aFormula = \comparacionCaminos{\aPathb}$ then $\semantics{\aPathb}_{\aGraph} \subseteq \semantics{\aPathb}_{\aGraph'}$ implies $\semantics{\comparacionCaminos{\aPathb}}_{\aGraph} \subseteq \semantics{\comparacionCaminos{\aPathb}}_{\aGraph'}$
    
    \item If $\aFormula = \comparacionCaminos{\aPathb_1 \star \aPathb_2}$, with $\star \in \{=, \neq\}$, then $v \in \semantics{\aFormula}_{\aGraph}$ $\iff$ $\exists z_1,z_2 \in V_\aGraph$ such that $(v, z_i) \in \semantics{\aPathb_i}_{\aGraph}$ for $i \in \{1,2\}$ and $\dataFunction(z_1) \star \dataFunction(z_2)$. Since $\semantics{\aPathb_i}_{\aGraph} \subseteq \semantics{\aPathb_i}_{\aGraph'}$, then $(v,z_i) \in \semantics{\aPathb_i}_{\aGraph'}$. And since we also have that $\dataFunction(z_i) = \dataFunction'(z_i)$ (since the $z_i$ are in $V_\aGraph$), we obtain that $v \in \semantics{\comparacionCaminos{\aPathb_1 \star \aPathb_2}}_{\aGraph'}$, as we wanted.
    

\end{itemize}
\end{proof}

\begin{proof}[Proof of Lemma~\ref{teo:dataValuesInR}]
In order to prove this Lemma we need to first prove another property of \Gposregxpath expressions:

\begin{lemma}\label{lemma:completeAndValues}
Let $\aPath$ be a \Gposregxpath path expression, $\aFormula$ a \Gposregxpath node expression, $E \subseteq \Sigma_e$ a set of edge labels and $K$ a data-graph where for every pair of nodes $v,w \in V_K$ it is the case that $L(v,w) = E$. Then it holds that:

\begin{enumerate} 
    \item \label{state:one}If $(v,w) \in \semantics{\aPath}_K$, $v\neq w$, and $\dataFunction_K(w) \notin \Sigma_n^\aPath$, then we have that $(v,z) \in \semantics{\aPath}_K$ for every node $z\in V_K$ such that $\dataFunction_K(z) \notin \Sigma_n^\aPath$. 
    
    \item \label{state:two}If $(w,v) \in \semantics{\aPath}_K$ , $v\neq w$, and $\dataFunction_K(w) \notin \Sigma_n^\aPath$, then we have that $(z,v) \in \semantics{\aPath}_K$ for every node $z\in V_K$ such that $\dataFunction_K(z) \notin \Sigma_n^\aPath$. 
   
    \item \label{state:three}If $(v,v) \in \semantics{\aPath}_K$ and $\dataFunction_K(v) \notin \Sigma_n^\aPath$ then $(z,z) \in \semantics{\aPath}_K$ for every node $z\in V_K$ such that $\dataFunction_K(z) \notin \Sigma_n^\aPath$.
    
    \item \label{state:four}If $v \in \semantics{\aFormula}_K$ for some $v \in V_K$ such that $\dataFunction_K(v) \notin \Sigma_n^\aFormula$ then $z \in \semantics{\aFormula}_K$ for every node $z\in V_K$ that satisfies $\dataFunction_K(z) \notin \Sigma_n^\aFormula$.
\end{enumerate}
\end{lemma}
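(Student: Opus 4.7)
The plan is to prove all four statements simultaneously by mutual structural induction on the path expression $\aPath$ and the node expression $\aFormula$. Because path-level subexpressions can appear inside node-level ones (via $\comparacionCaminos{\cdot}$, $\comparacionCaminos{\aPath=\aPathb}$, $\comparacionCaminos{\aPath\neq\aPathb}$, $\expNodoEnCamino{\aFormula}$), the four statements really have to be carried through the same induction. Throughout, I will freely use that the graph $K$ is edge-homogeneous (every ordered pair of nodes carries exactly the label set $E$), so the ``identity'' of a node is felt by a \Gposregxpath\ expression only through its data value.

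For the base cases the statements are either immediate or vacuous. For $\aPath=\epsilon$, the only pairs in $\semantics{\aPath}_K$ are diagonal, so statements 1 and 2 are vacuously true and statement 3 holds trivially. For $\aPath\in\{\labelComodin,\aLabel,\aLabel^-\}$, membership in $\semantics{\aPath}_K$ depends only on $E$, which is the same between any two nodes, so one can replace an endpoint freely. For $\aFormula=\esDatoIgual{c}$, the hypothesis $\dataFunction_K(v)\notin\Sigma_n^\aFormula=\{c\}$ contradicts $v\in\semantics{\aFormula}_K$, so statement 4 is vacuous; for $\aFormula=\esDatoDistinto{c}$, both the hypothesis on $v$ and the quantifier on $z$ force $\dataFunction_K(\cdot)\neq c$, hence $z\in\semantics{\aFormula}_K$.

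For the inductive step the easy combinators are union, intersection, boolean conjunction/disjunction, bounded iteration $\aPath^{n,m}$, and the node test $[\aFormula]$: one simply applies the appropriate statement of the IH to each component and observes that $\Sigma_n^{\eta_1\star\eta_2}=\Sigma_n^{\eta_1}\cup\Sigma_n^{\eta_2}$. The composition case $\aPath . \aPathb$ is the first where structure matters: given a witness $y$ with $(v,y)\in\semantics{\aPath}_K$ and $(y,w)\in\semantics{\aPathb}_K$, I case-split on whether $y=w$ (apply IH~3 to $\aPathb$ and IH~1 to $\aPath$), $y=v$ (apply IH~3 to $\aPath$ and IH~1 to $\aPathb$), or $y\notin\{v,w\}$ (apply IH~1 to $\aPathb$). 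For $\aPath=\aPathb^*$, I first argue that a witnessing walk $v=v_0,\dots,v_k=w$ can be assumed to have $v_{k-1}\neq w$ (repeatedly drop trailing loops $v_i=v_{i+1}=w$; this is possible because $v\neq w$), and then IH~1 applied to the final edge replaces $w$ by $z$; statements 2 and 3 are handled symmetrically. For $\aFormula=\comparacionCaminos{\aPath}$, a witness $w$ for $v$ either equals $v$ (use IH~3 on $\aPath$) or differs from it (use IH~2 to move the source from $v$ to $z$); in the first sub-case $w$ must be re-instantiated as $z$ itself.

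The main obstacle will be the data-comparison cases $\comparacionCaminos{\aPath=\aPathb}$ and $\comparacionCaminos{\aPath\neq\aPathb}$, because one must preserve not only membership in both $\semantics{\aPath}_K$ and $\semantics{\aPathb}_K$ but also the (in)equality of data values of the two witnesses. With witnesses $v',v''$ for $v$, the proof splits into cases by whether each of $v',v''$ equals $v$ or not, and further by whether $\dataFunction_K(v'),\dataFunction_K(v'')$ lie inside or outside $\Sigma_n^\aPath\cup\Sigma_n^\aPathb$. Witnesses outside that set can be relocated to other nodes via IH~1 (for the target) without disturbing the comparison, because the data-value sets $\Sigma_n^\aPath\cup\Sigma_n^\aPathb$ swallow precisely the constants the expressions can test against; for the equality version the natural choice is $z'=z''=z$ (obtained by IH~3 whenever $v'=v''=v$) or keeping both originals and moving only the source via IH~2. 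For the disequality version one additionally uses that, since $\dataFunction_K(v)\notin\Sigma_n^\aPath\cup\Sigma_n^\aPathb$, any data value distinct from $\dataFunction_K(z)$ in the original witnesses can be preserved unchanged by IH~1/IH~2. A careful tabulation of these sub-cases, together with the previously discharged inductive clauses, completes the argument.
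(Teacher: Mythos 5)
Your plan follows the paper's proof essentially step for step: a simultaneous structural induction on the four statements, the same base cases, the same case split on whether an intermediate or witness node coincides with an endpoint, and the same use of statements 1--3 to relocate sources and targets. Two points deserve attention. First, in the composition case with $y=v$ you propose to apply statement 3 to the first component; that would require $\dataFunction_K(v)\notin\Sigma_n^{\aPathb_1}$, which is not among the hypotheses of statement 1 (only $\dataFunction_K(w)\notin\Sigma_n^{\aPath}$ is assumed). Fortunately it is also unnecessary: the source stays at $v$, so $(v,v)\in\semantics{\aPathb_1}_K$ can be reused verbatim and only the second leg needs statement 1.

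Second, and more substantively, your treatment of $\comparacionCaminos{\aPathb_1\neq\aPathb_2}$ stops short of the one sub-case that actually requires an idea. Suppose the witness $v'$ for $\aPathb_1$ equals $v$, so it must be relocated to $z$ via statement 3, and suppose the other witness $v''\neq v$ happens to satisfy $\dataFunction_K(v'')=\dataFunction_K(z)$: after relocation the two witnesses carry equal data values and the disequality is lost. Your sentence about preserving ``any data value distinct from $\dataFunction_K(z)$'' covers only the complementary situation. The paper's fix is to observe that in this sub-case $\dataFunction_K(v'')=\dataFunction_K(z)\notin\Sigma_n^{\aFormula}$, so statement 1 lets you re-target the $\aPathb_2$-witness from $v''$ to $v$ itself, and then $\dataFunction_K(v)=\dataFunction_K(v')\neq\dataFunction_K(v'')=\dataFunction_K(z)$ restores the disequality. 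This maneuver is within the toolkit you set up, but it is the crux of the lemma's hardest case and should be made explicit rather than delegated to ``a careful tabulation.''
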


\begin{proof}[Proof of Lemma~\ref{lemma:completeAndValues}]
We present a proof of the four properties simultaneously by structural induction. For the base cases:

\begin{itemize}[label=$\circ$]
    
    \item $\aPath=\labelComodin$: if $E \neq \emptyset$ then every pair $v,z \in V_K$ is in $\semantics{\labelComodin}_K$, and this case follows trivially. Otherwise there are no edges in the graph, and there isn't any hypothesis that can be satisfied.
    
    \item $\aPath = \aLabel$ and $\aPath = \aLabel^-$: if there is a pair of nodes $v,w \in V_K$ such that $(v,w) \in \semantics{\aLabel}_K$ then $\aLabel \in E$, and every pair of nodes $x,y \in V_K$ satisfies $(x,y) \in \semantics{\aLabel}_K$. The same reasoning holds for $\aLabel^-$.
    
    \item $\aPath = \epsilon$: this case follows trivially, since the hypothesis of the statements \ref{state:one} and \ref{state:two} cannot be satisfied, and the consequent of statement \ref{state:three} is always satisfied.
    
    \item $\aFormula = \esDatoIgual{e}$: the hypothesis of statement \ref{state:four} cannot be satisfied in this case, and then this follows trivially.
    
    \item $\aFormula = \esDatoDistinto{e}$: every node $v \in V_K$ such that $\dataFunction_K(v) \notin \Sigma_n^{\esDatoDistinto{e}} = \{\esDato{e}\}$
    satisfies $v \in \semantics{\esDatoDistinto{e}}_K$.
    
\end{itemize}

We now proceed with the inductive cases:

\begin{itemize}[label=$\circ$]

    \item $\aPath = [\aFormulab]$: the hypothesis of statements~\ref{state:one} and \ref{state:two} cannot be satisfied in this case, and therefore we only consider statement~\ref{state:three}. Suppose there is a node $v \in V_K$ such that $\dataFunction_K(v) \notin \Sigma_n^ {[\aFormulab]}$ and $(v,v) \in \semantics{[\aFormulab]}_K$. By hypothesis, every other node $z \in V_K$ such that $\dataFunction_K(z) \notin \Sigma_n^{\aFormulab}$ satisfies $z \in \semantics{\aFormulab}_K$, which then implies that $(z,z) \in \semantics{[\aFormulab]}_K$.
    
    \item $\aPath = \aPathb_1 . \aPathb_2$: for the statement~\ref{state:one}, let $v,w \in V_K$ satisfy its hypothesis ($(v,w) \in \semantics{\aPath}_K$, $v\neq w$, $\dataFunction(w)\notin \Sigma_n^{\aPath}$), and let $x$ be a node such that $(v,x) \in \semantics{\aPathb_1}_K$ and $(x,w) \in \semantics{\aPathb_2}_K$. Then if $x\neq w$ we know by the inductive hypothesis that $(x,z) \in \semantics{\aPathb_2}_K$ for every $z\in V_K$ with $\dataFunction_K(z) \notin \Sigma_n^{\aPathb_2}$  (which is a superset of those nodes $y$ such that $\dataFunction_K(y) \notin \Sigma_n^{\aPathb_1 . \aPathb_2}$), and then we can conclude that $(v,z) \in \semantics{\aPathb_1 . \aPathb_2}_K$. If $w=x$ the result follows by considering that $(v,z) \in \semantics{\aPathb_1}_K$ and then $(z,z) \in \semantics{\aPathb_2}_K$ by the statement~\ref{state:three}. The statement~\ref{state:two} can be proved the same way.
    
    Now for statement~\ref{state:three} let $v$ be a node from $K$ that satisfies its hypothesis ($(v,v) \in \semantics{\aPath}_K$ and $\dataFunction_K(z) \notin \Sigma_n^{\aPath}$) and let $x$ be the `intermediate' node. Let $z$ be a node such that $\dataFunction_K(z) \notin \Sigma_n^{\aPathb_1 . \aPathb_2}$. If $x=v$ it follows by using statement~\ref{state:three} as inductive hypothesis that $(z,z) \in \semantics{\aPathb_1 . \aPathb_2}_K$. Otherwise (i.e. $x \neq v$) by using statement~\ref{state:one} and \ref{state:two} as inductive hypothesis we can obtain the same result.

    \item $\aPath = \aPathb_1 \cup \aPathb_2$: this follows by inductive hypothesis considering, without loss of generality, that $(v,w) \in \semantics{\aPathb_1}_K$ (and respectively $(w,v)$ and $(v,v)$ for statements $\ref{state:two}$ and $\ref{state:three}$).
    
    \item The case $\aPath = \aPathb_1 \cap \aPathb_2$ can be treated in a similar way.
    
    \item $\aPath = \aPathb^*$: if $(v,w) \in \semantics{\aPathb^*}_K$ then either $(v,w) \in \semantics{\epsilon}_K$ or $(v,w) \in \semantics{\aPathb^n}_K$ for some $n > 0$. The first scenario has already been proved, while the second one follows from what we have seen in the concatenation case.
    
    \item If $\aPath = \aPathb^{n,m}$: the result follows with the same argument used in the $\aPathb^*$ case since $(v,w) \in \semantics{\aPathb^{n,m}} \iff (v,w) \in \semantics{\aPathb^k}$ for some $n \leq k \leq m$.

    \item $\aFormula = \aFormulab_1 \wedge \aFormulab_2$: let $v \in V_K$ be a node such that $v \in \semantics{\aFormula}_K$ and $\dataFunction_K(v) \notin \Sigma_n^\aFormula$. By inductive hypothesis, every other $z \in V_K$ with $\dataFunction_K(z) \notin \aFormula$ satisfies $z \in \semantics{\aFormulab_i}_K$ for $i \in \{1,2\}$, which implies that $z \in \semantics{\aFormula}_K$.
    
    \item $\aFormula = \aFormulab_1 \vee \aFormulab_2$: let $v \in V_K$ satisfy the hypothesis of statement~\ref{state:four}, and without loss of generality assume $v \in \semantics{\aFormulab_1}_K$. By inductive hypothesis every other node $z \in V_K$ such that $D_K(z) \notin \Sigma_n^\aFormula$ satisfies $z \in \semantics{\aFormulab_1}_K$, which implies that $z \in \semantics{\aFormulab_1 \vee \aFormulab_2}_K$.
    
    \item $\aFormula = \comparacionCaminos{\aPathb}$: let $v \in V_K$ satisfy the hypothesis of statement~\ref{state:four}. Then $(v,x) \in \semantics{\aPathb}_K$ for some $x$. Thanks to statement~\ref{state:two} we can deduce that if $x \neq v$ then $(z,x) \in \semantics{\aPathb}_K$ for every $z$ such that $\dataFunction_K(z) \notin \Sigma_n^{\comparacionCaminos{\aPathb}}$. If $x=v$ we use statement~\ref{state:three}.
    
    \item $\aFormula = \comparacionCaminos{\aPathb_1 = \aPathb_2}$: let $v \in V_K$ satisfy the hypothesis of statement~\ref{state:four}, $x_1$ and $x_2$ be nodes such that $(v,x_i) \in \semantics{\aPathb_i}_K$ for $i \in \{1,2\}$ and  $\dataFunction_K(x_1) = \dataFunction_K(x_2)$, and $z \in V_K$ other node such that $\dataFunction_K(z) \notin \Sigma_n^\aFormula$. If both $x_i$ are different from $v$ then $(z,x_i) \in \semantics{\aPathb_i}_K$ thanks to statement $\ref{state:two}$. Otherwise, suppose $x_1 = v$ without loss of generality, which then implies that $(z,z) \in \semantics{\aPathb_1}_K$ due to statement $\ref{state:three}$. Since $\dataFunction_K(x_2) = \dataFunction_K(x_1) = \dataFunction_K(v) \notin \Sigma_n^\aFormula$ we can guarantee that $(z,z) \in \semantics{\aPathb_2}_K$: if $x_2 = v$ then this follows due to statement~\ref{state:three}, and if $x_2 \neq v$ then due to statement~\ref{state:two} we know that $(x_2,x_2) \in \semantics{\aPathb_2}_K$ and then that $(z,z) \in \semantics{\aPathb_2}_K$ through statement $\ref{state:three}$.

    \item $\aFormula = \comparacionCaminos{\aPathb_1 \neq \aPathb_2}$: let $v \in V_K$ satisfy the hypothesis of statement~\ref{state:four}, $x_1$ and $x_2$ be nodes such that $(v,x_i) \in \semantics{\aPathb_i}_K$ for $i \in \{1,2\}$ and  $\dataFunction_K(x_1) \neq \dataFunction_K(x_2)$, and $z \in V_K$ other node such that $\dataFunction_K(z) \notin \Sigma_n^\aFormula$. If $v \neq x_i$ for $i \in \{1,2\}$ then the result follows by statement~\ref{state:two}. Otherwise, consider that $x_1 = v$, which then implies that $(z,z) \in \semantics{\aPathb_1}_K$ by statement~\ref{state:three}. If $x_2 \neq v$ then by statement~\ref{state:two} we can conclude that $(z,x_2) \in \semantics{\aPathb_2}_K$. If $\dataFunction_K(x_2) \neq \dataFunction_K(z)$ we have finished, and if not, knowing that $\dataFunction_K(x_2) \notin \Sigma_n^\aFormula$ (this follows from $\dataFunction_K(x_2) = \dataFunction_K(z)$), we can consider that $(z,v) \in \semantics{\aPathb_2}_K$ by statement $\ref{state:one}$ and that $\dataFunction_K(x_2) \neq \dataFunction_K(x_1) = \dataFunction_K(v)$. The case $x_2 = v$ cannot happen, since we are assuming that $x_1 = v$ and $\dataFunction_K(x_1) \neq \dataFunction_K(x_2)$.
    
\end{itemize}

\end{proof}

Now we continue with the Lemma~\ref{teo:dataValuesInR}.

Let $\aGraph'$ be a superset repair of $\aGraph$ with respect to $\aRestrictionSet$. If $\Sigma_n^{\aGraph'} \subseteq \Sigma_n^\aGraph \cup \Sigma_n^\aRestrictionSet \cup \{c,d\}$ for some $c,d \in \Sigma_n \setminus \Sigma_n^\aRestrictionSet$, $c \neq d$, then $\aGraph'$ is already the witness we want for this lemma. Otherwise, take $c,d \in \Sigma_n^{\aGraph'} \setminus (\Sigma_n^\aGraph \cup \Sigma_n^\aRestrictionSet)$, $c\neq d$. We define the graph $H=(V_H,L_H,\dataFunction_H)$ where:

\begin{center}
    $V_H = V_{\aGraph'}$
    
    \bigskip 
    
    $L_H(v,w) = \Sigma_e^{G'} \quad \forall v,w \in V_H$
    
     \bigskip 
    
    
    
    
     $$\dataFunction_H(v)=\begin{cases}
		\dataFunction_{\aGraph'}(v)  &  \text{\small if $\dataFunction_{\aGraph'}(v) \in \Sigma_n^{\aGraph} \cup \Sigma_n^\aRestrictionSet \cup \{c, d\}$ } \\ 
		c   & \text{otherwise}
		\end{cases}
	   $$
    
\end{center}

where $\Sigma_e^{G'}$ is the set of all edge labels mentioned in $G'$ (i.e. $\Sigma_e^{G'} = \{l \in \Sigma_e$ $|$ $\exists v,w \in V_{G'}$ such that $l \in L_{G'}(v,w)\}$).

Intuitively, we have changed all data values that were not in $\Sigma_n^\aGraph \cup \Sigma_n^\aRestrictionSet \cup \{c,d\}$ into $c$, and we also added every possible edge considering those labels already present in $\aGraph'$. Note that $\Sigma_n^H \subseteq \Sigma_n^\aGraph \cup \Sigma_n^\aRestrictionSet \cup \{c,d\}$, so showing that $H \models \aRestrictionSet$ is enough to prove the lemma. We will show by induction in the expressions' structure that if a pair $v,w \in V_{\aGraph'}$ satisfies $(v,w) \in \semantics{\aPath}_{\aGraph'}$ then $(v,w) \in \semantics{\aPath}_H$, and that if $v \in \semantics{\aFormula}_{\aGraph'}$ then $v \in \semantics{\aFormula}_H$ (where the expressions considered only use data values from $\Sigma_n^\aRestrictionSet$). 

For the base cases:

\begin{itemize}[label=$\circ$]
    
    \item $\aPath=\labelComodin$: if some pair $(v,w)$ is in $\semantics{\labelComodin}_{\aGraph'}$ then $\Sigma_e^{G'} \neq \emptyset$, and $(v,w) \in \semantics{\labelComodin}_{\aGraph'}$.
    
    \item $\aPath = \aLabel$ and $\aPath = \aLabel^-$: if $(v,w) \in \semantics{\aLabel}_{G'}$ then clearly $\aLabel \in \Sigma_e^{G'}$ and $\aLabel \in L_{H}(v,w)$.
    
    \item $\aPath = \epsilon$: this case follows trivially.
    
    \item $\aFormula = \esDatoIgual{e}$: if $v \in \semantics{\esDatoIgual{e}}_{\aGraph'}$ then $\dataFunction_{\aGraph'}(v)=\esDato{e}$. Since $\esDato{e} \in \Sigma_n^\aRestrictionSet$ the data value was preserved in $H$ ($\dataFunction_H(v)=\esDato{e}$) and thus $v \in \semantics{\esDatoIgual{e}}_H$.
    
    \item $\aFormula = \esDatoDistinto{e}$: let $v \in \semantics{\esDatoDistinto{e}}_{\aGraph'}$. If $\dataFunction_{\aGraph'}(v) \in \Sigma_n^\aGraph \cup \Sigma_n^\aRestrictionSet \cup \{c, d\}$ then the data value of $v$ was preserved, and $v \in \semantics{\esDatoDistinto{e}}_H$. Otherwise $\dataFunction_H(v)= \esDato{c}$, and since $\esDato{c} \notin \Sigma_n^\aRestrictionSet$ we know $v \in \semantics{\esDatoDistinto{e}}_H$.
    
\end{itemize}

Now, for the inductive cases:

\begin{itemize}[label=$\circ$]

    \item $\aPath = [\aFormulab]$: let $(v,v) \in \semantics{[\aFormulab]}_{\aGraph'}$, which implies that $v \in \semantics{\aFormulab}_{\aGraph'}$. By inductive hypothesis we have that $v \in \semantics{\aFormulab}_H$, and then $(v,v) \in \semantics{[\aFormulab]}_H$.
    
    \item $\aPath = \aPathb_1 . \aPathb_2$: if $(v,w) \in \semantics{\aPathb_1 . \aPathb_2}_{\aGraph'}$ then there exists $z \in V_{\aGraph'}$ such that $(v,z) \in \semantics{\aPathb_1}_{\aGraph'}$ and $(z,w) \in \semantics{\aPathb_2}_{\aGraph'}$. By inductive hypothesis $(v,z) \in \semantics{\aPathb_1}_H$ and $(z,w) \in \semantics{\aPathb_2}_H$, which implies that $(v,w) \in \semantics{\aPathb_1 . \aPathb_2}_H$.

    \item $\aPath = \aPathb_1 \cup \aPathb_2$: let $(v,w) \in \semantics{\aPathb_1 \cup \aPathb_2}_{\aGraph'}$. Without loss of generality, we can assume that $(v,w) \in \semantics{\aPathb_1}_{\aGraph'}$. By inductive hypothesis $(v,w) \in \semantics{\aPathb_1}_H$, and then $(v,w) \in \semantics{\aPathb_1 \cup \aPathb_2}_H$.
    
    \item $\aPath = \aPathb_1 \cap \aPathb_2$: since $(v,w) \in \semantics{\aPathb_1 \cap \aPathb_2}_\aGraph \iff (v,w) \in \semantics{\aPathb_1}_\aGraph \cap \semantics{\aPathb_2}_\aGraph$ then the inductive hypothesis implies $(v,w) \in \semantics{\aPathb_1}_H \cap \semantics{\aPathb_2}_H = \semantics{\aPathb_1 \cap \aPathb_2}_H$.
    
    \item $\aPath = \aPathb^*$: $(v,w) \in \semantics{\aPathb^*}_{\aGraph'}$ if there exists $z_1,...z_m \in V_{\aGraph'}$ such that $z_1 = v$, $z_m = w$ and $(z_i, z_{i+1}) \in \semantics{\aPathb}_{\aGraph'}$ for $1\leq i \leq m-1$. By inductive hypothesis $(z_i,z_{i+1}) \in \semantics{\aPathb}_H$ for $1 \leq i \leq m-1$, and then $(v,w) \in \semantics{\aPathb^*}_H$.
    
    \item The case $\aPath = \aPathb^{n,m}$ follows with the same argument used in the $.^*$ case.
    
    \item $\aFormula = \aFormulab_1 \wedge \aFormulab_2$: let $v \in \semantics{\aFormulab_1 \wedge \aFormulab_2}_{\aGraph'}$. Since $v \in \semantics{\aFormulab_i}_{\aGraph'}$ for $i \in \{1,2\}$ we know by hypothesis that $v \in \semantics{\aFormulab_i}_H$ for $i \in \{1,2\}$, which implies that $v \in \semantics{\aFormulab_1 \wedge \aFormulab_2}_H$.
    
    \item $\aFormula = \aFormulab_1 \vee \aFormulab_2$: we can assume without loss of generality that if $v \in \semantics{\aFormulab_1 \vee \aFormulab_2}_\aGraph$ then $v \in \semantics{\aFormulab_1}_\aGraph$, and we can readily conclude by inductive hypothesis that $v \in \semantics{\aFormulab_1}_H \subseteq \semantics{\aFormulab_1 \vee \aFormulab_2}_H$.
    
    \item $\aFormula = \comparacionCaminos{\aPathb}$: if $v \in \semantics{\comparacionCaminos{\aPathb}}_{\aGraph'}$ then there exists $z\in V_{\aGraph'}$ such that $(v,z) \in \semantics{\aPathb}_{\aGraph'}$. Then by hypothesis $(v,z) \in \semantics{\aPathb}_H$ and $v \in \semantics{\comparacionCaminos{\aPathb}}_H$.
    
    \item $\aFormula = \comparacionCaminos{\aPathb_1 = \aPathb_2}$: if $v \in \semantics{\comparacionCaminos{\aPathb_1 = \aPathb_2}}_{\aGraph'}$ then there exists $z_1,z_2 \in V_{\aGraph'}$ such that $(v,z_1) \in \semantics{\aPathb_1}_{\aGraph'}$, $(v,z_2) \in \semantics{\aPathb_2}_{\aGraph'}$ and $\dataFunction_{\aGraph'}(z_1)=\dataFunction_{\aGraph'}(z_2)$. By construction $\dataFunction_H(z_1)=\dataFunction_H(z_2)$, and by hypothesis $(v,z_1) \in \semantics{\aPathb_1}_H$ and $(v,z_2) \in \semantics{\aPathb_2}_H$, which implies that $v \in \semantics{\comparacionCaminos{\aPathb_1 = \aPathb_2}}_H$. 
    
    \item $\aFormula = \comparacionCaminos{\aPathb_1 \neq \aPathb_2}$: this is the most interesting case. Note that the paths used in $\aGraph$ to satisfy the expression could now lead to nodes with the same data value. If $v \in \semantics{\comparacionCaminos{\aPathb_1 \neq \aPathb_2}}_{\aGraph'}$ then there exist nodes $z_1,z_2 \in V_{\aGraph'}$ such that $(v,z_1) \in \semantics{\aPathb_1}_{\aGraph'}$, $(v,z_2) \in \semantics{\aPathb_2}_{\aGraph'}$ and $\dataFunction_{\aGraph'}(z_1) \neq \dataFunction_{\aGraph'}(z_2)$. If $\dataFunction_H(z_1) \neq \dataFunction_H(z_2)$ then we can conclude using the inductive hypothesis that $v \in \semantics{\comparacionCaminos{\aPathb_1 \neq \aPathb_2}}_H$. If this is not the case, then both nodes had in ${\aGraph'}$ as data value either $c$ or one of those we changed (i.e. one from $\Sigma_n^{\aGraph'} \setminus (\Sigma_n^\aGraph \cup \Sigma_n^\aRestrictionSet \cup \{c,d\})$. Note that since $\dataFunction_{\aGraph'}(z_1) \neq \dataFunction_{\aGraph'}(z_2)$, it must hold that either $z_1 \neq v$ or $z_2 \neq v$; without loss of generality we can assume that $z_1 \neq v$. Let $v_d$ be a node from $H$ that has data value $d$ (by construction there must be at least one such node). Since $(v,z_1) \in \semantics{\aPathb_1}_H$, $v\neq z_1$, $\dataFunction_H(z_1) \notin \Sigma_n^\aRestrictionSet$ and $H$ is a `complete graph' under edge labels form $\Sigma_e^{\aGraph'}$ we can deduce using statement \ref{state:one} of Lemma~\ref{lemma:completeAndValues} that $(v,v_d) \in \semantics{\aPathb_1}_H$. Then we can conclude that $v \in \semantics{\comparacionCaminos{\aPathb_1 \neq \aPathb_2}}_H$ taking into account that $(v,z_2) \in \semantics{\aPathb_2}_H$ due to the inductive hypothesis.
    
\end{itemize}

We can then conclude, since $\aGraph' \models \aRestrictionSet$, that $H \models \aRestrictionSet$ holds.

\end{proof}

\begin{proof}[Proof of Lemma~\ref{teoremaHorrible}]
Let $f:V_\aGraph \rightarrow V_H$ be a mapping between the nodes of $\aGraph$ and $H$ such that $f(v) = v$ if $v \in V_\aGraph \setminus V_d$, and $f(v)=v_d$ otherwise. Now we show, by induction over the formula's structure, that if a pair of nodes $v,w \in V_\aGraph$ satisfies $(v,w) \in \semantics{\aPath}_\aGraph$, then $(f(v),f(w)) \in \semantics{\aPath}_H$, and that if $v \in \semantics{\aFormula}_\aGraph$ then $f(v) \in \semantics{\aFormula}_H$. This is enough to prove the lemma.

For the base cases:

\begin{itemize} [label=$\circ$]

    \item $\aPath = \labelComodin$: if both $v$ and $w$ were not in $V_d$ then their edges were preserved. If $w \in V_d$ and $v \notin V_d$ then by construction $L_\aGraph(v,w) \subseteq L_H(v,v_d)$. The case when $v \in V_d$ is analogous. If both $v$ and $w$ are in $V_d$ we use the fact that $L_\aGraph(v,w) \subseteq L_H(v_d,v_d)$.
    
    \item $\aPath = \aLabel$ or $\aPath = \aLabel^-$: idem the previous case.
    
    \item $\aPath = \epsilon$: this one follows trivially.
    
    \item $\aFormula = \esDatoIgual{c}$: $v$ and $f(v)$ have the same data value, which means that this case holds.
    
    \item $\aFormula = \esDatoDistinto{c}$: idem the previous case.
    
\end{itemize}

Now, the inductive ones:

\begin{itemize}[label=$\circ$]
    
    \item $\aPath = [\aFormulab]$: If $(v,v) \in \semantics{[\aFormulab]}_\aGraph$ then $v \in \semantics{\aFormulab}_\aGraph$, which implies by inductive hypothesis that $f(v) \in \semantics{\aFormulab}_H$ and then $(f(v),f(v)) \in \semantics{[\aFormulab]}_H$.
    
    \item $\aPath = \aPathb_1 . \aPathb_2$: if $(v,w) \in \semantics{\aPathb_1 . \aPathb_2}_\aGraph$ then there exists $z \in V_\aGraph$ such that $(v,z) \in \semantics{\aPathb_1}_\aGraph$ and $(z,w) \in \semantics{\aPathb_2}_\aGraph$. Then by inductive hypothesis $(f(v),f(z)) \in \semantics{\aPathb_1}_H$ and $(f(z), f(w)) \in \semantics{\aPathb_2}_H$, which implies that $(f(v),f(w)) \in \semantics{\aPathb_1 . \aPathb_2}_H$.
    
    \item $\aPath = \aPathb_1 \cup \aPathb_2$: suppose without loss of generality that $(v,w) \in \semantics{\aPathb_1}_\aGraph$. Then by induction $(f(v),f(w)) \in \semantics{\aPathb_1 \cup \aPathb_2}_H$.
    
    \item $\aPath = \aPathb_1 \cap \aPathb_2$: if $(v,w) \in \semantics{\aPathb_1 \cap \aPathb_2}_\aGraph$ then $(v,w) \in \semantics{\aPathb_1}_\aGraph \cap \semantics{\aPathb_2}_\aGraph$, and by the inductive hypothesis we can deduce that $(f(v),f(w)) \in \semantics{\aPathb_1}_H \cap \semantics{\aPathb_2}_H = \semantics{\aPathb_1 \cap \aPathb_2}_H$.
    
    \item $\aPath = \aPathb^*$: $(v,w) \in \semantics{\aPathb^*}_\aGraph$ if there exists $z_1,...,z_m \in V_\aGraph$ such that $z_1=v$, $z_m = w$ and $(z_i,z_{i+1}) \in \semantics{\aPathb}_\aGraph$ for $1\leq i \leq m-1$. By hypothesis it follows that $(f(z_i),f(z_{i+1})) \in \semantics{\aPathb}_H$, which implies that $(f(v),f(w)) \in \semantics{\aPathb^*}_H$.
    
    \item $\aPath = \aPathb^{n,m}$: this case follows using the same argument used in the $.^*$ case.
    
    \item $\aFormula = \aFormulab_1 \wedge \aFormulab_2$: $v \in \semantics{\aFormulab_1 \wedge \aFormulab_2}_\aGraph$ if $v \in \semantics{\aFormulab_1}_\aGraph$ and $v \in \semantics{\aFormulab_2}_\aGraph$. By using the hypothesis we conclude that $f(v) \in \semantics{\aFormulab_1}_H$ and $f(v) \in \semantics{\aFormulab_2}_H$, which implies that $f(v) \in \semantics{\aFormulab_1 \wedge \aFormulab_2}_H$.
    
    \item $\aFormula = \aFormulab_1 \vee \aFormulab_2$: if $v \in \semantics{\aFormulab_1 \vee \aFormulab_2}_\aGraph$ then without loss of generality we can assume $v \in \semantics{\aFormulab_1}_\aGraph$. Then by inductive hypothesis $f(v) \in \semantics{\aFormulab_1}_H \subseteq \semantics{\aFormulab_1 \vee \aFormulab_2}_H$.
    
    \item $\aFormula = \comparacionCaminos{\aPathb}$: $v \in \semantics{\comparacionCaminos{\aPathb}}_\aGraph$ if $(v,z) \in \semantics{\aPathb}_\aGraph$ for some $z$. By hypothesis $(f(v),f(z)) \in \semantics{\aPathb}_H$, and then $f(v) \in \semantics{\comparacionCaminos{\aPathb}}_H$. 
    
    \item $\aFormula = \comparacionCaminos{\aPathb_1 \star \aPathb_2}$, with $\star \in \{=, \neq\}$: $v \in \semantics{\comparacionCaminos{\aPathb_1 \star \aPathb_2}}$ if there exists $z_1,z_2 \in V_\aGraph$ such that $(v,z_1) \in \semantics{\aPathb_1}_\aGraph$, $(v,z_2) \in \semantics{\aPathb_2}_\aGraph$ and $\dataFunction_\aGraph(z_1) \star \dataFunction_\aGraph(z_2)$. In $H$ both $z_1$ and $z_2$ preserve their data values (i.e. $\dataFunction_\aGraph(z_i) = \dataFunction_H(f(z_i))$ for $i \in \{1,2\}$), and by inductive hypothesis $(f(v),f(z_i)) \in \semantics{\aPathb_i}_H$ for $i \in \{1,2\}$. Then $f(v) \in \semantics{\comparacionCaminos{\aPathb_1 \star \aPathb_2}}_H$.
    
    
\end{itemize}
\end{proof}

\vskip 0.2in
\bibliography{sample}
\bibliographystyle{theapa}
\end{document}